\documentclass[reqno,11pt,letterpaper]{amsart}
\usepackage{amsmath,amssymb,amsthm,graphicx,mathrsfs,url}
\usepackage[usenames,dvipsnames]{xcolor}
\usepackage[colorlinks=true,linkcolor=Red,citecolor=Green]{hyperref}
\usepackage{amsxtra}
\usepackage{wasysym} 
\usepackage{graphicx}
\usepackage{subcaption}
\usepackage{tikz}
\usepackage[normalem]{ulem}
\def\arXiv#1{\href{http://arxiv.org/abs/#1}{arXiv:#1}}


\setlength{\marginparwidth}{0.6in}

\def\?[#1]{\textbf{[#1]}\marginpar{\Large{\textbf{??}}}}

\def\smallsection#1{\smallskip\noindent\textbf{#1}.}
\let\epsilon=\varepsilon 

\setlength{\textheight}{8.50in} \setlength{\oddsidemargin}{0.00in}
\setlength{\evensidemargin}{0.00in} \setlength{\textwidth}{6.08in}
\setlength{\topmargin}{0.00in} \setlength{\headheight}{0.18in} 
\setlength{\marginparwidth}{1.0in}
\setlength{\abovedisplayskip}{0.2in}
\setlength{\belowdisplayskip}{0.2in}
\setlength{\parskip}{0.05in}

\newcommand{\RR}{{\mathbb R}}

\newcommand{\CC}{{\mathbb C}}
\newcommand{\TT}{{\mathbb T}}
\newcommand{\ZZ}{{\mathbb Z}}

\DeclareGraphicsRule{*}{mps}{*}{}

\newtheorem{theo}{Theorem}
\newtheorem{prop}{Proposition}[section]	
\newtheorem{defi}[prop]{Definition}

\newtheorem{assumption}{Assumption}
\newtheorem{lemm}[prop]{Lemma}

\newtheorem{rem}{Remark}
\numberwithin{equation}{section}

\DeclareMathOperator{\Spec}{Spec}

\let\Im=\Imag

\DeclareMathOperator{\sgn}{sgn}

\DeclareMathOperator{\tr}{tr}

\def\indic{\operatorname{1\hskip-2.75pt\relax l}}
\usepackage{scalerel}

\newcommand\reallywidehat[1]{\arraycolsep=0pt\relax%
\begin{array}{c}
\stretchto{
  \scaleto{
    \scalerel*[\widthof{\ensuremath{#1}}]{\kern-.5pt\bigwedge\kern-.5pt}
    {\rule[-\textheight/2]{1ex}{\textheight}} 
  }{\textheight} %
}{0.5ex}\\           
#1\\                 
\rule{-1ex}{0ex}
\end{array}
}

\author{Simon Becker}
\address[Simon Becker]{ETH Zurich, 
Institute for Mathematical Research, 
Rämistrasse 101, 8092 Zurich, 
Switzerland}
\email{simon.becker@math.ethz.ch}

\author{Lingrui Ge}
\address[Lingrui Ge]{Beijing International Center for Mathematical Research, Peking University, Beijing, China}
\email{gelingrui@bicmr.pku.edu.cn}

\author{Jens Wittsten}
\address[Jens Wittsten]{Department of Engineering, University of Bor{\aa}s, SE-501 90 Bor{\aa}s, Sweden}
\title[Hofstadter butterflies and transport properties]{Hofstadter butterflies and metal/insulator transitions for moir\'e heterostructures}
\email{jens.wittsten@hb.se}

\begin{document}

\begin{abstract}
We consider a tight-binding model recently introduced by Timmel and Mele \cite{TM2020} for strained moir\'e heterostructures. We consider two honeycomb lattices to which layer antisymmetric shear strain is applied to periodically modulate the tunneling between the lattices in one distinguished direction. This effectively reduces the model to one spatial dimension and makes it amenable to the theory of matrix-valued quasi-periodic operators.  We then study the charge transport and spectral properties of this system, explaining the appearance of a Hofstadter-type butterfly and the occurrence of metal/insulator transitions that have recently been experimentally verified for non-interacting moir\'e systems \cite{W20}. For sufficiently incommensurable moir\'e lengths, described by a diophantine condition, as well as strong coupling between the lattices, which can be tuned by applying physical pressure, this leads to the occurrence of localization phenomena.
\end{abstract}

\maketitle  
\begin{figure}[h!] 
\includegraphics[width=5.2cm,height=5.2cm]{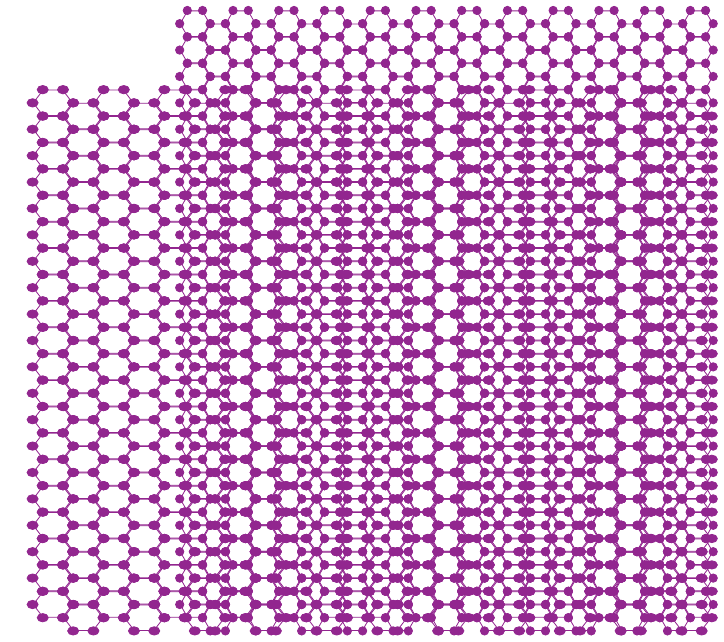} \qquad \includegraphics[width=5.2cm,height=5.2cm]{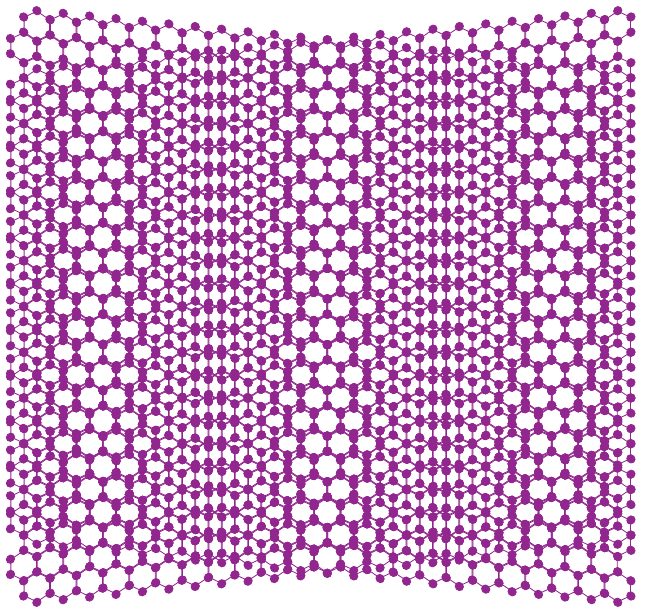}
\caption{Superposition of two honeycomb lattices. On the left, one of the lattices has been exposed to uniaxial strain in the horizontal direction. On the right, the lattices have been exposed to anti-symmetric shear strain. Both cases exhibit an effectively one-dimensional moir\'e pattern in the horizontal direction.\label{fig:two}}
\label{fig:2strains}
\end{figure}

\section{Introduction}

During the past few decades, scientists have learned how to prepare two-dimensional materials in the form of crystals that are only one atom or one molecular layer thick. When two-dimensional crystals are stacked on top of each other, with each layer offset or rotated with respect to the others, they form a large-scale interference pattern known as a {\it moir\'e pattern}.  If the starting crystals are semiconductors or semimetals, the low-energy electronic states are described by a Hamiltonian with the periodicity of the much larger moir\'e pattern instead of that of the original crystal, which influences the electronic properties of the material.

The most prominent example of a moiré material is twisted bilayer graphene (TBG) formed by stacking two sheets of graphene (a semimetal composed of carbon atoms arranged in a honeycomb lattice) on top of each other, offset by a small twist angle. At certain twist angles, called the {\it magic angles}, TBG exhibits an unconventional form of superconductivity, and the lowest band in the electronic band structure becomes flat. Mathematically, this corresponds to the Hamiltonian having a flat Bloch-Floquet band at zero energy \cite{BEWZ21,BEWZ22}.

If two or several lattice structures are stacked with incommensurable twisting angles between layers, the Hamiltonian describing the low-energy electronic states becomes quasi-periodic instead. The recently emerging field of twistronics has provided a variety of examples of such quasi-periodic Hamiltonians, but to our knowledge there are no mathematical results on any of them so far. While such examples exhibit quasi-periodicity in two spatial dimensions, we shall restrict us here to lattice structures that are quasi-periodic in at most one direction and periodic in the orthogonal direction. Such lattice systems appear naturally by superimposing strained two-dimensional lattices as seen in Figure \ref{fig:2strains}, and they are important because in contrast to the twist angle -- which is set during nanofabrication -- strain can be tuned in situ. 

Specifically, we shall study a one-dimensional armchair model for bilayer graphene proposed in \cite{TM2020}. Due to periodic strain-modulation, the bilayer graphene is periodic in one direction and, depending on the arithmetic properties of the strain, either periodic or quasi-periodic in the orthogonal direction exhibiting a moir\'e pattern, visible along the horizontal direction in Figure \ref{fig:2strains}. We will refer to this as the moir\'e direction. Using the periodicity in the direction orthogonal to the moir\'e direction, Floquet theory then provides a family of one-dimensional Hamiltonians depending on some quasi-momentum $\theta \in \RR/ \ZZ$. By analyzing this family of Hamiltonians we are able to study strained bilayer graphene as it transitions between having metal-like conductivity and insulator-like conductivity (so-called metal-insulator transitions) which is characterized by the Hamiltonian having no point spectrum or pure point spectrum, respectively.

Studying such an effectively one-dimensional setting has the advantage that the well-developed theory of quasi-periodic one-dimensional discrete operators is applicable. 
Following the groundbreaking paper \cite{ds} the theory of quasi-periodic Schr\"odinger
operators has been developed extensively in the past 40 year, see
\cite{B,damanik,jcong,jm,you} for surveys of more recent results. For the one-dimensional case, a key breakthrough is  Avila's global theory \cite{avila}, along with its quantitative version \cite{gjyz}, allowing powerful applications \cite{g,gjy,gj}.

Similar to magnetic fields, moir\'e structures have been shown to exhibit fractal spectra, the so-called Hofstadter butterfly \cite{BM11,CNKM20,L21} (compare with Figures \ref{fig:AC} and \ref{fig:fat_cantor} below), and metal-insulator transitions \cite{C18,W20}.
Metal-insulator transitions have been observed by changing the tunneling rate when compressing the lattices or adapting the parameters of optical analogues. Experimental and theoretical studies of charge transport properties for one-dimensional moir\'e structures have been considered in \cite{BLB16}. We also discuss model operators for analogous results in two spatial dimensions. In contrast to some of the two-dimensional twisted moir\'e superstructures, one-dimensional models do not exhibit flat bands (demonstrated e.g.~by our Proposition \ref{prop:DOS_cont}). However, in \cite{BW22} it has been shown using semiclassical analysis techniques that the operator exhibits \emph{almost flat bands} characterized in terms of quasimodes.

\smallsection{Notation}
The identity and zero matrix of size $n$ are denoted by $I_n$ and $0_n$ respectively. (If it is clear from context we will sometimes simply write $0$ for the zero matrix of size $n$.)
Let $\mathcal H$ be a Hilbert space, then we also write $I_{\mathcal H}$ and $0_{\mathcal H}$ to denote the identity and zero operator on that space, respectively. 
We denote the four basic Pauli matrices by 
\[ \sigma_0 = I_{2}, \ \sigma_1 = \begin{pmatrix}0 & 1 \\ 1 & 0 \end{pmatrix}, \ \sigma_2 = \begin{pmatrix} 0 & - i \\ i & 0 \end{pmatrix}, \ \sigma_3 = \operatorname{diag}(1,-1).\]
We write $\operatorname{diag}(a_1,\ldots,a_n)$ to mean the diagonal $n\times n$ matrix with entries $a_1,\ldots,a_n$ on the diagonal.
We also write $[N]:=\{1,\ldots,N\}.$  We let $\mathbb R_{0}^+$ denote the positive real numbers including $0$, i.e.~$\mathbb R_0^+=[0,\infty)$, while we write $\mathbb R^+=(0,\infty)$. We also define $\langle n \rangle:=(1+\vert n \vert^2)^{1/2}.$ Finally, we let $\mathbb T^d :=\mathbb R^d/\mathbb Z^d$ be the $d$-dimensional torus and just write $\mathbb T$ for $\mathbb T^1.$

\subsection{Main results and organization of the article}

Following \cite{TM2020}, the dynamics of bilayer graphene under anti-symmetric shear strain is approximated using a tight-binding (that is, discrete) kinetic Dirac operator $D_{\operatorname{kin}}(\theta):\ell^2(\mathbb{Z};\mathbb{C}^4)\rightarrow\ell^2(\mathbb{Z};\mathbb{C}^4)$ which is defined in terms of $\gamma_{15} = \operatorname{diag}(\sigma_1,\sigma_1)$, $\gamma_{25} = \operatorname{diag}(\sigma_2,\sigma_2)$,
with Pauli matrices $\sigma_i$, as 
\[(D_{\operatorname{kin}}(\theta)\psi)_n = t(\theta) \psi_{n+1} + t(\theta) \psi_{n-1} + t_0 \psi_n, \]
where $t(\theta) =\cos(2\pi \theta) \gamma_{15}+\sin(2\pi \theta) \gamma_{25}$, with $\operatorname{det}(t(\theta)) = 1,$ $\Vert t(\theta) \Vert =1,$ and $t_0 = \gamma_{15}$. Here $\theta$ indicates the quasi-momentum perpendicular to the moir\'e direction.

The two honeycomb lattices interact through a tunneling interaction. To define it, introduce scalar functions
\begin{equation}\label{eq:scalarpotentials}
\begin{aligned}
    U(x)&:=\frac{1+2 \cos(2\pi x)}{3}\\
    U^{\pm}_{\operatorname{c}}(x)&:=\frac{1- \cos(2\pi x)\pm\sqrt 3 \sin(2\pi x)}{3} 
    =\frac{1-2\cos(2\pi x\pm\pi/3)}{3}
\end{aligned}    
\end{equation}
and matrix-valued tunneling potentials $V_c, V_{\operatorname{ac}}^\phi$, defined by
\begin{equation*}
\begin{split}
V_{\operatorname{ac}}^\phi&= \begin{pmatrix} 0_2 &W_{\operatorname{ac}}^\phi\\W_{\operatorname{ac}}^\phi & 0_2 \end{pmatrix} \text{ with } W_{\operatorname{ac}}^\phi(x) = \operatorname{diag}(U(x-\tfrac{\phi}{L}),U(x+\tfrac{\phi}{L})), \\
V_{\operatorname{c}}&=\begin{pmatrix} 0_2 & W_{\operatorname{c}} \\ W_{\operatorname{c}}^* & 0_2 \end{pmatrix} \text{ with }
W_{\operatorname{c}} = \begin{pmatrix} 0 &U^{-}_{\operatorname{c}}\\ U^{+}_{\operatorname{c}}& 0 \end{pmatrix}.
\end{split}
\end{equation*}
For coupling strengths $w = (w_0,w_1) \in \RR_0^+ \times \RR_0^+$ the tunneling interaction is then given by
\begin{equation}\label{eq:Vw}
    V_w^\phi(x)=w_0 V_{\operatorname{ac}}^\phi(x) + w_1 V_{\operatorname{c}}(x).
\end{equation}
Inspired by the Bistritzer-MacDonald model \cite{BM11} we refer to the first summand as the \emph{anti-chiral} part, which describes tunneling between A-A$'$/B-B$'$ atoms. The second summand is the  
\emph{chiral} part modelling the tunneling between A-B$'$/B-A$'$ atoms. Here, A and B correspond to the two different representatives of the fundamental cell of a honeycomb lattice with the prime $'$ indicating atoms of the second lattice. The parameter $\phi$ incorporates the different tunneling amplitude for A-A$'$ and B-B$'$ sites due to their dislocation in space.
The Hamiltonian $H_{w}(\theta,\phi,\vartheta):\ell^2(\ZZ;\CC^4) \rightarrow \ell^2(\ZZ;\CC^4)$ is then, for some fixed length $L>0$ of the moir\'e cell, given by the Dirac-Harper model
\[ (H_{w}(\theta,\phi,\vartheta)\psi)_n  = (D_{\operatorname{kin}}(\theta)\psi)_n + V^\phi_w(\vartheta + \tfrac{n}{L})\psi_n,\]
where $\vartheta \in [0,1/L]$.  
Note however that since the Hamiltonian is invariant under integer translations of $1/L$ we may assume without loss of generality that $1/L \in \mathbb T$, and thus take $\vartheta \in [0,1].$ The length of the fundamental cell is related to the strength of the strain. In \cite{TM2020} it is assumed that $1/L$ is rational so that the Hamiltonian becomes periodic, but here we shall allow any $1/L\in \mathbb T$ and study how arithmetic properties of the length affects the properties of the system. 
Unlike for the almost Mathieu operator, the only physically relevant frequency in the tunneling potential, as introduced by \cite{TM2020}, is $\vartheta =0.$ However, we introduce the parameter $\vartheta$ for our mathematical analysis. Physically it corresponds to an additional offset between the lattices at the origin. Similarly, the original model \cite{TM2020} only considers $\phi=0$, but allowing for different values of $\phi$ turns out to be convenient for the mathematical presentation. The only part of the paper in which non-zero $\phi$ plays an important role is in \S\ref{sec:AVAL}, see ~Assumption \ref{ass:assumption} in that subsection. 
\begin{rem}
We shall occasionally suppress the parameter dependence in the Hamiltonian and related quantities to simplify the notation. In particular, we sometimes write $V_w$ instead of $V_w^\phi$ for the potential in \eqref{eq:Vw}, and will sometimes for convenience use the notation
\begin{equation}\label{eq:acplusminus}
    U^{\pm}_{\operatorname{ac}}(x):=\frac{1+2 \cos(2\pi (x\pm\phi/L))}{3}  
\end{equation}
so that $W_{\operatorname{ac}}^\phi(x) = \operatorname{diag}(U^{-}_{\operatorname{ac}}(x),U^{+}_{\operatorname{ac}}(x))$.
\end{rem}

Introducing the shift operator $\tau \psi_n :=\psi_{n-1}$, the Hamiltonian takes the form of a $4$ by $4$ matrix-valued discrete operator that reads in terms of $K(\theta):=1+ e^{-2\pi i \theta} (\tau + \tau^*)$
\begin{equation}
 \label{eq:Hamiltonian}
 H_{w}(\theta,\phi,\vartheta) = \begin{pmatrix} 0 & K(\theta)& w_0 U(\vartheta+\tfrac{\bullet-\phi}{L}) & w_1 U_{\operatorname{c}}^-(\vartheta+\tfrac{\bullet}{L}) \\ 
  K(\theta)^* &0 &w_1 U_{\operatorname{c}}^+(\vartheta+\tfrac{\bullet}{L}) & w_0 U(\vartheta+\tfrac{\bullet+\phi}{L})   \\ 
 w_0 U(\vartheta+\tfrac{\bullet-\phi}{L})  &w_1 U_{\operatorname{c}}^+(\vartheta+\tfrac{\bullet}{L}) &0 & K(\theta)\\ 
w_1 U_{\operatorname{c}}^-(\vartheta+\tfrac{\bullet}{L}) &w_0 U(\vartheta+\tfrac{\bullet+\phi}{L})   &   K(\theta)^*&0   \end{pmatrix}. 
\end{equation}
   When $w_0 \equiv 0$ we call this the \emph{chiral model} and when $w_1 \equiv 0$ the \emph{anti-chiral model}.
We shall mostly focus on the full model with both coupling parameters $w_0,w_1$ being nonzero.

After introducing the framework of matrix-valued cocycles, which can be found for example in \cite{AJS15,jm}, we discuss the case of moir\'e lengths $L$ that are rational or close to rational numbers in which case point spectrum is absent, see Propositions \ref{prop:rational} and \ref{prop:Liouville}. This extends -- with obvious modifications -- a classical theorem by Gordon \cite{G76} (discrete operators) and Simon \cite{S82} (continuous operators) to matrix-valued operators.
 
In the opposite regime of diophantine moir\'e length scales, we prove that if the coupling between the two lattice structures is strong enough, then the Hamiltonian exhibits so-called {\it Anderson localization}, i.e., pure point spectrum with exponentially decaying eigenfunctions, see Theorem \ref{theo:Aloc1}. This can be experimentally seen by the application of physical pressure. In contrast, if the coupling between the lattices is sufficiently weak, transport and absolutely continuous spectrum that is present in case of non-interacting graphene sheets persist, see Theorem \ref{theo:AC}. This localization argument relies on a matrix generalization of the theory that has been obtained by Klein \cite{K17} extending earlier works \cite{BGS02}.

\begin{figure}
\includegraphics[width=6cm,height=4.7cm]{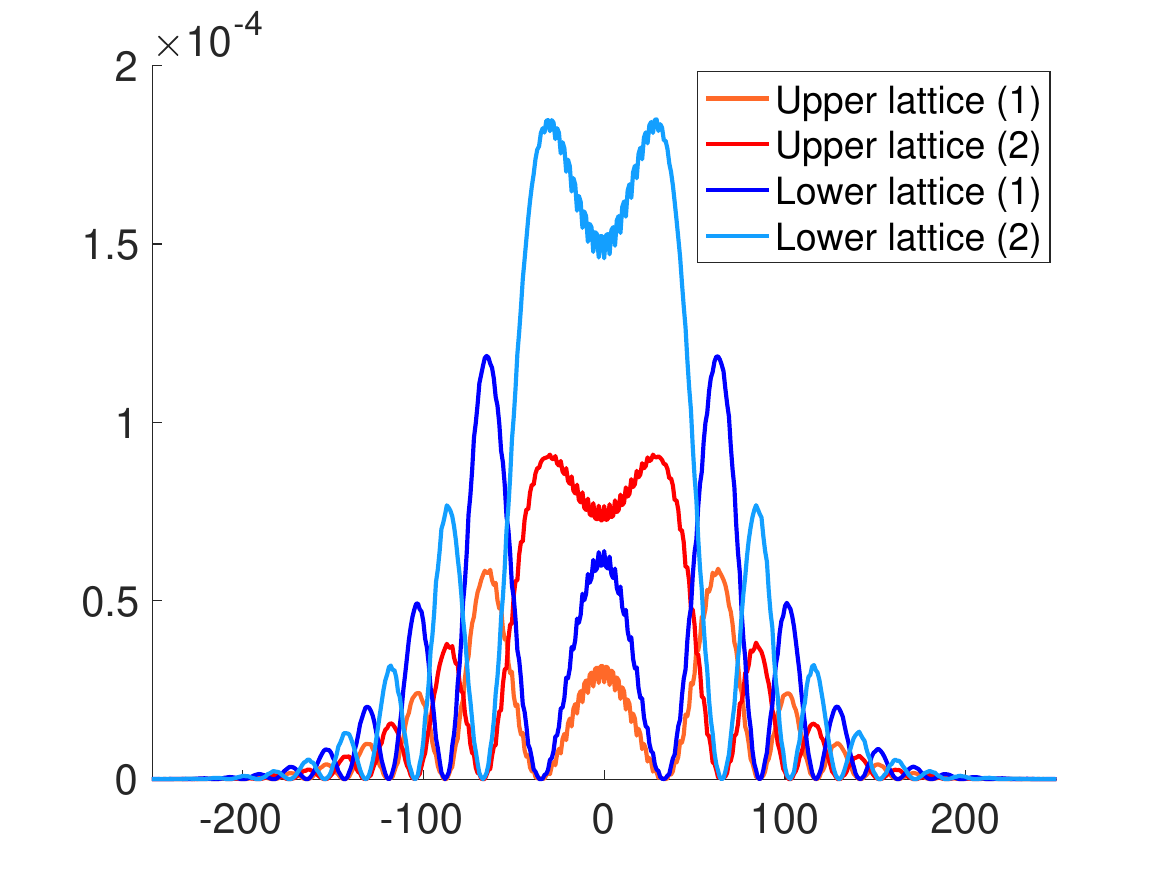}\includegraphics[width=6cm,height=4.7cm]{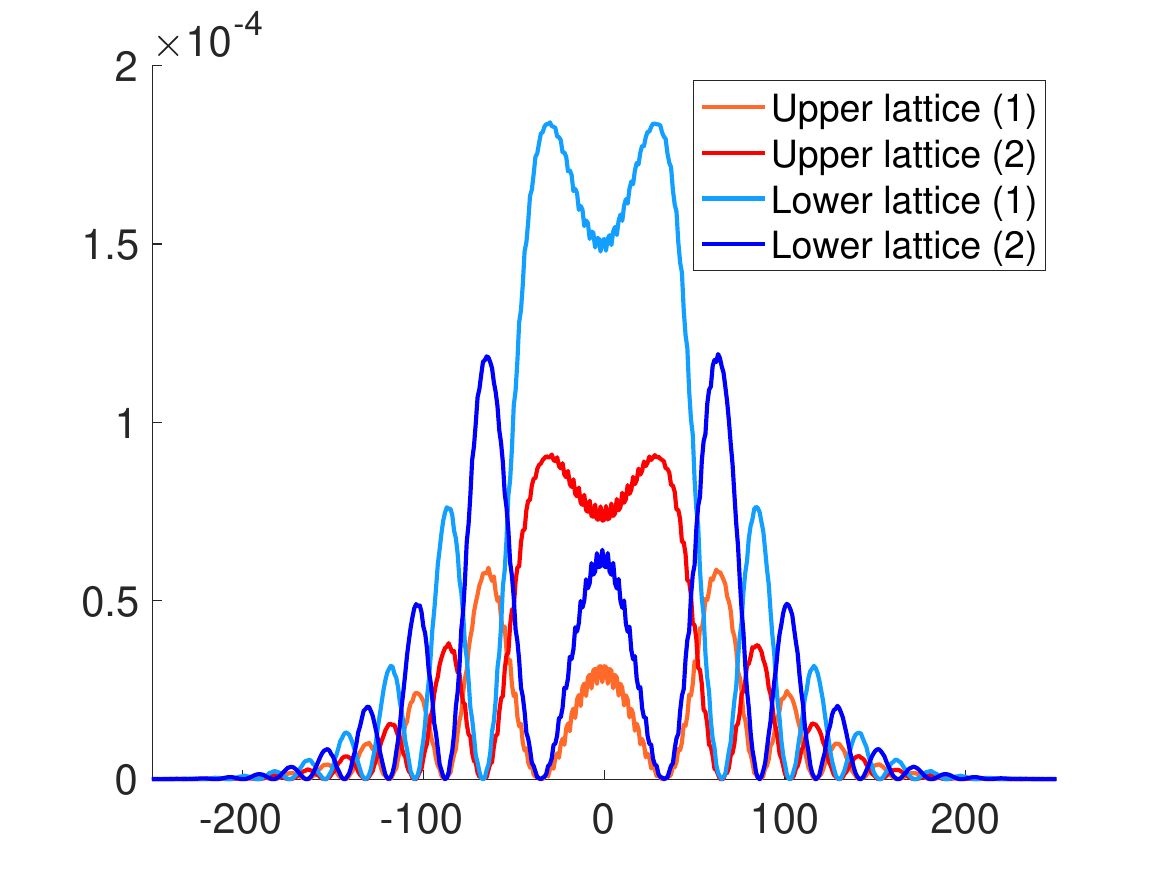}
\\ 
\includegraphics[width=6cm,height=4.7cm]{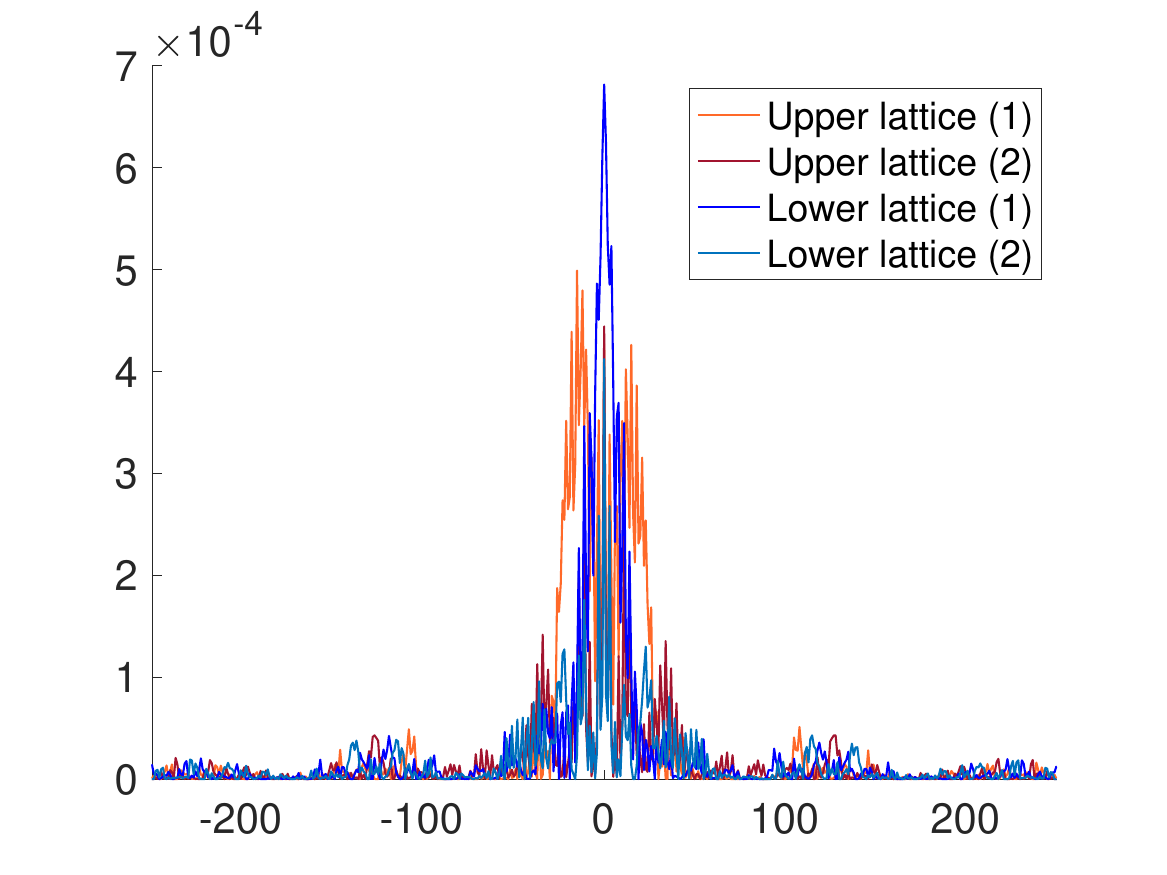}\includegraphics[width=6cm,height=4.7cm]{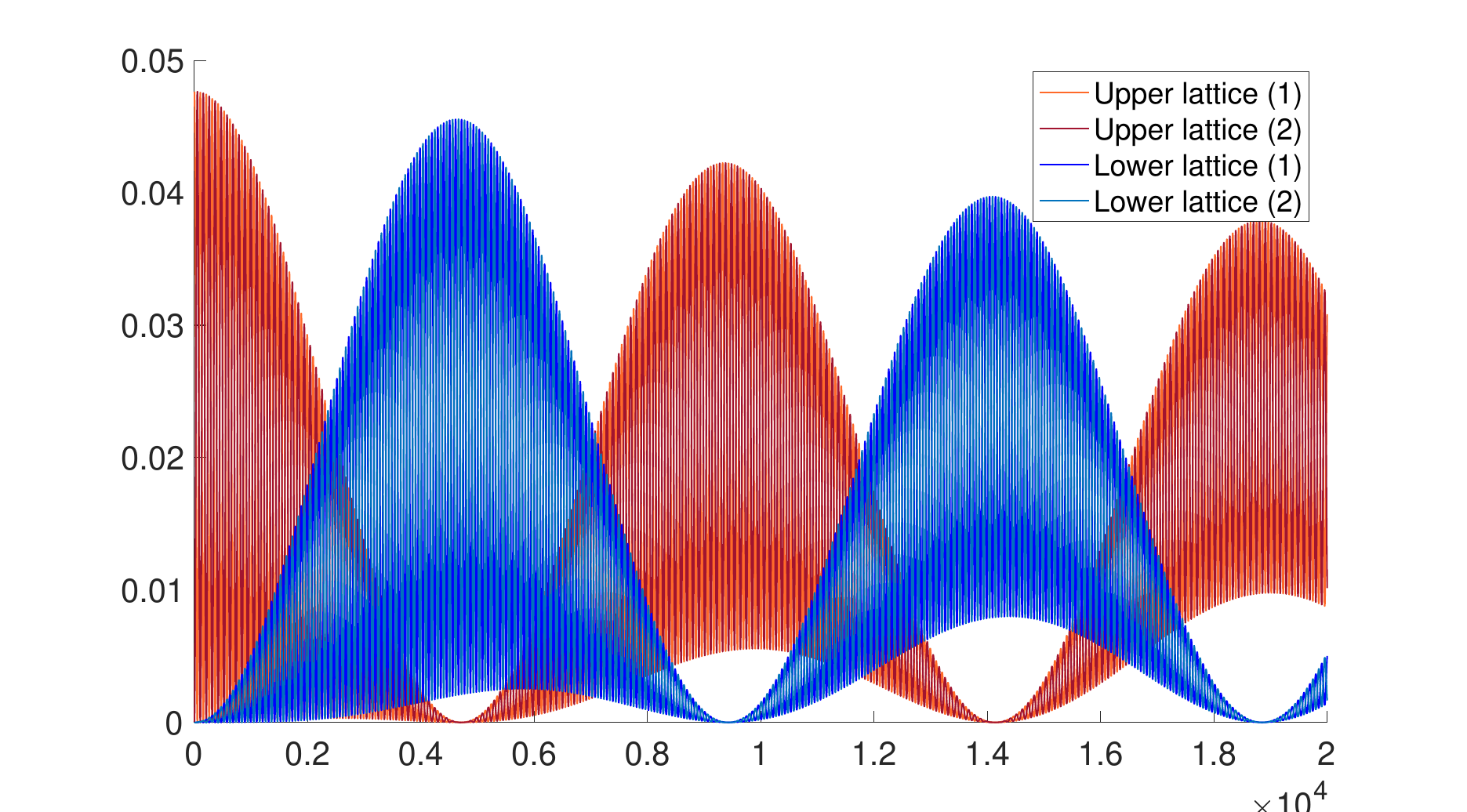}
\caption{Time-evolution of Gaussian wavepackets. (Top row) Time-evolved discretized Gaussian state for chiral (left) and anti-chiral (right) model with $L=3$ with weak coupling $w_0=0.1$ after time $T_{\text{fin}}=5000$ (Space-Amplitude plot). Gaussian state for strong coupling $w_1=1.9$, $w_0=0$, and $L=\pi$ on the lower left figure for the chiral model after $T_{\text{fin}}=2 \cdot 10^4$. Localization effects are clearly visible. On the lower right we see the time evolution (Time-Amplitude plot) corresponding to the amplitude for a Gaussian wavepacket started at the upper lattice. Here, $(1)$ and $(2)$ refer to the respective components labeling atoms of type A and B, respectively. The wavepacket oscillates between the different layers.}
\label{fig:dynamics}
\end{figure}

To illustrate the absolutely continuous and localized regimes of the Hamiltonian for diophantine $L$, Figure \ref{fig:dynamics} describes the time-evolution of a discrete Gaussian-type wave-packet $$\psi_n = \frac{e^{-n^2/(2\sigma^2)}}{\sqrt{2\pi \sigma^2}}e_1$$ under the Schr\"odinger dynamics $\psi(t)=e^{-iHt}\psi,$ where $e_1$ is the first unit vector and $\sigma = \sqrt{70}.$ Aside from the lower left panel, where we used a strong coupling with incommensurable length scale, the coupling used for the production of all remaining panels is weak. The top two panels clearly show a spread of the Gaussian wavepacket (absolutely continuous regime with metal-like transport resulting from weak coupling), while the bottom left panel shows that the wavepacket remains fairly localized under the time evolution (localized regime with insulator-like transport resulting from strong coupling). In addition, the bottom right panel shows that the layer in which the wavepacket is concentrated alternates between top and bottom under time evolution.

To establish Anderson localization we rely on having a positive lower bound on the Lyapunov exponents for a matrix-valued cocycle associated to $H_w$. The lower bound is obtained in Proposition \ref{prop:LEsfull} using a complexification argument, and for this to work we have to assume that $w_0 \neq w_1$. When $w_0=w_1$ there are similar obstructions to using the results of \cite{K17} (or \cite{DK14}) to obtain positive lower bounds on the Lyapunov exponents, see Remark \ref{rem:constantEV}. We note that although the model proposed by Timmel and Mele \cite{TM2020} has $w_0=w_1$, it is not very likely for physically realistic parameters that $w_0$ is exactly equal to $w_1$ (for twisted bilayer graphene it has for example been experimentally verified that $w_0/w_1\approx$ $0.7$-$0.8$ for small twisting angles \cite[p.~5]{magic} due to lattice relaxation effects). In this sense the assumption $w_0\ne w_1$ is not very restrictive. Nevertheless, studying localization when $w_0=w_1$ remains an interesting open mathematical question.

Unlike in the case of magnetic fields, twisted lattice systems do in general not allow for an explicit reduction to one-dimensional quasi-periodic operators. Since the theory of multi-dimensional quasi-periodic operators is far less developed, this limits the tools available to understand fractal spectra, see Proposition \ref{prop:AMO}, and metal/insulator transitions in depth. Most results in higher dimensions are limited to establishing Anderson localization for, in our case, sufficiently strong coupling \cite{BGS02}. 
  
Methods and results showing Cantor spectrum are largely limited to scalar one-dimensional quasi-periodic operators \cite{aj1,GS11} and also in our case, we rely on the diagonalizability of the matrix-valued operator in one of the two considered cases to establish fractal spectrum.

Our article is structured as follows:

\smallsection{Outline of article}
\begin{itemize}
    \item In Section \ref{sec:Basic_prop} we discuss basic spectral properties including Lyapunov exponents of the Dirac-Harper model \eqref{eq:Hamiltonian}.
\item In Section \ref{sec:ratio} we study with Propositions \ref{prop:rational} and \ref{prop:Liouville} the spectral and transport properties for moir\'e lengths $1/L$ that are rational or only mildly irrational numbers, i.e. \emph{Liouville numbers}. We also establish the absence of flat bands for commensurable length scales, see Proposition \ref{prop:DOS_cont}. 
\item In Section \ref{sec:AL} we study the regime of strongly irrational (diophantine) moir\'e lengths $1/L$ that satisfy a diophantine condition. For strong tunneling interaction, this is the regime of Anderson localization and insulator-like conductivity (absence of transport) proven in Theorem \ref{theo:Aloc1}. For comparison, we discuss arithmetic Anderson localization in \S\ref{sec:AVAL} which requires us to modify the Hamiltonian slightly (as described by Assumption \ref{ass:assumption}), see Theorem \ref{theo:Jitomirskaya}. 

\item In Section \ref{sec:AC} we show the existence of absolutely continuous spectrum for weak coupling of the lattices, see Theorem \ref{theo:AC}. This is the regime of metal-like conductivity.
\item In Section \ref{sec:Cantor} we show the existence of Cantor spectrum for the Dirac-Harper operator \eqref{eq:Hamiltonian} in the anti-chiral limiting case, see Proposition \ref{prop:AMO}. 
\item In Section \ref{sec:spec_ana} we study the spectrum and prove absence of flat bands in an effective low-energy model.
\item In Section \ref{sec:2D} we give an outlook on 2D generalizations of the Dirac-Harper model \eqref{eq:Hamiltonian} for twisted square lattices.
\end{itemize}

\section{Basic properties of the Dirac-Harper model} 
In this section we start by exhibiting some basic spectral properties of the Dirac-Harper Hamiltonian \eqref{eq:Hamiltonian}.
\label{sec:Basic_prop}
 \begin{lemm}\label{lem:particlehole}
 In case of the limiting chiral ($w_0=0$) and anti-chiral ($w_1=0$) model, the Hamiltonian satisfies particle-hole symmetry, i.e.
 \[ \Spec(H_w(\theta,\phi,\vartheta)) = - \Spec(H_w(\theta,\phi,\vartheta)).\]
 \end{lemm}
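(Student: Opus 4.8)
The plan is to realize the claimed symmetry as a unitary conjugation. It suffices to produce, in each of the two limiting cases, a fixed $4\times 4$ unitary matrix $\Pi$ — acting fiberwise on $\ell^2(\ZZ;\CC^4)$, i.e. by $(\Pi\psi)_n=\Pi_0\psi_n$ with $\Pi_0$ independent of the site index $n$ — such that
\[
\Pi\, H_w(\theta,\phi,\vartheta)\,\Pi^* = -\,H_w(\theta,\phi,\vartheta).
\]
Indeed, since $\Pi$ is bounded and invertible, $\Spec(-H_w)=\Spec(\Pi H_w\Pi^*)=\Spec(H_w)$, which is exactly the asserted particle--hole symmetry; moreover one $\Pi$ will work for all values of $\theta,\phi,\vartheta$.

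The concrete choice I would make is $\Pi_{\mathrm c}=\operatorname{diag}(1,-1,1,-1)$ in the chiral model ($w_0\equiv 0$) and $\Pi_{\mathrm{ac}}=\operatorname{diag}(1,-1,-1,1)$ in the anti-chiral model ($w_1\equiv 0$); both are self-adjoint unitary involutions. The verification is then a direct inspection of \eqref{eq:Hamiltonian}. Conjugating by a constant diagonal sign matrix $\operatorname{diag}(\epsilon_1,\dots,\epsilon_4)$ multiplies the $(i,j)$ entry of $H_w$ by $\epsilon_i\epsilon_j$ and leaves the dependence of that entry on $n$ (and on $\theta$) untouched. For the kinetic part $D_{\operatorname{kin}}(\theta)$ the nonzero entries $K(\theta),K(\theta)^*$ sit at positions $(1,2),(2,1),(3,4),(4,3)$, and for both choices of $\Pi$ one checks $\epsilon_i\epsilon_j=-1$ at each of these, so $\Pi D_{\operatorname{kin}}(\theta)\Pi^*=-D_{\operatorname{kin}}(\theta)$. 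For the tunneling part of the chiral model the nonzero entries ($U_c^\pm$) occupy positions $(1,4),(2,3),(3,2),(4,1)$, where $\Pi_{\mathrm c}$ again yields $\epsilon_i\epsilon_j=-1$; for the anti-chiral model the nonzero entries ($U$) occupy positions $(1,3),(2,4),(3,1),(4,2)$, where $\Pi_{\mathrm{ac}}$ yields $\epsilon_i\epsilon_j=-1$. In each case $\Pi V_w\Pi^*=-V_w$, hence $\Pi H_w\Pi^*=-H_w$, and the claim follows.

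There is no genuine obstacle here beyond the bookkeeping of which matrix entries are occupied; the only point worth flagging is why the argument is confined to the two limiting models. A single constant diagonal sign matrix $\Pi=\operatorname{diag}(\epsilon_i)$ anticommuting with the full $H_w$ (both $w_0,w_1\neq 0$) would have to satisfy $\epsilon_1\epsilon_3=\epsilon_1\epsilon_4=-1$ — from the two tunneling channels in the first row — forcing $\epsilon_3=\epsilon_4$, which is incompatible with $\epsilon_3\epsilon_4=-1$ required by the kinetic block; so for the full model one cannot expect particle--hole symmetry from such an elementary conjugation. (One could instead allow an antiunitary $\Pi$ combining a diagonal matrix with complex conjugation, or an $n$-dependent conjugation, but this is not needed for the chiral and anti-chiral cases.)
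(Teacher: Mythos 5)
Your proof is correct and rests on the same mechanism as the paper's: exhibiting a unitary $W$ with $WH_wW^*=-H_w$, so that $\Spec(H_w)=\Spec(-H_w)$. The only difference is cosmetic — you write down a single constant diagonal involution ($\operatorname{diag}(1,-1,1,-1)$ resp.\ $\operatorname{diag}(1,-1,-1,1)$, whose entry-by-entry verification against \eqref{eq:Hamiltonian} checks out), whereas the paper first block off-diagonalizes with $P_{\operatorname{c}}$, $P_{\operatorname{ac}}$ and then conjugates by the third Pauli matrix; the composite unitary plays exactly the role of your $\Pi$.
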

 \begin{proof}
For both the chiral and anti-chiral limiting model, this follows by conjugating the Hamiltonian by unitaries
 \[ P_{\operatorname{c}}:=\begin{pmatrix} 1& 0 & 0 & 0 \\ 0 & 0 & 1 & 0 \\ 0 & -1 & 0 & 0 \\  0 & 0 & 0 &-1 \end{pmatrix}\text{ and }P_{\operatorname{ac}}:=\begin{pmatrix} ie^{i\frac{\pi}{4}}& 0 & 0 & 0 \\ 0 & 0 & 0 & -e^{-i\frac{\pi}{4}} \\ 0 &0 & -ie^{i\frac{\pi}{4}} & 0  \\  0 &  e^{-i\frac{\pi}{4}}&0 &0 \end{pmatrix}\] respectively from the left which turns the Hamiltonian into a block off-diagonal operator. Then conjugating with $\operatorname{diag}(I_{2},-I_{2})$ implies the claim.
 \end{proof}
 \subsection{Ergodic properties of the system}
The spectral and dynamical properties of the system are governed by the arithmetic properties of $L$, foremost depending on whether $L \in \mathbb Q^{+}$ (periodic) or $L \in \mathbb R^{+} \setminus \mathbb Q$ (quasi-periodic).
Let $\psi$ be a solution to $H_{w}(\theta,\phi,\vartheta)\psi = \mathscr E \psi$, with $\mathscr E \in \CC$, then we can write the solution as
\begin{equation}\label{eq:solution}
    \begin{pmatrix} \psi_{n+1} \\ \psi_n \end{pmatrix} = \begin{pmatrix} t(\theta)^{-1}(\mathscr E- t_0 - V_w^\phi(\vartheta+ \tfrac{n}{L})) & - I_4  \\ I_4 & 0 \end{pmatrix} \begin{pmatrix} \psi_{n} \\ \psi_{n-1} \end{pmatrix}.
\end{equation} 
Since $t(\theta)$ is self-involutive, $t(\theta)=t(\theta)^{-1}$, we find that the associated Schr\"odinger cocycle $(1/L, A^{\mathscr E,\theta,w})$ where $L>0$ and $A^{\mathscr E,\theta,w} \in C^{\omega}(\RR/\ZZ,  \operatorname{SL}(8,\CC))$ is given as
\begin{equation}
\label{eq:full_cocycle}
A^{\mathscr E,\theta,w}(\vartheta) := \begin{pmatrix} Q^{\mathscr E,\theta,w}(\vartheta)& - I_4 \\ I_4  & 0 \end{pmatrix},
 \end{equation}
where $Q^{\mathscr E,\theta,w}(\vartheta):=t(\theta)(\mathscr E- t_0 -V_w^\phi(\vartheta)).$
Note that $A^{\mathscr E,\theta,w}(\vartheta)$ is invertible since \eqref{eq:full_cocycle} and the determinant formula for block matrices implies that $\det(A^{\mathscr E,\theta,w}(\vartheta))=1$.
For $w_1=0$, we denote the cocycle by $A^{\mathscr E,\theta,w_0}_{\operatorname{ac}}$ and for $w_0=0$ by $A^{\mathscr E,\theta,w_1}_{\operatorname{c}}.$
The top left block matrix of the cocycle \eqref{eq:full_cocycle} reads 
\begin{equation}
\label{eq:Q}
 Q^{\mathscr E,\theta,w}= \begin{pmatrix}-e^{-2\pi i \theta} & \mathscr E e^{-2\pi i \theta}  & -e^{-2\pi i \theta} w_1 U^{+}_{\operatorname{c}} & -e^{-2\pi i \theta}  w_0 U^{+}_{\operatorname{ac}} \\ \mathscr E e^{2\pi i \theta} & - e^{2\pi i \theta} & -e^{2\pi i \theta}  w_0U^{-}_{\operatorname{ac}}  &-e^{2\pi i \theta} w_1 U^{-}_{\operatorname{c}}   \\  -e^{-2\pi i \theta} w_1 U^{-}_{\operatorname{c}} & -e^{-2\pi i \theta}  w_0U^{+}_{\operatorname{ac}} & -e^{-2\pi i \theta} & \mathscr E e^{-2\pi i \theta}  \\   -e^{2\pi i \theta} w_0U^{-}_{\operatorname{ac}}  &  -e^{2\pi i \theta}  w_1 U^{+}_{\operatorname{c}} & \mathscr E e^{2\pi i \theta} & - e^{2\pi i \theta} \end{pmatrix},
 \end{equation}
 where $U^{\pm}_{\operatorname{ac}}$ is defined in accordance with \eqref{eq:acplusminus}.
Let $L >0,$ and introduce the shift $Tx := x+1/L$ and the $n$-th cocycle iterate
\begin{equation}\label{eq:cocycle}
    A_n^{\mathscr E,\theta,w}(\vartheta):= \prod_{i=n}^{1} A^{\mathscr E,\theta,w}(T^i \vartheta) = A^{\mathscr E,\theta,w}( \vartheta+\tfrac{n}{L}) \cdots A^{\mathscr E,\theta,w}( \vartheta+\tfrac{1}{L})
\end{equation}
so that \eqref{eq:solution} takes the form
\[\begin{pmatrix} \psi_{n+1} \\ \psi_n \end{pmatrix} = A_{n}^{\mathscr E,\theta,w}(\vartheta) \begin{pmatrix} \psi_1 \\ \psi_{0} \end{pmatrix}.\]

In this work, we are mostly concerned with irrational length scales $L \in \mathbb R^+ \backslash \mathbb Q.$ In addition, we may assume without loss of generality that $1/L \in \mathbb T,$ since the Hamiltonian is invariant under integer translations of $1/L.$
Under this assumption of irrational $L$, the dynamics becomes ergodic and Oseledet's theorem, see e.g. \cite{AJS15} and references therein, implies the existence of eight (possibly degenerate) Lyapunov exponents (LEs) $L_j=L_j(1/L,A^{\mathscr E,\theta,w}) \in [-\infty,\infty)$.
Arranging them as $L_1 \ge\ldots\ge L_8$ 
repeated according to multiplicity, they are given by 
\begin{equation}\label{eq:LEdef}
L_i(1/L,A^{\mathscr E,\theta,w}) = \lim_{n \to \infty} \frac{1}{n} \int_{\RR/\ZZ} \log(\sigma_i(A^{\mathscr E,\theta,w}_n(\vartheta))) \, d\vartheta,
\end{equation}
where $\sigma_k(B)$ is the $k$-th singular value of a matrix $B$, with the convention that $\sigma_k(B) \ge \sigma_{k+1}(B).$ 
We also define $L^k(1/L,A^{\mathscr E,\theta,w})= \sum_{j=1}^k  L_j(1/L,A^{\mathscr E,\theta,w})$.
The LEs $\mathbb R \ni t \mapsto L^k(1/L,A^{\mathscr E,\theta,w}(\bullet+it))$ are then convex and piecewise affine functions \cite{AJS15}. We emphasize that this property may however not be true for Lyapunov exponents $ L_k(1/L,A^{\mathscr E,\theta,w}).$

Regarding the Lyapunov exponents of the cocycle, we make the following simple observation using the symplectic structure of our cocycle.
\begin{lemm}\label{lem:LEpair}
For any $\mathscr{E}\in \mathbb{R}$, the LEs of $A^{\mathscr E,\theta,w}$ given by $ L_n(1/L,A^{\mathscr E,\theta,w})$ for $n \in \{1,\ldots,8\}$ appear in pairs satisfying  $$ L_{9-n}(1/L,A^{\mathscr E,\theta,w})=- L_n(1/L,A^{\mathscr E,\theta,w}),\quad n \in \{1,\ldots,4\}.$$
\end{lemm}
\begin{proof}
Since $t(\theta)^*=t(\theta)$ and $(t_0-V_{w}^\phi(\vartheta))^*=t_0-V_{w}^\phi(\vartheta)$, a simple computation shows that for any $\mathscr{E}\in \mathbb{R}$,
$$
A^{\mathscr E,\theta,w}(x)^* \Omega A^{\mathscr E,\theta,w}(x) = \Omega,
\quad \Omega = \begin{pmatrix} 0 & t(\theta) \\ -t(\theta) & 0 \end{pmatrix}.
$$
For the cocyle \eqref{eq:cocycle} this means upon iterating this identity that
\begin{align*}
A_n^{\mathscr E,\theta,w}(x)^* \Omega A_n^{\mathscr E,\theta,w}(x)& =
A_{n-1}^{\mathscr E,\theta,w}(x)^* \Big[A^{\mathscr E,\theta,w}(x+\tfrac{n}{L})^* \Omega A^{\mathscr E,\theta,w}(x+\tfrac{n}{L})\Big] A_{n-1}^{\mathscr E,\theta,w}(x)\\
&=A_{n-1}^{\mathscr E,\theta,w}(x)^*  \Omega A_{n-1}^{\mathscr E,\theta,w}(x)=\ldots=A^{\mathscr E,\theta,w}(x+\tfrac{1}{L})^*  \Omega A^{\mathscr E,\theta,w}(x+\tfrac{1}{L})\\
&=\Omega.    
\end{align*}
Thus, $(A_n^{\mathscr E,\theta,w}(x))^{-1}= \Omega^{-1}A_n^{\mathscr E,\theta,w}(x)^* \Omega.$ On the other hand, $\Omega$ is anti self-adjoint, so the argument also applies to the adjoint of $A_n^{\mathscr E,\theta,w}(x)$ and then also to the product $A_n^{\mathscr E,\theta,w}(x)^* A_n^{\mathscr E,\theta,w}(x)$. Therefore,  $\lambda$ is an eigenvalue of $A_n^{\mathscr E,\theta,w}(x)^* A_n^{\mathscr E,\theta,w}(x)$ if and only if $1/\lambda$ is. Since the eigenvalues of $A_n^{\mathscr E,\theta,w}(x)^* A_n^{\mathscr E,\theta,w}(x)$ are the squared singular values of $A_n^{\mathscr E,\theta,w}(x)$, the result follows in view of the ordering $ L_1\ge\ldots\ge L_8$.
\end{proof}
We also recall the characterization of the AC spectrum due to Kotani and Simon \cite{KS} showing that
\begin{equation}
\label{eq:Kotani-Simon}
S_j = \{ \mathscr E \in \mathbb R: \text{ there are }2j\le 8\text{ LEs } L_n(1/L,A^{\mathscr E,\theta,w}) \text{ that vanish}\},
\end{equation}
is the essential support of the absolutely continuous spectrum of multiplicity $2j.$ In particular, if $S_{4}$ contains an open interval $I$, then the spectrum of the Hamiltonian is purely absolutely continuous on $I.$

We now turn to estimates of LEs. To this end, note that
\[\lVert A^{\mathscr E,\theta,w}(\vartheta) x\rVert^2=\lVert Q^{\mathscr E,\theta,w}(\vartheta)x_1-x_2\rVert^2+\lVert x_1\rVert^2\quad \text{ if }x=(x_1,x_2)^\intercal\text{ with }x_j\in\CC^4.\]
     Since
     \[\lVert Q^{\mathscr E,\theta,w}(\vartheta)x_1-x_2\rVert^2\le \lVert Q^{\mathscr E,\theta,w}(\vartheta)x_1\rVert^2+\lVert x_2\rVert^2+2\lVert Q^{\mathscr E,\theta,w}(\vartheta)x_1\rVert \lVert x_2\rVert,\] 
     it is then easy to check that
\begin{align*}
\lVert A^{\mathscr E,\theta,w}(\vartheta)\rVert^2&\le  (1+\lVert Q^{\mathscr E,\theta,w}(\vartheta)\rVert)^2.
\end{align*}
Using the definition \eqref{eq:Vw} of $V_w^\phi$ it is easy to see that $\lVert V_w^\phi\rVert\le |w_0|\lVert V_\mathrm{ac}^\phi\rVert+|w_1|\lVert V_\mathrm{c}\rVert\le |w_0|+|w_1|$, and since $\lVert t(\theta)\rVert=\lVert t_0\rVert=1$, we get $\lVert Q^{\mathscr E,\theta,w}\rVert\le \lVert t(\theta)\rVert\lVert\mathscr E-\mathrm{t}_0-V_w^\phi\rVert\le |\mathscr E|+1+|w_0|+|w_1|$. This gives 
\begin{equation}\label{eq:normboundcocycle}
\lVert A^{\mathscr E,\theta,w}(\vartheta)\rVert\le 1+\lVert Q^{\mathscr E,\theta,w}(\vartheta)\rVert\le 2+|\mathscr E|+|w_0|+|w_1|    
\end{equation}
which implies upper bounds on the LEs
\begin{equation}
\label{eq:upper_bds}
  L_i(1/L,A^{\mathscr E,\theta,w})\le \log\left(2+\vert \mathscr E \vert +  \vert w_0\vert +\vert w_1\vert\right), \quad i=1,\ldots,4.
 \end{equation}

\subsection{Complexification of LEs}

After having stated upper bounds on LEs in \eqref{eq:upper_bds}, our first proposition gives lower bounds on LEs. By Kotani-Simon theory, strict positivity of the first four Lyapunov exponents, $ L_i (1/L, A^{\mathscr E,\theta,w}) >0$ for all $i \in \{1,\ldots,4\}$, implies the absence of absolutely continuous spectrum, cf.~the discussion around \eqref{eq:Kotani-Simon}. For the statement we let
\begin{equation}\label{eq:mu}
    \mu(w):=\sqrt{w_0^2-w_1^2\sin^2(2\pi\theta)}
\end{equation}
with branch chosen so that $\mu(0,w_1)=iw_1\sin(2\pi\theta)$, and introduce
\begin{equation}\label{eq:lambdafullmodel}
\begin{aligned}
  \lambda_{1}(w) &=  \frac13(w_1 \cos (2 \pi  \theta)+ \mu), 
   && \lambda_{2}(w) = \frac13(w_1 \cos (2 \pi  \theta)- \mu) , \\
  \lambda_{3}(w) &= \frac13(-w_1 \cos (2 \pi  \theta)+ \mu) ,
   && \lambda_{4}(w) = \frac13(-w_1 \cos (2 \pi  \theta)- \mu) .
\end{aligned}
\end{equation}
Observe that $\lambda_j(w)=0$ for some $j$ if and only if $w_0= w_1$.

\begin{prop}
\label{prop:LEsfull}
Let $A^{\mathscr E,\theta,w}$ be given by \eqref{eq:full_cocycle} with $\RR_0^+\times\RR_0^+\ni w=(w_0,w_1)\ne(0,0)$. Let $\mu(w)$ and $\lambda_1(w),\ldots,\lambda_4(w)$ be given by \eqref{eq:mu} and \eqref{eq:lambdafullmodel}. If $w_0\ne w_1$ then $\lambda_j(w)\ne0$ for all $j$, and if $\{\lambda_{i_1}(w),\ldots,\lambda_{i_4}(w)\}$ is a rearrangement such that $|\lambda_{i_1}(w)|\ge\ldots\ge|\lambda_{i_4}(w)|$ then the $j \in \{1,\ldots,4\}$ LEs satisfy
\begin{gather} \label{eq:lowerboundLEs}
 L^j(1/L,A^{\mathscr E,\theta,w}) \ge \sum_{k=1}^j \log \vert \lambda_{i_k}(w)\vert,\\
 L_j(1/L,A^{\mathscr E,\theta,w})\ge \max\bigg\{ \sum_{k=1}^j \log \vert \lambda_{i_k}(w)\vert- (j-1)\log(\vert \mathscr E \vert +2+ \vert w_0\vert+ \vert w_1 \vert),0\bigg\}.
 \label{eq:lowerboundLEfull}
 \end{gather}
In particular, if $(w_0,w_1) = \rho \cdot (\kappa_0,\kappa_1)$ for fixed $(\kappa_0,\kappa_1)\in \mathbb R_0^+ \times  \mathbb R_0^+$ with $\kappa_0\ne\kappa_1$ and if $\rho>0$ is sufficiently large then $ L_j(1/L,A^{\mathscr E,\theta,w})>0$ for $j \in \{1,\ldots,4\}$.
\end{prop}

\begin{proof}
First, since all $\lambda_k$ are non-zero, we find that if $(w_0,w_1) =\rho\cdot(\kappa_1,\kappa_2)$ with $\kappa_i \ge 0$ fixed and $\rho \gg 1$ then $ L_j(1/L,A^{\mathscr E,\theta,w}) >0$ for $j\in\{1,2,3,4\}$ by \eqref{eq:lambdafullmodel} and \eqref{eq:lowerboundLEfull}.

To estimate the Lyapunov exponents, we divide the proof into three cases:
\begin{itemize}
    \item[$(\mathrm{i})$] either $w_0=0$ (the chiral limit),
    \item[$(\mathrm{ii})$] or $w_0\ne0$ and $\mu\ne0$,
    \item[$(\mathrm{iii})$] or $w_0\ne0$ and $\mu=0$.
\end{itemize}
(Note that assumption $(\mathrm{ii})$ contains the anti-chiral limit $w_1=0$ as a special case.)

We first show that under either assumption (i) or (ii) there is a unitary $\mathcal U_\pm(w)$ such that for $\pm\varepsilon$ large we have
\begin{equation}\label{eq:diagonalizationarg}
\mathcal U^{-1}_\pm Q^{\mathscr E,\theta,w}(x+i\varepsilon) \mathcal U_\pm =e^{\pm 2\pi(\varepsilon- i x)} \operatorname{diag}(\lambda_1(w),\ldots,\lambda_4(w))+\mathcal O(1),
\end{equation}
where $\lambda_j(w)\ne0$ for all $1\le j\le 4$.

We begin with case (i), i.e., assume that $w_0=0$. Then $w_1\ne0$ by assumption. For $\pm\varepsilon$ large we have
\begin{align*}
U^+_\mathrm{c}(x+i\varepsilon)&=-\tfrac16 e^{\pm2\pi(\varepsilon-ix)}(1\mp i\sqrt 3)+\mathcal O(1), \quad \pm\varepsilon\gg1,\\
U^-_\mathrm{c}(x+i\varepsilon)&=-\tfrac16 e^{\pm2\pi(\varepsilon-ix)}(1\pm i\sqrt 3)+\mathcal O(1), \quad \pm\varepsilon\gg1.
\end{align*}
With $\nu_{\pm}(\theta):= e^{2\pi i \theta}\frac{1 \pm i\sqrt{3}}2$ and $\pm\varepsilon$ large,
we then find for \eqref{eq:Q} that
\begin{equation*}
\begin{split}
Q^{\mathscr E,\theta,w_1}(x+i\varepsilon)=\frac{w_1e^{\pm2\pi(\varepsilon-ix)}}{3} \begin{pmatrix} 0& 0 &\nu_\mp(-\theta)&  0 \\ 0 & 0&0 & \nu_\pm(\theta) \\ \nu_\pm(-\theta) &  0 & 0 & 0 \\ 0 &\nu_\mp(\theta)& 0 & 0 \end{pmatrix}+\mathcal O(1).
\end{split}
\end{equation*}
To diagonalize the matrix on the right we take 
\begin{equation*}
\mathcal U_\pm  = \begin{pmatrix}0&\bar\omega^{\pm1} & 0& \bar\omega^{\pm1} \\ 
\omega^{\pm1} &0 &\omega^{\pm1} &0\\ 
0& 1 & 0 & -1 \\ 1 & 0 & -1 & 0
\end{pmatrix},\quad \omega=e^{\pi i/3}=\frac{1+i\sqrt 3}2.
\end{equation*}
Then for $\pm\varepsilon$ large we obtain \eqref{eq:diagonalizationarg} with $\lambda_j(w)$ as in \eqref{eq:lambdafullmodel} given by 
\[  \lambda_{1} =  \frac{w_1}3 e^{2\pi i\theta}, \quad \lambda_{2} = \frac{w_1}3 e^{-2\pi i\theta} , \quad
  \lambda_{3} = -\frac{w_1}3 e^{2\pi i\theta} , \quad \lambda_{4} = - \frac{w_1}3 e^{-2\pi i\theta} .
\]

Next, assume that (ii) holds.
For $\pm\varepsilon$ large we have
\begin{align*}
U^+_\mathrm{ac}(x+i\varepsilon)&=U(x+\phi/L+i\varepsilon)=\tfrac13 e^{\pm2\pi(\varepsilon-ix)}e^{\mp 2\pi i\phi/L}+\mathcal O(1), \quad \pm\varepsilon\gg1,\\
U^-_\mathrm{ac}(x+i\varepsilon)&=U(x-\phi/L+i\varepsilon)=\tfrac13 e^{\pm2\pi(\varepsilon-ix)}e^{\pm 2\pi i\phi/L}+\mathcal O(1), \quad \pm\varepsilon\gg1.
\end{align*}
For $\pm\varepsilon\gg1$ we then get from \eqref{eq:Q} that
\begin{equation*}
Q^{\mathscr E,\theta,w}(x+i\varepsilon) = \frac{e^{\pm 2\pi (\varepsilon-ix)}}{3} 
\widetilde Q + \mathcal O(1),   
\end{equation*}
where
\[\widetilde Q = 
\begin{pmatrix} 0 & 0 & w_1\nu_{\mp}(-\theta) & -w_0e^{-2\pi i (\theta\pm\phi/L)} \\ 
0 & 0 &  -w_0e^{2\pi i (\theta\pm\phi/L)} & w_1\nu_{\pm}(\theta) \\ 
w_1\nu_{\pm}(-\theta) &   -w_0e^{-2\pi i (\theta\pm\phi/L)} & 0 & 0 \\ 
  -w_0e^{2\pi i (\theta\pm\phi/L)} & w_1\nu_{\mp}(\theta) & 0 & 0 \end{pmatrix}.
\]
The matrix $e^{\pm 2\pi (\varepsilon-ix)}\widetilde Q/3$
has eigenvalues $e^{\pm 2\pi (\varepsilon-ix)}\lambda_j(w)$ with $\lambda_j(w)$ given by \eqref{eq:lambdafullmodel} and
with corresponding eigenvectors
\[\begin{split} v_{1,\pm} & =\Big(e^{-2\pi i (\theta\pm\phi/L)} (iw_1\sin(2\pi  \theta)- \mu),  w_0 \omega^{\pm 1}, \nu_{\pm}(-\theta\mp\tfrac{\phi}{L})(iw_1 \sin(2\pi  \theta)- \mu),w_0\Big)^\intercal, \\ 
v_{2,\pm} & =\Big(e^{-2\pi i (\theta\pm\phi/L)} (iw_1\sin(2\pi  \theta)+ \mu), w_0 \omega^{\pm 1},  \nu_{\pm}(-\theta\mp\tfrac{\phi}{L}) (iw_1 \sin(2\pi  \theta)+ \mu),w_0\Big)^\intercal, \\
v_{3,\pm} & = \Big(e^{-2\pi i (\theta\pm\phi/L)} (-iw_1\sin(2\pi  \theta)- \mu) , -w_0\omega^{\pm 1},\nu_{\pm}(-\theta\mp\tfrac{\phi}{L}) (iw_1\sin(2\pi  \theta)+\mu) ,w_0\Big)^\intercal,\\
v_{4,\pm} & = \Big(e^{-2\pi i (\theta\pm\phi/L)} (-iw_1\sin(2\pi  \theta)+ \mu) , -w_0\omega^{\pm 1},\nu_{\pm}(-\theta\mp\tfrac{\phi}{L}) (iw_1\sin(2\pi  \theta)- \mu) ,w_0\Big)^\intercal.
\end{split}\]
Note that when $w_0=0$ two of the eigenvectors are zero, and when $\mu=0$ the four eigenvectors collapse to two. However, under assumption (ii) we have $w_0,\mu\ne0$ so we can then define 
\[ \mathcal U_{\pm}=\begin{pmatrix} v_{1,\pm},v_{2,\pm},v_{3,\pm},v_{4,\pm} \end{pmatrix} \]
such that \eqref{eq:diagonalizationarg} holds, where $\lambda_j(w)\ne0$ for all $j$ since $w_0\ne w_1$.

Having established \eqref{eq:diagonalizationarg}, this implies that for $\pm \varepsilon \gg 1$ we have
\[
\begin{pmatrix} \mathcal U_{\pm}^{-1} & 0 \\ 0 & \mathcal U_{\pm}^{-1}
\end{pmatrix} A^{\mathscr E,\theta,w}(x+i\varepsilon) \begin{pmatrix} \mathcal U_{\pm} & 0 \\ 0 & \mathcal U_{\pm} 
\end{pmatrix} = e^{\pm 2\pi(\varepsilon- i x)} \begin{pmatrix} \operatorname{diag}(\lambda_1,\ldots,\lambda_4) +  o(1)  & o(1) I_4  \\  o(1) I_4 & 0_4 \end{pmatrix}.\]
Thus, the cocycle $A^{\mathscr E,\theta,w}(x+i\varepsilon)$ is equivalent to $e^{\pm 2\pi(\varepsilon- i x)} M_{\varepsilon}(x)$, where $M_{\varepsilon}$ is itself an analytic cocycle
\[ M_{\varepsilon} :=\begin{pmatrix} \operatorname{diag}(\lambda_1,\ldots,\lambda_4) +  o(1)  & o(1) I_4  \\  o(1) I_4 & 0_4 \end{pmatrix} \text{ as } \pm \epsilon \to \infty.\]
By the definition \eqref{eq:cocycle} of the $n$-th cocyle $A_n^{\mathscr E,\theta,w}$ and the definition \eqref{eq:LEdef} of Lyapunov exponents, it follows that
\[ L_j(1/L,A^{\mathscr E,\theta,w}(\bullet+i\varepsilon))=2\pi \vert \varepsilon\vert +  L_j(1/L,M_{\varepsilon}).\]
We note that 
\[ M_{\varepsilon} = M + o(1) \text{ with }M=\operatorname{diag}(\lambda_1,\ldots,\lambda_4,0,0,0,0). \]
By continuity of Lyapunov exponents \cite[Theorem 1.5]{AJS15} for analytic cocycles, we then have 
\[ L_j(1/L,M_{\varepsilon}) =  L_j(1/L,M)+o(1).\]
Using then that the right derivatives of the Lyapunov exponents $  L^j(1/L,M_{\varepsilon})$ in $\varepsilon$ are quantized \cite[Theorem 1.4]{AJS15} we conclude, since Lyapunov exponents $  L^j$ are convex and piecewise affine in $\varepsilon$, that for $j \in \{1,2,3,4\}$,
\begin{equation}\label{eq:LEaux1}   L^j(1/L,A^{\mathscr E,\theta,w}(\bullet+i\varepsilon)) = 2j\pi \vert \varepsilon \vert + \sum_{k=1}^j \log \vert \lambda_{i_k}(w)\vert,
\qquad
\vert \varepsilon \vert \ge \varepsilon_0 \gg 1.\end{equation}
Here we use the fact that the rearrangement satisfies $|\lambda_{i_1}(w)|\ge\ldots\ge|\lambda_{i_4}(w)|$.
By convexity, we conclude that \eqref{eq:lowerboundLEs} holds.
This together with \eqref{eq:upper_bds} and Lemma \ref{lem:LEpair} implies \eqref{eq:lowerboundLEfull} for case (i) and (ii).

Now assume instead that (iii) holds. Then $0=\mu=\sqrt{w_0^2-w_1^2\sin^2 2\pi\theta}$ with $w_0\ne0$ and $w_0\ne w_1$, which implies that $\cos 2\pi\theta\ne0$. Using a Jordan decomposition we get as above that 
\[ L_j(1/L,A^{\mathscr E,\theta,w}(\bullet+i\varepsilon))=2\pi \vert \varepsilon\vert +  L_j(1/L,M)+o(1),\]
where now
$$
M=\begin{pmatrix}
    \lambda_1(w) & 1 & 0 & 0\\ 0 &\lambda_1(w) & 0 & 0\\
    0 & 0 & \lambda_3(w) & 1 \\ 0 & 0 & 0 & \lambda_3(w) 
\end{pmatrix},\quad \lambda_j(w)=\pm w_1\cos(2\pi\theta).
$$ 
Using induction is easy to check that 
$$
M^n=\begin{pmatrix}
    A_1 & 0 \\ 0 & A_3\end{pmatrix},\quad 
    A_j=\begin{pmatrix}
        \lambda_j(w)^n & n\lambda_j(w)^{n-1} \\ 0 &\lambda_j(w)^n 
    \end{pmatrix}
$$
which implies that
$$
(M^*)^n M^n=\begin{pmatrix}
    B_1 & 0 \\ 0 & B_3\end{pmatrix},\quad 
    B_j=\begin{pmatrix}
    \lambda_j^{2n} & n\lambda_j^{2n-1} \\ n\lambda_j^{2n-1} &\lambda_j^{2n}+n^2\lambda_j^{2n-2} 
\end{pmatrix}.
$$
Computing the eigenvalues of $B_j$ and using the determinant formula for block matrices we find that the eigenvalues of $(M^*)^n M^n$ are given by
$$
\frac{|\lambda_j|^{2n-2}}2\bigg(n^2+2|\lambda_j|^{2}\pm n^2\sqrt{1+4|\lambda_j|^{2}/n^2}\bigg),\quad j=1,3,
$$
where $|\lambda_1|=|\lambda_3|=\frac{w_1}{3}|\cos 2\pi\theta|$.
Taking the logarithm of the square root, dividing by $n$ and passing to the limit we see that the four Lyapunov exponents $ L_1= L_2= L_3= L_4$ of $M$ are 
\begin{align*}
     L_j(1/L,M)&=\log |\lambda_j(w)|=\log(w_1/3)+\log (|\cos 2\pi\theta|/3).
\end{align*}
We may then repeat the arguments above and coclude that \eqref{eq:LEaux1} and then also \eqref{eq:lowerboundLEs} both hold for case (iii) as well. Thus, \eqref{eq:lowerboundLEfull} holds for all three cases (i)--(iii), i.e., for any $w\ne 0$ with $w_0\ne w_1$.
\end{proof}

\begin{rem}\label{rem:constantEV}
    Proposition \ref{prop:LEsfull} and \eqref{eq:upper_bds} show that for $w_0\neq w_1$  
   \begin{equation}
   \label{eq:as_Lyap}
   L_j(1/L, A^{\mathscr E,\theta,w}) =\log \vert \rho \vert +\mathcal O(1)  
   \end{equation}
     and
    \[L^j(1/L, A^{\mathscr E,\theta,w}) =j\log \vert \rho \vert +\mathcal O(1). \]
    This is true for any $\phi$. When $w_0=w_1$ and $\phi=0$ (the case originally considered in \cite{TM2020}) we don't find a lower bound on the Lyapunov exponents.

    In particular, for $w_0=w_1$ and $\phi=0$ the matrix $V_w^\phi$ has two constant eigenvalues $\pm w_0$ with eigenvectors 
    \[ v_{\pm}(x)=\Big(\mp U(x), \tfrac{\mp(2+ \cos(2\pi x)+\sqrt{3}\sin(2\pi x))}{3}, \tfrac{2 + \cos(2\pi x)+\sqrt{3}\sin(2\pi x)}{3}, U(x) \Big)^\intercal.\]
    This violates the condition in \cite[(1.2b), Theorem $1$]{K17} to establish localization. 
    
    When $w_0=w_1$ and $\phi=1/3$ the matrix $V_w^\phi$ has a constant eigenvalue zero with multiplicity two, while for $\phi=1/4$ the eigenvalues are all non-constant. In general, the eigenvalue expressions are quite heavy even for these special values of $\phi$, and checking whether $V_w^\phi$ has constant eigenvalues for general parameter values is cumbersome. For this reason we prefer to rely on the explicit estimates provided by Proposition \ref{prop:LEsfull} over appealing to \cite[Theorem $1$]{K17}.
\end{rem}

\section{Rational and almost rational moir\'e lengths}
\label{sec:ratio}

In this section we study spectral decomposition of the Hamiltonian $H_w(\theta,\phi,\vartheta)$ by means of its density of states, introduced below. For a more thorough discussion of density of states in a context similar to ours we refer to \cite[\S 5]{kirsch2007invitation}.

We define the regularized trace for $f\in C_c^{\infty}(\RR)$ of the tight-binding model $H=H_w(\theta,\phi,\vartheta)$ in \eqref{eq:Hamiltonian} by 
\begin{equation}\label{eq:regularizedtrace}
    \widehat{\tr}(f(H )) := \lim_{n \to \infty} \frac{ \tr_{\ell^2(\ZZ)}(\indic_{\{-n,\ldots,n\}} f(H))  }{2n+1 }.
\end{equation}
The map $f\mapsto \widehat{\tr}(f(H ))$ is a bounded linear functional on $C_c^0(\mathbb R)$, so by the Riesz representation theorem we have $\widehat{\tr}(f(H ))=\int_{-\infty}^\infty f(x)\,d\nu(x)$ for a measure $\nu$. The measure
$$
\nu(A)=\int_{-\infty}^\infty \indic_A(x)\,d\nu(x),\quad \text{$A$ a Borel set in $\mathbb R$,}
$$
is called the density of states measure of $H$, or just the density of states. The integrated density of states is the cumulative distribution function 
\begin{equation}\label{eq:intDOS}
\nu((-\infty,\lambda])=\int_{-\infty}^\lambda d\nu(x)=\widehat{\tr}(\indic_{(-\infty,\lambda]}(H )).
\end{equation}
We will write $\widehat{\tr}(\indic_{A}(H ))$ for the density of states measure of $H$ of a set $A$, and $\widehat{\tr}(\indic_{(-\infty,\lambda]}(H ))$ for the integrated density of states of $H$.

\subsection{Rational moir\'e lengths}
When $1/L \in \mathbb Q$, the Hamiltonian $H_w$ is a periodic operator. Hence, its spectrum can be studied using Bloch-Floquet theory and we obtain the following spectral decomposition.
\begin{prop}
\label{prop:rational}
The spectrum of the Hamiltonian $H_{w}(\theta,\phi,\vartheta)$ for $1/L \in \mathbb Q$ is purely absolutely continuous and the density of states is continuous. 
\end{prop}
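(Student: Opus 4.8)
The plan is to use the periodicity of $H_w$ when $1/L = p/q$ in lowest terms. In this case the potential $V_w(\vartheta + n/L)$ has period $q$ in $n$, so $H_w(\theta,\phi,\vartheta)$ commutes with the shift by $q$ and hence decomposes as a direct integral $H_w(\theta,\phi,\vartheta) = \int_{\RR/\ZZ}^{\oplus} H_w(\theta,\phi,\vartheta,k)\,dk$ over the Bloch quasi-momentum $k \in \RR/\ZZ$, where each fiber $H_w(\theta,\phi,\vartheta,k)$ acts on $\CC^{4q}$ (the $4$ internal degrees of freedom times the $q$ sites in a period). This is the standard Floquet--Bloch reduction; I would first spell it out via the discrete Fourier transform in the period variable, noting that the off-diagonal hopping blocks $K(\theta)$ contribute the only $k$-dependence, through a factor $e^{\pm 2\pi i k}$ in the corner entries connecting site $q-1$ to site $0$.

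The next step is the general principle: a periodic (finite-range) discrete Schrödinger-type operator has purely absolutely continuous spectrum provided none of the Floquet eigenvalue bands $k \mapsto \lambda_j(k)$ is constant on a set of positive measure — equivalently, provided no band is flat. So the real content is to rule out flat bands. I would argue that the eigenvalues $\lambda_j(k)$ of $H_w(\theta,\phi,\vartheta,k)$ are the roots of the characteristic polynomial $\det(H_w(\theta,\phi,\vartheta,k) - \mathscr E) = 0$, and that this determinant is a Laurent polynomial in $z = e^{2\pi i k}$ of the form $P(\mathscr E) + c\, z^{N} + \bar c\, z^{-N}$ for a suitable $N$ and a nonzero constant $c$ depending on $\theta$ (the $z^{\pm N}$ terms come from the unique "winding" paths around the period that pick up the corner hopping $e^{\pm 2\pi i k}$, and their coefficient is a product of the hopping amplitudes, nonzero because $\det t(\theta) = 1$ forces the relevant off-diagonal entries of $K(\theta)$ not all to vanish). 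Since for fixed $\mathscr E$ this is a genuine (non-constant) Laurent polynomial in $z$, it has only finitely many roots on $|z|=1$; hence $\{k : \mathscr E \in \Spec H_w(\theta,\phi,\vartheta,k)\}$ is finite for each $\mathscr E$, which is exactly the statement that no band is flat. Absolute continuity of the spectrum of the direct integral then follows by the usual argument (the spectral measure of $H_w$ is, up to equivalence, the pushforward of Lebesgue measure on $k$ under the band functions, which is a.c. because each band function is real-analytic and non-constant), and continuity of the integrated density of states follows from the same non-flatness: $n_{H_w}(\mathscr E) = \frac{1}{q}\sum_j |\{k : \lambda_j(k) \le \mathscr E\}|$ is continuous in $\mathscr E$ since the level sets $\{k:\lambda_j(k)=\mathscr E\}$ have measure zero.

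The main obstacle I anticipate is the flat-band exclusion: one must verify that the corner coefficient $c = c(\theta,\phi,\vartheta)$ is genuinely nonzero, and more precisely that the determinant really does depend on $k$, for all parameter values $\theta,\phi,\vartheta$ (or at least for a.e. such, which suffices since the a.c. property and continuity of the DOS are robust). This requires looking at the explicit structure of \eqref{eq:Hamiltonian}: the hopping part $K(\theta) = 1 + e^{-2\pi i\theta}(\tau+\tau^*)$ always has nonzero nearest-neighbor amplitude $e^{-2\pi i\theta}$, so there is a connected hopping path winding once around the cycle of length $q$, and its contribution to $\det(H_w - \mathscr E)$ as a polynomial in $z$ has top-degree coefficient a nonzero product of these amplitudes times $\pm 1$; no cancellation can occur because it is the unique monomial of that degree. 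I would phrase this as a short lemma isolating the $z$-dependence of the characteristic polynomial. An alternative, cleaner route that avoids parameter-by-parameter checking is to invoke the cocycle/Lyapunov picture already set up: for $1/L$ rational the transfer matrix cocycle is periodic, flat bands correspond to intervals of $\mathscr E$ on which the $q$-step monodromy has an eigenvalue identically equal to a root of unity as a function of $k$, which is incompatible with $\det = \pm1$ and the explicit $z^{\pm N}$ dependence just noted; either way the crux is the same elementary polynomial fact, and I would present whichever version reads most smoothly in context.
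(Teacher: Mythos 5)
Your proposal is correct, and its skeleton (Floquet--Bloch reduction; absence of flat bands implies purely a.c.\ spectrum and a continuous density of states) matches the paper's, but the decisive step is handled by a genuinely different argument. The paper's proof is very short: it cites Kuchment's general theorem for periodic operators to exclude singular continuous spectrum and to reduce any point spectrum to flat bands of the Floquet transform, and then defers the exclusion of flat bands to Subsection \ref{subsection:TB} (Proposition \ref{prop:DOS_cont}), where it is obtained by a dimension count: a solution of $H\psi=\lambda\psi$ on $\{-n,\dots,n\}$ is determined by its values at two sites (this uses invertibility of the hopping block $t(\theta)$), so $\tr(\indic_{\{-n,\dots,n\}}\indic_{\{\lambda\}}(H))=\mathcal O(1)$ and the DOS has no atoms. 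You instead rule out flat bands fiberwise, by viewing $\det(H_w(\theta,\phi,\vartheta,k)-\mathscr E)$ as a Laurent polynomial in $z=e^{2\pi i k}$ whose top coefficient is $\pm\det(t(\theta))^{q}=\pm1\neq 0$; this rests on the same structural fact (invertibility of $t(\theta)$) but is carried out on the Bloch fibers rather than through the regularized trace. One small inaccuracy: for a block Jacobi operator the determinant is not of the form $P(\mathscr E)+c\,z^{N}+\bar c\,z^{-N}$, since intermediate powers $z^{j}$, $0<|j|<4$, generally appear with $\mathscr E$-dependent coefficients; your argument, however, only needs that the top coefficient is a nonzero constant, so nothing breaks. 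In terms of trade-offs, your route is self-contained and yields non-constancy of every band function directly (hence a.c.\ spectrum and continuity of the IDS in one stroke), while the paper's trace argument is softer, avoids any determinant computation, and applies verbatim to irrational $1/L$ as well, which is why it is isolated as Proposition \ref{prop:DOS_cont}.
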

\begin{proof}
We can apply \cite[Theorem $6.10$]{Kuchment} to see that $\sigma_{\operatorname{sc}}(H_w)=\emptyset,$ and any possible point spectrum consists only of flat bands after applying the Floquet transform. Let $H_w(k_x)$ denote the family of Bloch-Floquet operators with quasi-momentum $k_x$ as in \S1.4 in \cite{BW22}.  By \cite[Theorem 6.10 (3)]{Kuchment} the point spectrum of $H_w$ is given by eigenvalues of $H_w(k_x)$ that are constant on an open set $I \ni k_x.$ By Rellich's theorem on the analytic dependence of eigenvalues on a parameter, such eigenvalues however have to be globally constant in $k_x.$

By \cite[Corollary 6.19 (2)]{Kuchment} the integrated density of states is continuous, unless the operators $H_w(k_x)$ exhibit a flat band, i.e., a common eigenvalue that does not depend on $k_x.$ We conclude that the absence of point spectrum for $H_w$ is thus equivalent to the continuity of the integrated density of states.

We will verify the continuity of the integrated density of states in Proposition \ref{prop:DOS_cont} later in this section. 
\end{proof}

\subsection{Absence of flat bands}

To study the density of states through the regularized trace $ \widehat{\tr}(f(H ))$ in \eqref{eq:regularizedtrace}, we associate to $H_w$ a semiclassical pseudodifferential operator on $L^2(\mathbb T)$. The semiclassical parameter is the moir\'e length $h=(2\pi L)^{-1}$, i.e., we are concerned with the limit of large moir\'e lengths $L \gg 1$. We shall use the following version of \cite[Lemma 2.1]{BW22}, the only difference being that in \cite{BW22} the parameters $\phi$ and $\vartheta$ are set to zero, but the same proof works for general $\phi,\vartheta$.

\begin{lemm}\label{lem:PsiDO}
The Hamiltonian $H_w(\theta,\phi,\vartheta)$ in \eqref{eq:Hamiltonian} is unitarily equivalent to the semiclassical operator
$H_{\operatorname{\Psi DO}}(w):L^2(\mathbb T)\to L^2(\mathbb T)$ defined as 
\begin{equation*}
 H_{\operatorname{\Psi DO}}(w)u(x):=(2 t(\theta) \cos(2\pi hD_x)+ t_0  +V_w^\phi(\vartheta+x))u(x).
 \end{equation*}
\end{lemm}

\begin{prop}
The regularized trace for the tight-binding model satisfies for $1/L = \frac{p}{q} \in \mathbb Q$
\[
 \widehat{\tr}(f(H )) = \frac{\sum_{\gamma \in \{0,\ldots,q-1\}} \tr_{\CC^4}( \int_{\mathbb T}\sigma(f(H_{\operatorname{\Psi DO}}))(x,\gamma L^{-1} ) \, d x)}{q}.
 \]
For all other $1/L$
\begin{equation}
\label{eq:irrational}
 \widehat{\tr}(f(H )) =\int_{\mathbb T^2}\tr_{\CC^4}\sigma(f(H_{\operatorname{\Psi DO}}))(x,\xi) \, dx \, d\xi.
 \end{equation}
\end{prop}
\begin{proof}
The formula for the Weyl symbol \cite[Theorem $4.19$]{zworski} implies that if $\mathcal U$ is the unitary map defining the equivalence in Lemma \ref{lem:PsiDO}, then
\[
\begin{split}
 \widehat{\tr}(f(H )) &= \lim_{n \to \infty} \frac{ \tr_{\ell^2(\ZZ)}(\indic_{\{-n,\ldots,n\}} f(H))  }{2n+1} \\
  &= \lim_{n \to \infty} \frac{\tr_{\ell^2(\ZZ)}(\indic_{\{-n,\ldots,n\}}  \mathcal U^{-1}  f(H_{\operatorname{\Psi DO}}) \mathcal U)}{2n+1} \\
    &= \lim_{n \to \infty} \frac{\sum_{\gamma \in \{-n,\ldots,n\}^2} \tr_{\CC^4}( \int_{\mathbb T} e^{-i  \gamma x} f(H_{\operatorname{\Psi DO}}( x,h D_x,\theta))  e^{i  \gamma x } \, d x)}{2n+1} \\
    &= \lim_{n \to \infty} \frac{\sum_{\gamma \in \{-n,\ldots,n\}^2} \tr_{\CC^4}( \int_{\mathbb T}\sigma(f(H_{\operatorname{\Psi DO}}))(x,\gamma L^{-1}) \, d x)}{2n+1 }.
 \end{split}
\]
When $1/L$ is rational
\[\gamma \mapsto \tr_{\CC^4}\left( \int_{\mathbb T}\sigma(f(H_{\operatorname{\Psi DO}}))(x,\gamma L^{-1}) \, d x\right)\] is periodic and thus we obtain
\[ \widehat{\tr}(f(H )) = \sum_{\gamma \in \{0,\ldots,q-1\}} \tr_{\CC^4} \int_{\mathbb T}\sigma(f(H_{\operatorname{\Psi DO}}))(x,\gamma L^{-1}) \, dx.\]

If $1/L$ does not satisfy this rationality condition, then the translation $(T^{\gamma}u)(x) = u(x+1/L)$ is a \emph{uniquely} ergodic endomorphism on the probability space $\RR/\ZZ$ and therefore using the continuity of the Weyl symbol, it follows from \cite[Theorem $6.19$]{W82} that
\[
\widehat{\tr}(f(H )) =\int_{\RR^2/\ZZ^2}\tr_{\CC^4}\sigma(f(H_{\operatorname{\Psi DO}}))(x,\xi) \, dx \, d\xi.
\qedhere\]
\end{proof}
In the sequel, we shall write
\[ \widehat{\tr}_{\Omega}(\operatorname{Op}(a)) :=   \int_{\mathbb T^2}\tr a(x,\xi ) \, d x \, d\xi.\]
To see that the density of states for commensurable angles coincides with formula \eqref{eq:irrational}, we use the following lemma  which we actually state for one-dimension Schrödinger operators, but whose proof carries immediately over to arbitrary dimensions and matrix-valued operators, whose kinetic and potential operators are sums and products of exponential functions, including our operator of interest.
\begin{lemm}
Let $S: \ell^2(\ZZ) \rightarrow \ell^2(\ZZ)$ be a discrete Schrödinger operator with a potential that has a finite Fourier representation 
\[ Su_n = u_{n+1} + u_{n-1} +  \sum_{j=-m}^m a_j e^{2\pi i j n/L},\]
then $S$ is unitarily equivalent to a pseudodifferential operator 
\[
S_{\operatorname{\Psi DO}}f(x) = 2 \cos(2\pi x)f(x)+  \sum_{j=-m}^m a_j e^{2\pi i j/L D} f(x),
\]
with $D=-i \partial_x$,
and its density of states, defined by the regularized trace $ \widehat{\tr}(f(S))$ given by \eqref{eq:regularizedtrace}, satisfies
\[ \widehat{\tr}(f(S))= \int_{\mathbb T^2} \sigma(f(S_{\operatorname{\Psi DO}}))(x, \xi) \, dx \, d\xi.\]
\end{lemm}
\begin{proof}
We consider the one-dimensional operator $S: \ell^2(\ZZ) \rightarrow \ell^2(\ZZ)$
\[ Su_n = u_{n+1} + u_{n-1} +  \sum_{j=-m}^m a_j e^{2\pi i j n/L}\]
where we assume that $a_j = \bar{a}_{-j}.$ Then this operator is equivalent to a pseudodifferential operator on $\mathbb T$ given as 
\[
\begin{split}
S_{\operatorname{\Psi DO}}f(x) &= 2 \cos(2\pi x)f(x)+  \sum_{j=-m}^m a_j e^{i j/LD} f(x)\\
&=2 \cos(2\pi x)f(x)+  \sum_{j=-m}^m a_j f(x+2\pi j/L).
 \end{split}
\]
On the level of the symbol of the operator, the commensurable and incommensurable expressions for the integrated density of states always coincide due to 
$\sum_{k=0}^{n-1} e^{\frac{2\pi i k}{n} } = 0.$
Similar reasoning and the composition formula for symbols of operators implies that the two formulas coincide for $f$ in the functional calculus being any polynomial. Thus, Weierstrass's theorem implies that the two formulations coincide for any continuous function and since the map from operators defined in the symbol class $\mathcal S(1)$ to their Weyl symbols is continuous under uniform convergence, the result follows.
\end{proof}

\begin{prop}
\label{prop:DOS_cont}
The integrated density of states of the tight-binding Hamiltonian $H$ is a continuous function. In particular, the Hamiltonian does not possess any flat bands at commensurable length scales $L \in \mathbb Q^+$.
\end{prop}
\begin{proof}
Since the integrated density of states \eqref{eq:intDOS} is a cumulative distribution function, it is a c\`adl\`ag function (continuous from the right, with limits from the left).
To show that it is continuous it therefore suffices to show that the density of states measure has no atoms, that is, $\widehat{\tr}(\indic_{\{\lambda\}}(H))=0 $ for all $\lambda \in \RR$ fixed. (Compare e.g., with \cite[Theorem 5.14]{kirsch2007invitation}.)
Since translations only appear at leading order in $H$, one then observes that a solution $H \psi =\lambda \psi$ is uniquely determined inside $\{-n,-n+1,\ldots,n\}$ by specifying it on $\{\pm n \}.$  Since these are only two points, we find that 
\[ \tr_{\ell^2(\ZZ)}(\indic_{\{-n,-n+1,\ldots,n\}} \indic_{\{\lambda\}}(H))=  \mathcal{O}(1)\]
and hence $\widehat{\tr}(\indic_{\{\lambda\}}(H))=0.$
Thus, there cannot be any flat band at $\lambda$, as this would imply that $\widehat{\tr}(\indic_{\{\lambda\}}(H))>0$ by \cite[Corollary 6.19 (2)]{Kuchment}.
\end{proof}

To summarize, we have thus shown that for rational $1/L,$ the spectrum is purely absolutely continuous. We shall now turn to almost rational moir\'e lengths in the next subsection. 
\subsection{Almost-rational (Liouville) moir\'e lengths}
\label{sec:Gordon}
Recall that a number $\alpha \in \RR^{+} \backslash \mathbb Q$ is called a \emph{Liouville number} if for all $k \in \mathbb N$ there are $p_{k},q_k \in \mathbb N$ such that 
\begin{equation}\label{eq:Liouville}
\vert \alpha-p_{k}/q_{k} \vert <q_k^{-1} k^{-q_{k}}.    
\end{equation}
For such numbers, it is well-known that quasi-periodic Schrödinger operators do not exhibit point spectrum. As the next proposition shows, this holds true for our matrix-valued discrete operators, when $1/L$ is Liouville. 
\begin{prop}
\label{prop:Liouville}
Let $1/L$ be a Liouville number, then the Hamiltonian $H_w(\theta,\phi,\vartheta)$ does not have any point spectrum. In particular, if in addition $ L_4(1/L, A^{\mathscr E,\theta,w})>0$\footnote{This holds e.g.~under the assumptions of Proposition  \ref{prop:LEsfull}.}, then the spectrum of the Hamiltonian $H_w(\theta,\phi,\vartheta)$ is purely singular continuous.
\end{prop}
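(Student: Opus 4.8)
The plan is to adapt the classical Gordon-type argument to the matrix-valued transfer cocycle $A^{\mathscr E,\theta,w}$. The key structural fact we exploit is that the potential $V_w(\vartheta+\tfrac{\bullet}{L})$ is a smooth (indeed real-analytic) function of its argument, hence Lipschitz, so that replacing $1/L$ by a rational approximant $p_k/q_k$ perturbs the transfer matrices over a block of length $q_k$ by an amount controlled by $q_k\cdot|1/L-p_k/q_k| < k^{-q_k}$, which is super-exponentially small. First I would fix an energy $\mathscr E$ in the spectrum and suppose, for contradiction, that $\psi\in\ell^2(\ZZ;\CC^4)$ is a normalized eigenfunction, $H_w\psi=\mathscr E\psi$. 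Writing $\Psi_n=\begin{pmatrix}\psi_{n+1}\\ \psi_n\end{pmatrix}\in\CC^8$, we have $\Psi_n = A_n^{\mathscr E,\theta,w}(\vartheta)\Psi_0$ with the cocycle iterates from \eqref{eq:conv_form}.

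The heart of the argument is the three-block trick: since $p_k/q_k$ is (eventually) the period of the ``frozen'' operator, the transfer matrix $B_k := A_{q_k}^{\mathscr E,\theta,w}$ built from the rational rotation satisfies the Cayley--Hamilton-type identity that for any $B\in\SL(8,\CC)$ (or at least any matrix with $\det=1$) one cannot have $\|B^{j}v\|$ small simultaneously for $j\in\{-1,1\}$ and $\|B^{2}v\|$ small, because $B^{2}-(\tr_{\mathrm{something}})B + \dots$; more precisely I would invoke the standard Gordon lemma in the form: there is a universal constant $c>0$ such that for any $B\in GL(d,\CC)$ with $|\det B|=1$ and any unit vector $v$, $\max\{\|B v\|,\|B^{-1}v\|,\|B^{2}v\|\}\ge c$ (this follows from $B^{2}v$, $Bv$, $v$ being linearly related through the characteristic polynomial only in dimension $2$, so in higher dimension one instead iterates: among $\|B^{j}v\|$ for $j=1,\dots,d$ at least one is $\ge$ a constant, using $|\det B|=1$ and submultiplicativity). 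Then I replace $B_k$ by the genuine cocycle iterate $A_{q_k}^{\mathscr E,\theta,w}(\vartheta+j q_k/L)$ for $j=0,\pm1$, paying the super-exponentially small Lipschitz error $k^{-q_k}$ times the exponential growth bound $(2+|\mathscr E|+w)^{q_k}$ from \eqref{eq:op_norm}; for $k$ large the product still tends to $0$. Combining, $\max_{j\in\{q_k,2q_k,-q_k\}}\|\Psi_j\|\ge c/2$ along the subsequence $q_k\to\infty$, contradicting $\Psi_n\to0$, which is forced by $\psi\in\ell^2$.

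For the ``in particular'' clause: once point spectrum is excluded, it suffices to rule out absolutely continuous spectrum. By the Kotani--Simon characterization \eqref{eq:Kotani-Simon}, the essential support of the a.c. spectrum of multiplicity $2j$ is $S_j=\{\mathscr E: \text{at least }2j\text{ of the }\gamma_i\text{ vanish}\}$; in particular the a.c. spectrum is supported on $S_1=\{\mathscr E:\gamma_4(1/L,A^{\mathscr E,\theta,w})=0\}$ (recall the pairing $\gamma_{n+4}=-\gamma_n$ from the symplectic lemma, so $\gamma_4$ is the smallest nonnegative exponent and its vanishing is necessary for any a.c. spectrum). Under the hypothesis $\gamma_4(1/L,A^{\mathscr E,\theta,w})>0$ for all relevant $\mathscr E$ — which, as the footnote notes, holds in the regime of Proposition \ref{prop:LEs} — this set is empty, so $\sigma_{\mathrm{ac}}(H_w)=\emptyset$. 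Together with the absence of point spectrum, the spectral theorem gives $\sigma(H_w)=\sigma_{\mathrm{sc}}(H_w)$, i.e. the spectrum is purely singular continuous.

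The main obstacle I anticipate is the precise bookkeeping in the matrix Gordon lemma: in the scalar ($d=2$) case one uses that $B$ and $B^{-1}=(\tr B)\,\mathrm{id}-B$ differ by a scalar times $\mathrm{id}$, which fails for $8\times8$ symplectic matrices. The clean fix is to prove instead the weaker but sufficient statement that $\prod_{j=1}^{d}\|B^{j}v\|\ge |\det B|^{\,d(d-1)/2}\cdot(\text{const})$ is false to derive — more carefully, one shows $\max_{1\le j\le d}\|B^{j}v\| \ge \tfrac12$ whenever $|\det B|\ge 1$ and $\|v\|=1$, by contradiction: if all were $<\tfrac12$ then the Krylov vectors $v,Bv,\dots,B^{d-1}v$ span a subspace on which $B$ has operator norm controlled, forcing $|\det B|<1$. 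One must then match the number of ``shifted blocks'' needed ($j=1,\dots,d=8$, i.e. positions $q_k,2q_k,\dots,8q_k$, or a symmetric version around $0$) against the decay $k^{-q_k}(2+|\mathscr E|+w)^{8 q_k}\to0$, which still works since the Liouville decay beats any fixed exponential. I would also need to check the rational-approximation step is uniform in $\theta$ and $\phi$ (it is, since those parameters enter the cocycle analytically and boundedly), and that the Liouville condition with the $k^{-q_k}$ rate in the paper's definition is exactly what is consumed here.
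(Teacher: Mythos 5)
Your overall architecture (rational approximation with the Liouville error $k^{-q_k}$ beating the exponential growth bound \eqref{eq:op_norm}, a Gordon-type block lemma applied to the frozen $q_k$-periodic transfer matrix, and Kotani--Simon \eqref{eq:Kotani-Simon} for the ``in particular'' clause) is exactly the paper's route. However, the key finite-dimensional lemma, in both forms you propose, is false, and this is a genuine gap. The forward-only version, $\max_{1\le j\le d}\Vert B^j v\Vert\ge \tfrac12$ for $\vert\det B\vert\ge 1$ and $\Vert v\Vert=1$, already fails for $d=2$: take $B=\operatorname{diag}(t,1/t)$ and $v=e_2$ with $t$ large, so $\Vert B^jv\Vert=t^{-j}$ for all $j\ge 1$ while $\det B=1$. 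Your Krylov-space argument breaks precisely here: the Krylov space is a proper $B$-invariant subspace, and the determinant of $B$ restricted to it is unrelated to $\det B$, since the compensating expansion can live in a complementary invariant subspace never visited by the orbit of $v$. The three-block version $\max\{\Vert Bv\Vert,\Vert B^{-1}v\Vert,\Vert B^2v\Vert\}\ge c$ with a universal $c$, which you invoke for $\operatorname{GL}(d,\CC)$ with $\vert\det B\vert=1$, is also false as soon as $d\ge 3$: take $B\in\operatorname{GL}(3,\CC)$ with $Be_1=\epsilon^{-2}e_3$, $Be_2=\epsilon e_1$, $Be_3=\epsilon e_2$ (so $\det B=1$) and $v=e_3$; then $\Vert Bv\Vert=\epsilon$, $\Vert B^2v\Vert=\Vert B^{-1}v\Vert=\epsilon^2$, all arbitrarily small. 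Padding with an identity block embeds this in dimension $8$, so neither the symplectic-type structure nor $\det=1$ rescues a three-block statement.

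The correct statement --- and the one the paper proves --- comes from Cayley--Hamilton in dimension $8$: writing $\sum_{i=0}^{8}c_iT^i=0$, normalizing the largest coefficient to $1$ and dividing by the corresponding power of $T$, one gets for every unit vector $v$
\[
\max\bigl(\Vert Tv\Vert,\dots,\Vert T^{8}v\Vert,\Vert T^{-1}v\Vert,\dots,\Vert T^{-8}v\Vert\bigr)\ \ge\ \tfrac18 ,
\]
i.e.\ you must use \emph{both} positive and negative powers, up to the full dimension. Concretely, this means inspecting the solution at sites $n\in\pm\{1,\dots,8\}q_k$ rather than only at $q_k,2q_k,-q_k$ (or only forward); this costs nothing in your perturbation step, since $k^{-q_k}e^{8C'q_k}\to 0$, and it is harmless for the contradiction because an $\ell^2$ eigenfunction decays at both $\pm\infty$. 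Your ``symmetric version around $0$'' aside points in the right direction, but as written you neither state nor prove it, and the proof sketch you do give would not yield it. The Kotani--Simon part of your argument is fine and matches the paper.
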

 \begin{proof}
Given $k\in\mathbb N$, let $p_k,q_k$ be approximates of $1/L$ as in \eqref{eq:Liouville}. Then a simple calculation shows that for some universal $C>0$ we have the operator norm estimate
\begin{equation}\label{eq:firstbound}
\Big\Vert A^{\mathscr E,\theta,w}(\vartheta+i/L)-A^{\mathscr E,\theta,w}(\vartheta+ip_k/q_k) \Big\Vert \le C \vert w \vert   \vert i/L-ip_{k}/q_{k} \vert\le 8C\vert w \vert  k^{-q_k}
 \end{equation}
 for any $i$ such that $|i|\le 8q_k$. Writing 
 $$
 S_n=\bigg\lVert \prod_{i=n}^1 A^{\mathscr E,\theta,w}(x+i/L)-\prod_{i=n}^1 A^{\mathscr E,\theta,w}(x+ip_k/q_k) \bigg\rVert
 $$
 we get from \eqref{eq:firstbound} together with \eqref{eq:normboundcocycle} that if $|n|\le 8q_k$ then
 $$S_n\le e^{C'(n-1)}8C\vert w \vert  k^{-q_k}+e^{C'} S_{n-1},$$
 where $C'=\log(2+|\mathscr E|+|w_0|+|w_1|)$. Since $|n-1|\le 8q_k$, an induction argument gives
 \begin{align*}S_n 
 &\le e^{C'(n-1)}8C|w|k^{-q_k}+e^{C'}(e^{C'(n-2)}8C|w|k^{-q_k}+e^{C'}S_{n-2})\\&=2e^{C'(n-1)}8C|w|k^{-q_k}+e^{2C'}S_{n-2}\\&\le\ldots\le (n-1)e^{C'(n-1)}8C|w|k^{-q_k}+e^{(n-1)C'}S_{1} \le ne^{C'(n-1)}8C|w|k^{-q_k}. \end{align*} Since $|w|\le e^{C'}$ we get $S_n\le n e^{C'n}8Ck^{-q_k}$ for all $|n|\le 8q_k$, which implies that
\begin{equation}\label{eq:Snestimate}
\begin{split}
 &\sup_{\vert n \vert \le 8 q_{k}}\left\lVert \prod_{i=n}^1 A^{\mathscr E,\theta,w}(x+i/L)-\prod_{i=n}^1 A^{\mathscr E,\theta,w}(x+ip_k/q_k) \right\rVert   \le\sup_{\vert n \vert \le 8q_k} 8C\vert n \vert e^{C' \vert n \vert} k^{-q_k}.
\end{split}
\end{equation}

Recall then that if $T \in \operatorname{GL}(8,\CC)$ then by the Cayley-Hamilton theorem there are $c_i$ not all zero such that $\sum_{i=0}^8 c_i T^i =0.$
Normalizing, we can assume that one of the $c_i$ is equal to one  and the other ones, $c_j$ with $j \neq i$, are at most one in absolute value, i.e. $\vert c_j \vert \le \vert c_i\vert.$
Hence, we find from 
\[v = -\sum_{j\neq i} c_j T^{j-i}v\]
for $v$ normalized that $\max_{j \neq i} \{ \Vert T^{j-i} v\Vert\} \ge 1/8$ with maximum taken over $j\in\{0,\ldots,8\}$ such that $j\ne i$.
This shows that for any normalized vector $v$ we have 
\begin{equation}
\label{eq:1/8}
 \max (\Vert T^{-8}v \Vert,\ldots,\Vert T^{-1} v\Vert,\Vert T^{1} v \Vert,\ldots,\Vert T^{8}v \Vert) \ge \max_{j \neq i} \{ \Vert T^{j-i} v\Vert\} \ge  1/8.
 \end{equation}
Applying this result, we find writing $[k] = \{1,2,\ldots,k\}$ that
\[ \begin{split}
\max_{n \in [8q_k]}&\left\{\Vert (\psi_n,\psi_{n-1}) \Vert, \Vert (\psi_{-n+1},\psi_{-n}) \Vert\right\} \\
=\max_{n \in [8q_k]}&\left\{\left\Vert \prod_{i=n}^1 A^{\mathscr E,\theta,w}(x+ip_k/q_k)(\psi_1,\psi_0)^\intercal \right\rVert, \left\Vert \left(\prod_{i=n}^1 A^{\mathscr E,\theta,w}(x+ip_k/q_k)\right)^{-1}(\psi_1,\psi_0)^\intercal \right\rVert\right\} \\
&\ge \tfrac{1}{8}\Vert (\psi_1,  \psi_0) \Vert.
\end{split}\]

Here we used \eqref{eq:1/8} applied to $T=\prod_{i=q_k}^1 A^{\mathscr E,\theta,w}(x+ip_k/q_k)$ since $A^{\mathscr E,\theta,w}(x+ip_k/q_k)$ is a $q_k$ periodic matrix in $i$.
Together with \eqref{eq:Snestimate} this gives
\[ \limsup_{\pm n \to \infty} \frac{\Vert (\psi_{n+1}, \psi_{n})\Vert }{\Vert (\psi_1, \psi_{0})\Vert} \ge \frac{1}{8}.\]

This shows that if $1/L$ is a Liouville number, the operator does not have any point-spectrum. The absence of absolutely continuous spectrum in case of positive LEs follows immediately from Kotani-Simon theory \eqref{eq:Kotani-Simon}. 
\end{proof}

\section{Diophantine moir\'e lengths and Anderson localization}
\label{sec:AL}
We saw in the previous section that for rational numbers and irrational moir\'e length scales $1/L$ that are close to rational ones (Liouville numbers), the Hamiltonian does not exhibit any point spectrum. We will now focus on moir\'e lengths $1/L$ described by real numbers on the opposite end, satisfying for some $t>0$ a diophantine condition, $1/L \in \operatorname{DC}_t$.

 Recall that $\alpha \in\mathbb{R}^d$ is called {\it Diophantine} if there is a $t>0$ such that $\alpha \in {\operatorname{DC}}_t^d$, where
\begin{equation}\label{dio}
{\operatorname{DC}}^d_t:=\left\{\alpha \in\mathbb{R}^d:  \inf_{j \in \mathbb{Z}}\left|\langle n,\alpha  \rangle - j \right|
> \frac{t}{|n|^{d+1}},\quad \forall \, n\in\mathbb{Z}^d\backslash\{0\} \right\},
\end{equation}
where $\langle n,\alpha  \rangle $ is the Euclidean inner product.
In case that $d=1$ we will usually just drop the superscript.

We present one method in Subsection \ref{sec:Dioph} originally due to Bourgain \cite[Chapter $10$]{B}, which applies to the Hamiltonian \eqref{eq:Hamiltonian} in a very general sense and one more refined approach  in Subsection \ref{sec:AVAL} for a modified version of the anti-chiral Hamiltonian that goes back to ideas of Jitomirskaya \cite{J99}. The latter approach has been originally introduced for the almost Mathieu operator and applies to a larger range of moir\'e length scales. The first approach has been extended to the matrix-valued setting by Klein \cite{K17}. Both approaches also imply dynamical localization
\begin{equation}
    \label{eq:dyn_loc}
 \sup_{t>0} \left( \sum_{n \in \mathbb Z} (1+n^2) \Big\vert (e^{-itH} \psi)(n) \Big\vert^2 \right)^{1/2} < \infty,
 \end{equation}
 where $e^{-itH}$ is the time-evolution operator of the quantum dynamics,
as explained in \cite[Chapter 10]{B},  see Figure \ref{fig:localization}. 
The above diophantine condition \eqref{dio}, which appears naturally in the localization proofs in \cite{B,K17}, applies to a set of full measure of real numbers.

\begin{figure}
\includegraphics[width=6cm]{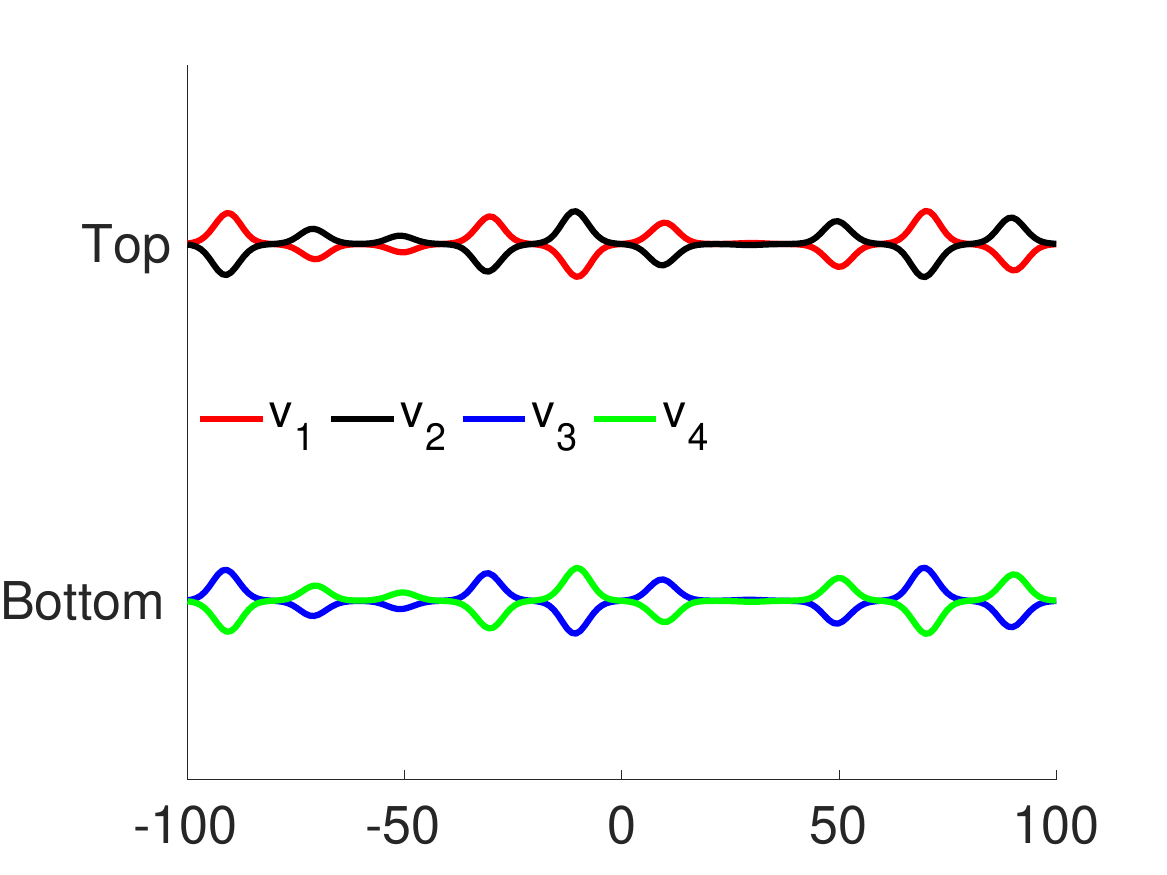} \includegraphics[width=6cm]{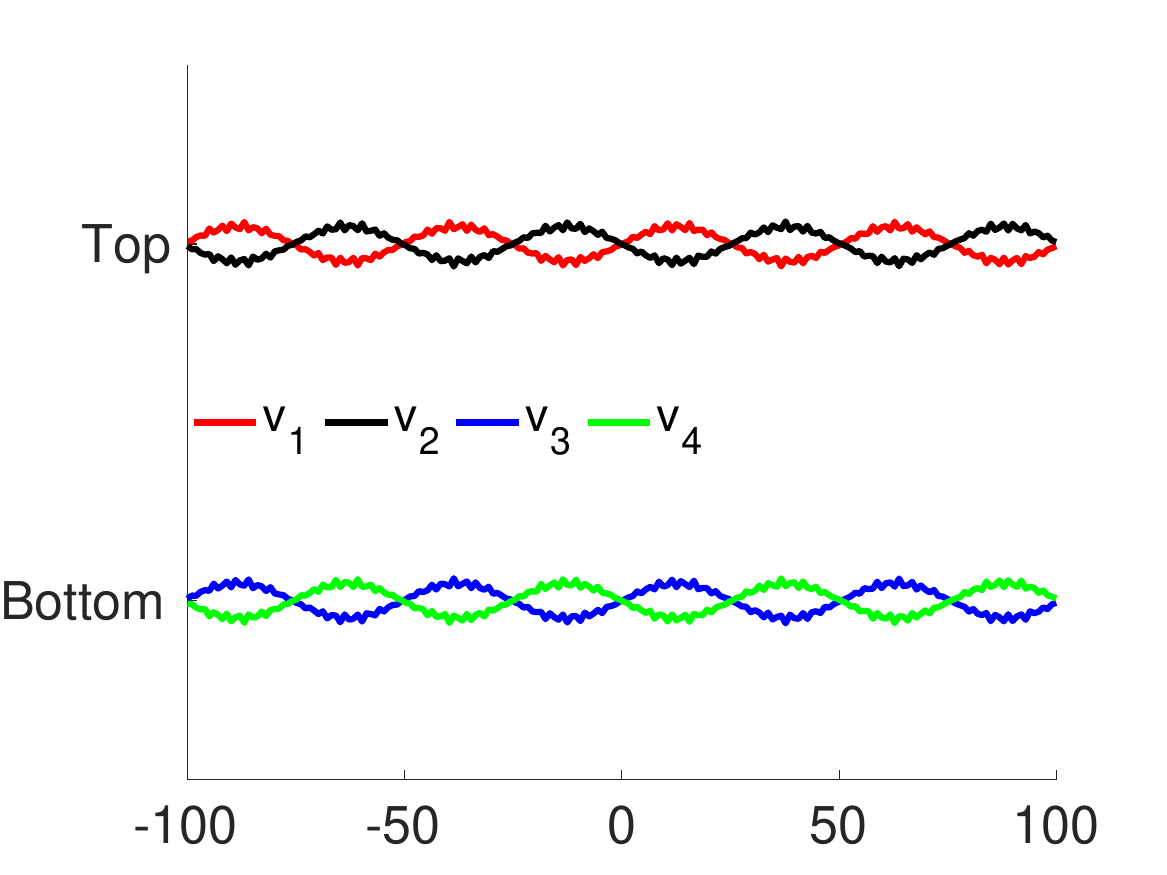} \\
\includegraphics[width=6cm]{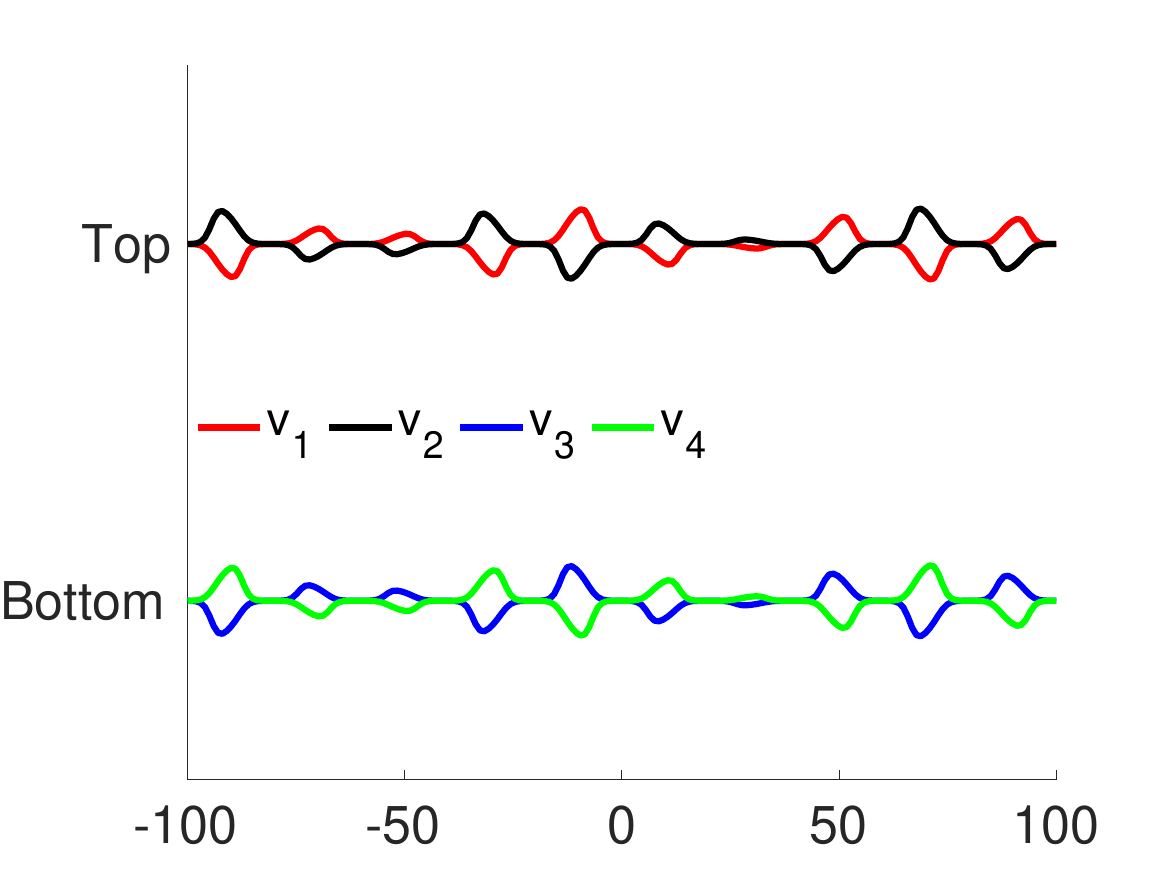} \includegraphics[width=6cm]{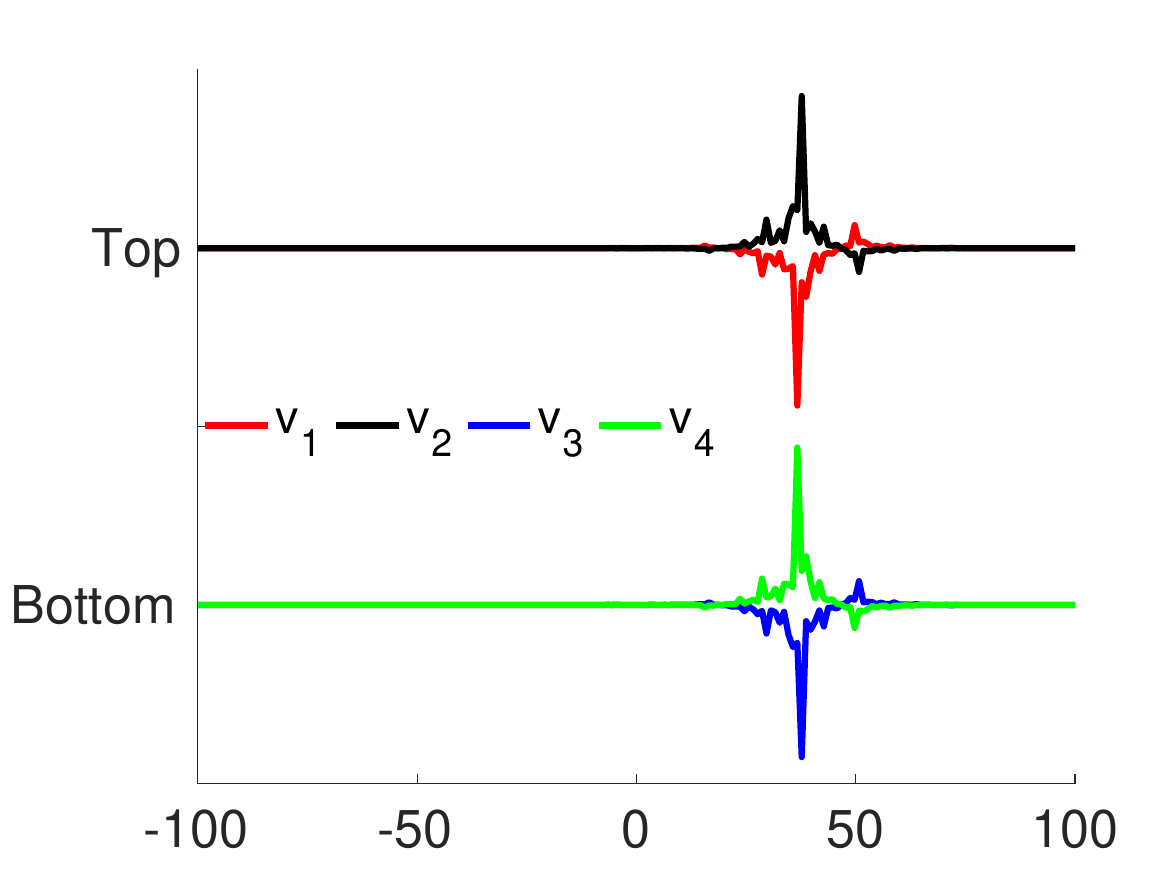}
\caption{\label{fig:localization}Lowest eigenfunction of chiral Hamiltonian restricted to interval $\{-100,-99,\ldots,100\}$. Figures on the left are for \emph{rational} length scales $L=20,$ whereas on the right we study the strongly \emph{irrational} (diophantine) $L=1/\text{golden mean}.$ The top figures correspond to $w_0= \frac{3}{2}$, the bottom ones to $w_0=6.$}
\end{figure}

To start, we shall now recall the definition of \emph{generalized eigenfunctions} which characterize the spectrum by the Schnol-Simon theorem \cite[Theorem $2.9$]{CFKS} whose proof can be easily adapted to matrix-valued Hamiltonian. See also \cite[Theorem 4.1]{jm} for a clear statement.
\begin{defi}\label{def:genEV}
$\mathscr E \in \mathbb{R}$ is called a \emph{generalized eigenvalue}  if there is a \emph{generalized eigenfunction} $u: \ZZ \to \CC$ with $\vert u_n \vert \lesssim \langle n \rangle$ satisfying $H_w(\theta,\phi,\vartheta) u = \mathscr E u.$
\end{defi}
To establish Anderson localization it therefore suffices to show that any generalized eigenfunction decays exponentially and the usual proof proceeds via Green's function estimates showing exponential decay for the off-diagonal entries of the Green's function. 
More precisely, one aims to show that 
$ \vert u_n \vert \le e^{-c\vert n \vert} \text{ as } \vert n \vert \to \infty.$
Our main result of this section is
\begin{theo}
\label{theo:Aloc1}
 Let the coupling parameter $w$ in the Hamiltonian $H_w(\theta,\phi,\vartheta)$ be given by $w=(w_0,w_1) = \rho \cdot(\kappa_0,\kappa_1)$ for fixed $(\kappa_0,\kappa_1) \in \mathbb R_0^+ \times  \mathbb R_0^+$ with $\kappa_0\ne \kappa_1$. Then, for fixed $\vartheta\in\mathbb{T}$ and  $\rho >0$ sufficiently large the Hamiltonian $H_w(\theta,\phi,\vartheta)$ satisfies for a conull set of reciprocal length scales $L \in \mathbb R^+$ Anderson localization, i.e., pure point spectrum with exponentially decaying eigenfunctions and also dynamical localization \eqref{eq:dyn_loc}.
\end{theo}

\subsection{Preliminaries}
In the sequel, we shall write for a block matrix $A \in \mathbb C^{4n \times 4n}$ where $n \in \mathbb N$ 
$$A = (A_{\gamma,\gamma'})_{\gamma,\gamma' \in [4n]} = (A(i,j))_{i,j \in [n]}$$
with $A(i,j) \in \CC^{4 \times 4}$ being themselves matrices, whereas $A_{\gamma,\gamma'}$ are scalars. Recall that $[n]:=\{1,\ldots,n\}.$

Let $\mathcal N=[n_1,n_2] = \{n \in \mathbb Z: n_1 \le n \le n_2\},$ and define two canonical restrictions
\[
\begin{split}
P_{\mathcal N}^- &= \begin{pmatrix}  0_{\ell^2((-\infty,n_1-1];\CC^4)} & I_{\ell^2(\mathcal N;\CC^4)} &  0_{\ell^2([n_2+1,\infty);\CC^4)} \end{pmatrix} \text{ and } \\
P_{\mathcal N}^{+} &= \begin{pmatrix}  0_{\ell^2((-\infty,n_1-2];\CC^4)}  & 0_2\oplus I_{2} &I_{\ell^2([n_1,n_2-1];\CC^4)} &  I_2 \oplus 0_{2} & 0_{\ell^2([n_2+1,\infty);\CC^4)}  \end{pmatrix}.
\end{split}
\]
 Thus, $P_{\mathcal N}^{+}$ is shifted by two components compared to $P_{\mathcal N}^-.$ The operator $P_{\mathcal N}^- $ is just the projection onto $\CC^4$-valued elements on $\mathcal N.$ On the other hand, $P_{\mathcal N}^+ $ projects onto $\CC^4$-valued elements on $[n_1,n_2-1]$ and in addition the last two components at $n_1-1$ and the first two components of $n_2.$ 

We shall omit the $-$ sign in the following, when there is no need to work with both $+$ and $-$.
In the case that $\mathcal N=[1,N]$ we also write $[N]$ instead of $\mathcal N$ and introduce
\begin{equation}\label{pro+-}
H_w^{\pm, \mathcal N}(\theta,\phi,\vartheta):= P^{\pm}_{\mathcal N} H_w(\theta,\phi,\vartheta) P^{\pm}_{\mathcal N}.
\end{equation}
The Hamiltonian $H_w^{-, \mathcal N}$ is obviously defined on $\ell^2(\mathcal N;\CC^4).$ The Hamiltonian $H_w^{+, \mathcal N}$ is defined on $[n_1,n_2-1]$, but in addition takes the two last components of the point $n_1-1$ and the first two components at $n_2$ into account. Thus, by shifting two components, we can (and shall) also consider this one as an operator on $\ell^2(\mathcal N;\CC^4).$

Now, let $\mathscr E \notin \Spec( H_w^{\pm, \mathcal N}(\theta,\phi,\vartheta))$ and $n,m \in \mathcal N$, and define the \emph{Green's function} $ G^{\pm, \mathcal N}_w(\theta,\phi,\vartheta,\mathscr E) \in \CC^{4 \vert \mathcal N \vert \times 4 \vert \mathcal N \vert}$ by
\[  G^{\pm, \mathcal N}_w(\theta,\phi,\vartheta,\mathscr E)(n,m):=( H_w^{\pm,\mathcal N}(\theta,\phi,\vartheta)-\mathscr E)^{-1}(n,m).\]
The Green's function is a $\vert \mathcal N\vert \times \vert \mathcal N\vert$ block matrix, with blocks that are themselves $4\times 4$ matrices over $\CC.$

Let $\varphi$ be a solution to $ (H_w(\theta,\phi,\vartheta)-\mathscr E)\varphi=0$ with $\mathscr E \notin \Spec( H_{w}^{\pm, \mathcal N}(\theta,\phi,\vartheta))$, then it follows as for discrete Schrödinger operators, that for $n$ located between $n_1$ and $n_2$, we have
\[ \varphi_n = - G^{\pm, \mathcal N}_w(\theta,\phi,\vartheta,\mathscr E)(n,n_1)t(\theta)\varphi_{n_1-1} -  G^{\pm, \mathcal N}_w(\theta,\phi,\vartheta,\mathscr E)(n,n_2)t(\theta)\varphi_{n_2+1}.\]
Since $\Vert t(\theta)\Vert=1$ we then have
\begin{equation}
\label{eq:GFestm}
 \Vert \varphi_n\Vert \le \Vert  G^{\pm, \mathcal N}_w(\theta,\phi,\vartheta,\mathscr E)(n,n_1) \Vert \Vert \varphi_{n_1-1}\Vert+ \Vert  G^{\pm, \mathcal N}_w(\theta,\phi,\vartheta,\mathscr E)(n,n_2) \Vert \Vert \varphi_{n_2+1}\Vert .
 \end{equation}
This identity together with good decay estimates on the Green's function will imply the decay of generalized eigenfunctions of the Hamiltonian.

In terms of $\mathscr V_i(w,\phi,\vartheta,\mathscr  E) = t_0 + V_w(\vartheta+\tfrac{i}{L},\phi) - \mathscr E, $ we can write
\[ H_w^{[N]}(\theta,\phi,\vartheta) - \mathscr E = \begin{pmatrix} \mathscr V_1(w,\phi,\vartheta,\mathscr E) & t(\theta) & 0& \cdots & 0\\ 
t(\theta) &\mathscr V_2(w,\phi,\vartheta,\mathscr  E) & t(\theta) &  \ddots &\vdots \\
0 & t(\theta) &\mathscr V_3(w,\phi,\vartheta,\mathscr E)& \ddots  & 0 \\
\vdots & \ddots & \ddots & \ddots & t(\theta)\\
0 & \cdots & 0& t(\theta)  &   \mathscr V_n(w,\phi,\vartheta,\mathscr E)
\end{pmatrix}.\]
Each entry of this block matrix is a $4 \times 4$ matrix and $H_w^{[N]}$ is a matrix of size $4N \times 4N.$ As mentioned above, we write $H_w^{[N]}$ instead of $H_w^{-,[N]}=H_w^{-,\mathcal N}$ for brevity.

\subsection{Almost sure Anderson localization}
\label{sec:Dioph}

We start by introducing the minors
\[
 \mu^{[N]}_{(\alpha,\alpha')}(\theta,\phi,\vartheta,w,\mathscr E):=\operatorname{det}(((H_w^{[N]}(\theta,\phi,\vartheta)-\mathscr E)_{\beta,\beta'})_{\beta\ne\alpha,\beta'\ne\alpha'}).
 \]
The importance of the minors is due to Cramer's rule which in our case, for a self-adjoint real-valued matrix reads
\begin{equation}\label{eq:Cramer}
\begin{split}
(H_w^{[N]}(\theta,\phi,\vartheta)-\mathscr E)^{-1}_{\alpha,\alpha'} = \frac{(-1)^{\alpha+\alpha'}\mu^{[N]}_{(\alpha,\alpha')} (\theta,\phi,\vartheta,w,\mathscr E)}{\operatorname{det}(H_w^{[N]}(\theta,\phi,\vartheta)-\mathscr E)}.
\end{split}
\end{equation}

For $\gamma \in [4n]$ we introduce $n(\gamma) \in \mathbb N$ such that $\gamma = 4n(\gamma)+r \text{ with }-4 \le r < 0.$
Then, by \cite[Proposition $2$]{K17} there is $C<\infty$ such that for all $N$, all $\alpha, \alpha' \in [4N]$, $|\mathscr E|\leq C\rho$ and all $\vartheta,\phi,\theta,$ we have
\begin{equation}
    \label{eq:Klein_bound}
 \frac{1}{4N} \log\big\vert \mu^{[N]}_{(\alpha,\alpha')}(\theta,\phi,\vartheta,w,\mathscr E)\big\vert \le \left(1 - \frac{\vert n(\alpha)-n(\alpha')\vert}{4N}\right)\log\rho +C,
 \end{equation}
where $\rho$ is as in Proposition \ref{prop:LEsfull}.

Next, we turn our attention to 
deviations of the ergodic mean in the Thouless formula.
In the following, let \[u_N(x):=u_N(\theta,\phi,\mathscr E,x):=\frac{\log {\lvert\det(H_w^{[N]}(\theta,\phi,x)-\mathscr E)\rvert} }{4N}\] and define the set $\mathcal B_N^M(1/L) :=\mathcal B_N^M(1/L,\theta,\phi,\mathscr E)$ by
\[ \mathcal B_N^M(1/L) :=
\left\{\vartheta \in \mathbb T: \frac{1}{M} \sum_{j=0}^{M-1} u_N(\vartheta+j/L) \le (1-\delta)  \log\rho \right\}\]
for fixed $\mathscr E$ and $1/L \in \operatorname{DC}_t,$ with $\operatorname{DC}_t$ defined in \eqref{dio}. We then have the following proposition, similar to \cite[Proposition 7.19]{B} and \cite[Proposition $4$]{K17}, which shows that the set of bad frequencies at which the Green's function has no good a priori decay properties is small.
\begin{prop}
\label{prop:good_decay}
Fix $t>0$ and let $1/L \in \operatorname{DC}_t.$ For any $N,M$ large enough, the set $\mathcal B_N^M(1/L)$ is exponentially small in M, such that there is $a>0$ such that $\vert \mathcal B_N^M(1/L) \vert <e^{-M^a}$ and is a semi-algebraic set\footnote{See \cite[Chapter $9$]{B} for a comprehensive definition of this concept.} of degree $\mathcal O(N^2M).$ 
Moreover, there is $p \in \mathbb N$ such that for all $\vartheta$
\[ \left\lvert\left\{ 0 \le n <N^p: (H_w^{[N]}(\vartheta+n/L)-\mathscr E)^{-1} \text{ is not a good Green's function}\,\right\}\right\rvert \ll N^p\]
where the Green's function is called \emph{good}, if for some $\varepsilon>0$ we have
 \[\frac{\log |(H_w^{[N]}(\vartheta+n/L)-\mathscr E)^{-1}_{\alpha,\alpha'}|}{N} <-\left(\frac{\vert n(\alpha)-n(\alpha')\vert}{N}-\varepsilon \right)   \log\rho.\]
\end{prop}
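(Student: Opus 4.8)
The plan is to split the statement into two parts: first a large-deviation estimate controlling the measure of the bad set $\mathcal B_N^M$, and then a counting argument that translates this into the statement that for \emph{every} $\vartheta$, along the orbit $\{\vartheta+n/L : 0\le n<N^p\}$ only a negligible fraction of shifts can fail to produce a good Green's function. The overall architecture follows \cite[Prop.~7.19]{B} and \cite[Prop.~4]{K17}, with the modifications needed for the $4\times 4$ block structure already supplied by \eqref{eq:Klein_bound} and \eqref{eq:lower_bound}.

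\emph{Step 1 (Large deviations for the ergodic average).} Fix $\mathscr E$, $\theta$, $\phi$ and $1/L\in\operatorname{DC}_t$. The function $\vartheta\mapsto u_N(\theta,\phi,\vartheta,\mathscr E)$ is, by \eqref{eq:muN} and the block-tridiagonal determinant expansion of $H_w^{[N]}-\mathscr E$, a real-analytic function whose logarithm of the determinant has subharmonic extension to a strip; by Thouless' formula \eqref{eq:Thouless} its mean over $\mathbb T$ converges to $\gamma^4(1/L,A^{\mathscr E,\theta,w})$ as $N\to\infty$. The upper bound \eqref{eq:Klein_bound} (with $\alpha=\alpha'$) shows $u_N\le\log|w/3|+C$ pointwise, so $u_N$ is uniformly bounded above; combined with the BMO/subharmonicity machinery of \cite[Ch.~7]{B}, the sums $\tfrac1M\sum_{j=0}^{M-1}u_N(\theta,\phi,\vartheta+j/L)$ obey a large-deviation estimate: since $1/L$ satisfies $\operatorname{DC}_t$, the ergodic averaging over $M$ shifts concentrates near the mean, giving $|\mathcal B_N^M(1/L,\theta,\mathscr E)|<e^{-M^a}$ for some $a>0$ once $N,M$ are large. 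The semi-algebraic bound of degree $\mathcal O(N^2M)$ comes from observing that $u_N$ is (a log of) a polynomial in the trigonometric variables of degree $\mathcal O(N)$ in each of the $\mathcal O(N)$ matrix entries, so the determinant has degree $\mathcal O(N^2)$, and the average over $M$ shifts multiplies this by $M$; the defining inequality of $\mathcal B_N^M$ is then a single polynomial inequality of that degree, as in \cite[Ch.~9]{B}.

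\emph{Step 2 (From small measure to good Green's functions along an orbit).} Given $\vartheta\in\mathbb T$ arbitrary, we want to bound the number of $n<N^p$ for which $(H_w^{[N]}(\theta,\phi,\vartheta+n/L)-\mathscr E)^{-1}$ is not good. By Cramer's rule \eqref{eq:Cramer}, goodness of the Green's function at scale $N$ follows from two ingredients: the numerator bound \eqref{eq:Klein_bound} on the minors $\mu^{[N]}_{\alpha,\alpha'}$, which holds for \emph{all} parameters, and a lower bound on $|\det(H_w^{[N]}(\theta,\phi,\vartheta+n/L)-\mathscr E)|$ of the form $e^{(\gamma^4-\varepsilon)N}$, which is exactly the statement $u_N(\theta,\phi,\vartheta+n/L)>(1-\delta)\gamma^4$, i.e.\ $\vartheta+n/L\notin\mathcal B_N^1$. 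So it suffices to count $n<N^p$ with $\vartheta+n/L\in\mathcal B_N^1$. Here one uses that $\mathcal B_N^M$ being semi-algebraic of controlled degree, together with $|\mathcal B_N^M|<e^{-M^a}$ and the Diophantine property of $1/L$, forces the orbit $\{\vartheta+n/L\}$ to visit the smaller set $\mathcal B_N^1$ only $\ll N^p$ times: this is the standard semi-algebraic-sets-meet-arithmetic-progressions lemma of Bourgain--Goldstein--Schlag (\cite[Ch.~9]{B}), choosing $M$ a small power of $N$ and $p$ accordingly. Combining \eqref{eq:Klein_bound} with the determinant lower bound on the complement yields, via \eqref{eq:Cramer}, precisely the claimed decay
\[
\frac{\log\big( (H_w^{[N]}(\theta,\phi,\vartheta+n/L)-\mathscr E)^{-1}_{\alpha,\alpha'}\big)}{N}<-\left(\frac{\vert n(\alpha)-n(\alpha')\vert}{N}-\varepsilon\right)\gamma^4(1/L,A^{\mathscr E,\theta,w})
\]
for all but $\ll N^p$ values of $n$.

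\emph{Main obstacle.} The delicate point is the semi-algebraic complexity bookkeeping in Step 1 and its use in Step 2: one must verify that passing from the scalar Schrödinger setting of \cite{B} to the $4\times 4$ matrix-valued operator only inflates the degree by a fixed multiplicative constant (absorbed into the $\mathcal O(N^2M)$), so that the Bourgain--Goldstein--Schlag transversality lemma still applies with the same choice of exponents. This is where the precise form of the bound \eqref{eq:Klein_bound} from \cite{K17}---which already encodes the block structure---is essential, and where one must be careful that the ``shift by two components'' in the definition of $P^{\pm}_{\mathcal N}$ does not disturb the determinant asymptotics. Everything else is a routine adaptation of the arguments in \cite[Ch.~7--9]{B} and \cite{K17}.
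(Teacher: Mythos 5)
Your Step 1 is essentially the paper's argument: the measure bound on $\mathcal B_N^M$ comes from a quantitative (Diophantine) Birkhoff-type large deviation estimate for the $M$-shift averages of the fixed bounded function $u_N$ (the paper cites \cite[Theo.~6.5]{DK16} rather than the subharmonic/BMO machinery, but the mechanism is the same), and the semi-algebraic degree $\mathcal O(N^2M)$ is obtained exactly as you say, by viewing $\prod_{j<M}\det(H_w^{[N]}(\theta,\phi,\vartheta+j/L)-\mathscr E)$ as a trigonometric polynomial in $\vartheta$.

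Step 2, however, has a genuine gap. You reduce ``goodness at the phase $\vartheta+n/L$'' to $\vartheta+n/L\notin\mathcal B_N^1$ (correct, by \eqref{eq:Cramer} and \eqref{eq:Klein_bound}), and then propose to count orbit visits to $\mathcal B_N^1$ using the semi-algebraic counting lemma of \cite[Ch.~9]{B}. But that lemma needs \emph{both} a complexity bound \emph{and} an exponentially small measure bound for the specific set being counted, and your Step 1 gives smallness only in $M$, i.e.\ only for the $M$-averaged sets with $M$ large; it gives no control whatsoever on $\vert\mathcal B_N^1\vert$. Controlling $\vert\{\vartheta: u_N(\vartheta)\le(1-\delta)\gamma^4\}\vert$ would require a large deviation theorem for $u_N$ itself (exceptional sets exponentially small in a power of $N$), which is precisely the harder ingredient that the averaging trick is designed to avoid and which you never establish. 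The paper instead applies \cite[Corr.~9.7]{B} to the averaged set $\mathcal B_N=\mathcal B_N^{N^{1/2}}$ (measure $<e^{-N^{a/2}}$, degree $\lesssim N^{3/2}$), and then uses the elementary observation that $\vartheta'\notin\mathcal B_N^M$ forces $u_N(\vartheta'+j/L)>(1-\delta)\gamma^4$ for \emph{some} $j<M$, so that good Green's functions are guaranteed inside each window of length $M=N^{1/2}$ along the orbit; this is how the final ``few bad $n<N^p$'' statement is reached. To repair your argument you must either prove an $N$-scale LDT for $u_N$ (as in Klein's full treatment) or reorganize Step 2 around the averaged sets as the paper does; as written, the counting step does not go through.
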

\begin{proof}
The diophantine condition \eqref{dio} enters in the following quantitative version of Birkhoff's ergodic theorem, see \cite[Theorem $6.5$]{DK16}, \cite[p.~17]{K17}, where we have used Proposition \ref{prop:LEsfull} to estimate the Lyapunov exponents as in \eqref{eq:as_Lyap} of Remark \ref{rem:constantEV}: 
Let $1/L \in \operatorname{DC}_t$ and $M \ge t^{-2}$, then for $S = \mathcal O(\log\rho)$ there is $a>0$ such that
\begin{equation}
\label{eq:ergodic_close}
 \left\lvert \left\{\vartheta \in \mathbb T: \left\lvert \frac{1}{M} \sum_{j=0}^{M-1} u_N(\vartheta+j/L) - \int_{\mathbb T} u_N( t)  \, dt  \right\rvert > SM^{-a} \right\} \right\rvert <  e^{-M^a}.
 \end{equation}
 In other words, the set of points where the ergodic average is far away from the average is exponentially small. 
If $\vartheta$ is not in the set in \eqref{eq:ergodic_close}, then by \cite[Proposition $3$]{K17} and \eqref{eq:as_Lyap}
\begin{equation}
\begin{split}
\label{eq:good_green_function}
 \frac{1}{M} \sum_{j=0}^{M-1} u_N(\vartheta+j/L) &\ge  \int_{\mathbb T} u_N(t)  \, dt-SM^{-a} =  \log\rho (1-\mathcal O(1)M^{-a}) - \mathcal O(1)\\
 &\ge (1-\delta)  \log\rho, 
\end{split}
\end{equation}
where $\delta$ can be chosen arbitrarily small for $M$ and $N$ large enough. This implies the estimate on the measure of $B_N^M(1/L,\theta,\phi,\mathscr E).$

One can then estimate the Green's function by Cramer's rule \eqref{eq:Cramer}.
Thus,
\[ \frac{\log| (H_w^{[N]}(\theta,\phi,\vartheta)-\mathscr E)^{-1}_{\alpha,\alpha'}|}{4N} = \frac{\log|\mu^{[N]}_{(\alpha,\alpha')}(\theta,\phi,\vartheta,w,\mathscr E)|}{4N}- u_N(\theta,\phi,\vartheta).\]

Let $\vartheta \notin \mathcal B_N^M(1/L),$ then we have for some $j \in \{0,\ldots,M-1\}$ that 
\[ u_N(\vartheta+j/L) > (1-\delta)\log\rho.\]
This implies that for this choice of $j$, we have together with {\eqref{eq:Klein_bound} and \cite[Proposition $2$]{K17}}  for some $C>0$ and $\rho>0$ large enough
\begin{equation}
    \begin{split}
    \label{eq:good_estm}
        \frac{\log| (H_w^{[N]}(\theta,\phi,\vartheta+j/L)-\mathscr E)^{-1}_{\alpha,\alpha'}|}{4N} &\le\log\rho\Big(1-\tfrac{\vert n(\alpha)-n(\alpha')\vert}{4N}\Big) 
        +C- (1-\delta)   \log\rho\\
        &<-\left(\tfrac{\vert n(\alpha)-n(\alpha')\vert}{4N}-\varepsilon \right)  \log\rho
    \end{split}
\end{equation}
for some $\varepsilon$ sufficiently small. This shows that the Green's function satisfies an exponential decay estimate.
We now observe that for $\vartheta \in \mathcal B_N^M(1/L)$,
\[
\begin{split}
\prod_{j=0}^{M-1} \operatorname{det}\left(H_w^{[N]}(\theta,\phi,\vartheta+j/L) - \mathscr E\right)  \le e^{(1-\delta) 4MN  \log\rho },
\end{split}
\]
where the left hand side is a Fourier polynomial of degree at most $4N^2 M.$ Setting $M=N^{1/2}$ we see that $\mathcal B_N(1/L):=\mathcal B_N^{N^{1/2}}(1/L)$ is a semi-algebraic set of degree at most $4N^{5/2}.$ We can then use \cite[Corollary $9.7$]{B} to see that for $N_1:=N^p $ with $p>0$ large enough
\[ \left\vert \{k=0,\ldots,N_1: \vartheta+\tfrac{k}{L} \in \mathcal  B_N(1/L)\} \right\vert<N_1^{(1-\delta(L))} \]
 for some small $\delta(L)>0$ and $N$ large enough.
Thus, $\vartheta+\tfrac{k}{L} \notin  \mathcal  B_N(1/L)$ is common. This implies by \eqref{eq:good_estm} that for every $\vartheta \in \mathbb T$ we can find $n \in [0,N_1)$ such that $G_N(\theta,\phi,\vartheta+\tfrac{n}{L},\mathscr E)$ is a good Green's function by \eqref{eq:good_green_function}.
\end{proof}

\subsection{Paving good Green's functions}
To finish the localization argument one wants to cover any large interval $\mathcal N'$ by smaller intervals $\{\mathcal N_n\}$, with $\mathcal N_n=[N]+j_n$, of length $\vert \mathcal N_n \vert=N$ as discussed in the previous subsection on which the Green's function exhibits good decay properties. More precisely, let $N'=N^C \gg N \gg 1$ with $N'$ a much larger number for $C>1$ than $N$, then, as we will see, any interval $\mathcal N' \supset [\tfrac{N'}{2},2N']$ of length of order $N'$, i.e. $N' \lesssim \vert \mathcal N' \vert \lesssim N'$ can be covered by a collection $\{\mathcal N_n\}$ of length $\vert\mathcal N_n \vert=N$ such that $G_w^{\mathcal N_n}( \mathscr E)$ exhibits exponential decay as in Proposition \ref{prop:good_decay}. For simplicity, we omit $\theta,\phi,\vartheta$ in this subsection.

The paving of Green's function follows by studying $\mathcal N = \mathcal N_1 \cup \mathcal N_2$, $\mathcal N_1 \cap \mathcal N_2 = \emptyset.$ Then by the resolvent identity  \cite[p.~60]{B}
\[  G_w^{\mathcal N}(\mathscr E)  = ( G_w^{\mathcal N_1}(\mathscr E) + G_w^{\mathcal N_2}(\mathscr E) )(\operatorname{id}-(H^{\mathcal N}_w-(H^{\mathcal N_1}_w+H^{\mathcal N_2}_w))G_w^{\mathcal N}(\mathscr E)).\]
Assuming $m \in \mathcal N_1$ and $n \in \mathcal N,$ this implies \cite[p.~60]{B}
\[\vert G_w^{\mathcal N}(m,n) \vert \le \vert G_w^{\mathcal N}(m,n) \vert \delta_{n \in \mathcal N_1}+\sum_{\substack{n' \in \mathcal N_1 \\ n'' \in \mathcal N_2, \vert n'-n'' \vert=1} } \vert G_w^{\mathcal N}(m,n') \vert \vert G_w^{\mathcal N}(n'',n) \vert. \]
This estimate shows that the concatenation of good Green's function is again a good Green's function. 

The final part of the argument for localization then consists of removing the energy dependence in the exceptional set $\mathcal  B_N(1/L).$ We know that $\mathcal  B_N(1/L)$ is a small set for every fixed energy, but as the sets could be disjoint for different energies, this could imply that for example every $\vartheta$ will eventually be in one of these sets for some energy. A key observation is now that it suffices to consider a finite set of energies determined by the union of spectra of finite-rank approximations of the Hamiltonian. This restriction is sufficient, since we already know that for any $\vartheta$ the Green's function will be (eventually) good by the last point in Proposition \ref{prop:good_decay}, we just don't know if good Green's functions can be paved together for general $\vartheta.$
Indeed, we shall study sets
\begin{equation}
\label{eq:SN}
S_N(1/L)=\bigcup_{\mathscr E \in \mathcal E(\vartheta)} \mathcal B_N(1/L,\mathscr E),
\end{equation}
with $\mathcal B_N$ as defined in the proof of Proposition \ref{prop:good_decay}. To see when the Green's function exhibits good decay, we must therefore study when $\vartheta+n/L \notin S_N(1/L)$ for $\sqrt{N'} \le n \le 2 N',$ where $N'$ is the large constant from the paving argument. This property is linked to the Green's function decay by Proposition \ref{prop:good_decay}.
The set of energies considered in \eqref{eq:SN} is just the union of all finite rank approximations
$$\mathcal E(\vartheta)= \bigcup_{1\le j \le N^p} \Spec(H^{[-j,j]}_w(\vartheta)).$$
Then, by a simple union bound $\vert \mathcal E(\vartheta) \vert \le \sum_{j=1}^{N^p} 4(2j+1) = 4N^p(N^p
+2).$

Notice that since the sets $\mathcal B_N$ are of exponentially small measure, this also implies that $\vert S_N(1/L) \vert =\mathcal O(e^{-N^{a'}})$ for any $0<a'<a.$

We then consider $\mathscr S_N:=\{(1/L,\vartheta): 1/L \in \operatorname{DC} \text{ and } \vartheta \in S_N(1/L)\}$ and aim to show that by discarding a suitable zero set of $1/L$, we may ensure that $\vartheta+n/L \notin S_N(1/L)$ for some $\sqrt{N'} \le n \le 2 N'$.

One then has by \cite[Chapter $10$]{B}, invoking the semi-algebraic sets, the estimate
\[ \Omega_N:=\{1/L \in \mathbb T: (1/L, n/L) \in \mathscr S_N \text{ for } n \sim N'\} \quad\Longrightarrow\quad \vert \Omega_N \vert \le 1/\sqrt{N'}.\]
Hence, since $N'=N^C$ with $C$ large, it follows that $\Omega:=\limsup_{N \to \infty} \Omega_N$ (in the measure-theoretic sense) is of zero measure. Hence, as long as a diophantine $1/L \notin \Omega$, then $1/L \notin \Omega_N$ for $N$ large enough. This yields the localization result of Theorem \ref{theo:Aloc1}.

\subsection{Arithmetic version of Anderson localization}
\label{sec:AVAL}
We shall now show how particle-hole symmetry from Lemma \ref{lem:particlehole} can be used to establish localization along the lines of Jitomirskaya's arithmetic argument for the AMO \cite{J99}. This argument provides a stronger localization result than the technique introduced by Klein which yields Theorem \ref{theo:Aloc1}. 
Arithmetic Anderson localization gives precise description of the localization frequencies and phases. To obtain arithmetic Anderson localization, we need to carefully analyze both frequency and phase resonances, see \cite{gy,gyzh1,jl1,jl2} and the references therein for more details. As Jitomirskaya's arithmetic localization argument heavily relies on the cosine-nature of the potential and requires non-resonant tunneling phases for A and B atoms of the potential, we will have to make more restrictive assumptions on the model. In particular, the argument does not seem to extend beyond the anti-chiral model.
We thus consider the Hamiltonian in the anti-chiral limit, but make three important simplifying assumptions that all seem to be critical (up to very simple generalizations) for the argument.
\begin{assumption}
\label{ass:assumption}
In this subsection, we consider the anti-chiral Hamiltonian with the following modifications:
\begin{itemize} 
\item We set $\phi=1/4$.
\item We assume $K(\theta):=1+e^{2\pi i \theta}\tau$, where we emphasize the occurrence of only the uni-directional shift operator.
\item The tunneling potential takes the form $U(x)=\cos(2\pi x).$ 
\end{itemize}
\end{assumption}
This leaves us with the Hamiltonian 
\begin{equation}
\label{eq:Hamiltonianmod}
 H_{w_0}(\theta,\phi=\tfrac{1}{4},\vartheta) = \begin{pmatrix} 0 &K(\theta)& w_0 U(\vartheta+\tfrac{\bullet-\phi}{L}) & 0 \\ 
K(\theta)^* &0 &0 & w_0U(\vartheta+\tfrac{\bullet+\phi}{L})   \\ 
w_0 U(\vartheta+\tfrac{\bullet-\phi}{L}) &0&0 &K(\theta)\\ 
0&w_0U(\vartheta+\tfrac{\bullet+\phi}{L})   &  K(\theta)^*&0   \end{pmatrix}. 
\end{equation}
Let us first comment on the applicability of Jitomirskaya's method for matrix-valued operators. In general, this method does not seem to apply well to matrix-valued operators. However, for the particular Hamiltonian \eqref{eq:Hamiltonianmod}, the characteristic polynomial of $H_{w_0}$ restricted to $N$ lattice sites will be, as we will show, a Fourier polynomial of degree $4N.$ Since the polynomial is also even, it suffices to study this polynomial at $2N$ distinct points. The definition of the Hamiltonian with $\phi =1/4$, implies that there are natural $2N$ values of the characteristic polynomial of shifts of the matrix at which we can interpolate the characteristic polynomial of the matrix. We say $\vartheta$ is Diophantine if 
$$
\inf_{j\in\mathbb{Z}}|2\vartheta-k/2L-j/2|\geq \frac{\kappa}{(1+|k|^\tau)},  \text{ for all } k\in\mathbb{Z}
$$
and some $\kappa>0,\tau>1.$
 \begin{theo}
 \label{theo:Jitomirskaya}
 Let $w_0$ be sufficiently large and $1/L \in \mathbb{T}$ diophantine. Then, under Assumption \ref{ass:assumption}, the Hamiltonian \eqref{eq:Hamiltonianmod} exhibits Anderson localization for Diophantine $\vartheta$.
 \end{theo}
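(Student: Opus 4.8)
The plan is to adapt Jitomirskaya's arithmetic localization argument for the almost Mathieu operator \cite{J99} to the block operator \eqref{eq:Hamiltonianmod}, exploiting the particle-hole symmetry established in the first lemma of Section \ref{sec:Basic_prop} together with the cosine structure of the potential. First I would fix a generalized eigenfunction $u$ with $H_{w_0}(0,\tfrac14,\vartheta)u=\mathscr E u$ and $|u(n)|\lesssim\langle n\rangle$, whose existence characterizes $\mathscr E\in\Spec(H_{w_0})$ by the Schnol--Simon theorem. The strategy is the standard dichotomy: a point $n$ is called \emph{regular} if it is the center of an interval $\mathcal N$ of controlled length on which the Green's function $G^{-,\mathcal N}_{w_0}$ decays exponentially, and via the Poisson-type identity \eqref{eq:GFestm} regularity of $n$ forces $\|u_n\|$ to be dominated by the boundary values times an exponentially small factor. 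Localization follows once one shows that every sufficiently large $n$ is regular; the obstruction is the possible presence of \emph{resonances}, i.e. points where the phase $\vartheta+n/L$ is too close to a symmetry center of the cosine so that the finite-volume operator has an eigenvalue near $\mathscr E$.

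The key structural input — and the step that makes the matrix case tractable — is the claim already flagged before the theorem: the characteristic polynomial $P_N(\vartheta):=\det(H_{w_0}^{[N]}(0,\tfrac14,\vartheta)-\mathscr E)$ is a Fourier polynomial in $\vartheta$ of degree exactly $4N$, and moreover it is even in $\vartheta$ up to the relevant shift. I would prove the degree count by expanding the determinant of the $4N\times 4N$ block-tridiagonal matrix: each potential entry $w_0U((\cdot\pm\phi)/L)=\tfrac{w_0}{2}(e^{2\pi i(\cdot\pm\phi)/L}+e^{-2\pi i(\cdot\pm\phi)/L})$ contributes degree one in $e^{2\pi i\vartheta/L}$, there are $4N$ diagonal blocks contributing such factors, and the leading term survives because the off-diagonal $K(\theta)$-blocks are $\vartheta$-independent; the choice $\phi=1/4$ makes the two cosines $U((\cdot-\tfrac14)/L)$ and $U((\cdot+\tfrac14)/L)$ a sine/cosine pair, which is precisely the non-resonance between $A$ and $B$ sublattices that the remark before the theorem refers to. The evenness comes from the particle-hole conjugation $P_{\operatorname{ac}}$ from the first lemma combined with the reflection $\vartheta\mapsto-\vartheta$, $n\mapsto -n$, which maps $H_{w_0}^{[N]}$ to a unitarily equivalent operator, so $P_N$ depends on $\vartheta$ only through $\cos(2\pi k\vartheta/L)$, $k\le 2N$, and is therefore determined by its values at $2N$ points. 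I would then interpolate $P_N$ using the $2N$ translates $\vartheta\mapsto\vartheta+j/L$ and a Lagrange-interpolation bound (as in Jitomirskaya's Lemma on block-resonances), so that if $\mathscr E$ were an eigenvalue of two distinct shifted finite-volume operators the interpolation would force $P_N\equiv 0$, which is impossible for $w_0$ large since the leading Fourier coefficient is $\sim(w_0/2)^{4N}$ up to a combinatorial constant; hence resonant intervals are rare and well separated.

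With the degree bound in hand, the rest follows the AMO template. Using Proposition \ref{prop:LEs} (or the a priori Lyapunov lower bound $\gamma^4\ge 4\log(w_0/3)$, strictly positive once $w_0>3$) together with the Thouless formula \eqref{eq:Thouless} and the lower bound \eqref{eq:lower_bound} on $|\det(H_{w_0}^{[0,N]}-\mathscr E)|$, I would show that for a phase $\vartheta+n/L$ that is \emph{non-resonant} — quantitatively, $\operatorname{dist}(\vartheta+n/L,\{\text{symmetry centers}\})\ge e^{-\beta N}$ for suitable $\beta$ — Cramer's rule \eqref{eq:Cramer} combined with the Klein-type minor bound \eqref{eq:Klein_bound} yields $\|G^{-,\mathcal N}_{w_0}(k,k')\|\le e^{-c|k-k'|}$ on an interval $\mathcal N$ centered near $n$. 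The diophantine condition $\operatorname{DC}_t$ on $1/L$ then guarantees that resonances at scale $N$ are $N^{C}$-separated (this is the point where the quadratic diophantine condition \eqref{eq:diophantine} is used, exactly as for the AMO), so any long interval can be paved by non-resonant good blocks using the resolvent/paving identity $\vert G^{\mathcal N}(m,n)\vert\le\sum \vert G^{\mathcal N}(m,n')\vert\,\vert G^{\mathcal N}(n'',n)\vert$ from Subsection \ref{sec:AVAL}. Feeding this back into \eqref{eq:GFestm} gives $\|u_n\|\le e^{-c|n|}$, i.e. Anderson localization. The main obstacle I anticipate is not the arithmetic combinatorics but verifying the exact degree-$4N$ and evenness properties of $P_N$ in the matrix setting: one must check that the block structure of \eqref{eq:Hamiltonianmod} does not produce degree-lowering cancellations (the $2\times 2$ boundary corrections in $P^{\pm}_{\mathcal N}$ and the $\vartheta$-independent $K(\theta)$-couplings have to be tracked carefully), and that the two sublattice cosines with the $\phi=1/4$ offset really do behave like an almost Mathieu cosine with no spurious extra symmetry center — it is precisely for this reason that the argument is stated only in the anti-chiral limit and does not extend to the chiral coupling.
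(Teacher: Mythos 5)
Your overall skeleton matches the paper's: regularity/singularity of points of a generalized eigenfunction, Green's function decay via \eqref{eq:GFestm}, Cramer's rule \eqref{eq:Cramer} with the Klein minor bound \eqref{eq:Klein_bound}, Lagrange interpolation of the finite-volume characteristic polynomial, and a contradiction with the Thouless/Lyapunov lower bound \eqref{eq:lower_bound} and Proposition \ref{prop:LEs} for $w_0$ large. However, there is a genuine gap exactly at the matrix-specific step that you yourself flag as the main point: the counting of interpolation nodes against the degree of the polynomial. The determinant over $N$ sites is a Fourier polynomial of degree $4N$ in $\vartheta$; evenness (after recentering at $\vartheta+\tfrac{N-1}{2L}$) alone leaves $4N+1$ unknown coefficients, so your claim that it is ``determined by its values at $2N$ points'' does not follow from the particle-hole/reflection symmetry you invoke. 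The paper needs, and proves, an \emph{additional} symmetry $p^{\mathcal N^-}(\vartheta)=p^{\mathcal N^-}(\vartheta+\tfrac12)$ (absent for the AMO), coming from the pure-cosine potential and the bipartite block structure; only this kills the odd harmonics and reduces $p^{\mathcal N^-}$ to a polynomial $q$ of degree $2N$ in $\cos^2\bigl(2\pi(\vartheta+\tfrac{N-1}{2L})\bigr)$. Even then, two singular clusters of length $k=N$ supply only about $k+1$ integer translates $\vartheta+j/L$, which is half of the $2N+1$ nodes needed. The resolution — and the actual role of $\phi=1/4$, which you misattribute to a ``sine/cosine pair'' non-resonance — is the relation $p^{\mathcal N^+}(\vartheta)=p^{\mathcal N^-}(\vartheta-\tfrac{1}{2L})$ between the two truncations $P^{\pm}_{\mathcal N}$: it turns the shifted boundary conditions into half-integer phase shifts and yields the $2k+1$ half-step sampling points \eqref{eq:vartheta}, at which Lemma \ref{lemm:techniqual} gives the smallness. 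Your plan, interpolating only at integer translates, is short by a factor of two and the interpolation step fails.

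Two secondary points. First, the contradiction is not that smallness at the sampled phases would ``force $P_N\equiv 0$'' (smallness at finitely many points never forces identical vanishing); it is quantitative: the Lagrange bound of \cite{J99,aj} (valid when the singular clusters are at distance $d<k^{\alpha}$, $\alpha<2$) propagates the cluster bounds to the uniform estimate \eqref{eq:ub}, which contradicts the pointwise lower bound \eqref{eq:lb} furnished by \eqref{eq:lower_bound} and Proposition \ref{prop:LEs} once $w_0$ is large, so singular points repel. Second, the paper's argument involves no ``resonant phases near symmetry centers of the cosine'' and no multiscale paving of non-resonant blocks; once singular points repel, every large $n$ is regular and \eqref{eq:GFestm} directly gives $\vert u_E(n)\vert\le C\langle n\rangle e^{-(\gamma^4-\varepsilon)(\vert n\vert-1)/5}$. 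Also note that the degree-lowering ``cancellation'' you worry about is not a threat but precisely the symmetry the proof relies on.
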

 We shall now sketch the proof of Theorem \ref{theo:Jitomirskaya} emphasizing the main steps and differences compared with \cite{J99}.
 In order to flip lattices $2$ and $3$, we conjugate the Hamiltonian by $P = \operatorname{diag}(1,\sigma_1,1)$ such that $\mathscr H_{w_0}(\vartheta) := PH_{w_0}(\theta,\phi=\tfrac{1}{4},\vartheta)P$ becomes
 \[\mathscr H_{w_0}(\vartheta)  = \begin{pmatrix} 0  & w_0U(\vartheta+ \tfrac{n-\phi}{L})  & K(\theta)& 0 \\ 
 w_0U(\vartheta+\tfrac{n-\phi}{L}) &0 &0 &K(\theta)   \\ 
K(\theta)^* &0&0 & w_0U(\vartheta+\tfrac{n+\phi}{L}) \\ 
0&K(\theta)^* &w_0U(\vartheta+\tfrac{n+\phi}{L})   &  0   \end{pmatrix}. 
 \]
 
This together with definition \eqref{pro+-} implies that for $\mathcal N=[n_1,n_2]$
 \[\mathscr H^{-\mathcal N}_{w_0}(\vartheta) =\begin{pmatrix} \mathscr U_{n_1}^-(\vartheta) & T_1& 0 &0&0 & \cdots &\cdots & 0  \\
T_1^* &  \mathscr U_{n_1}^+(\vartheta) &T_2 & 0& \cdots &\cdots &\cdots&\vdots \\ 
0& T_2^*&  \mathscr U_{n_1+1}^-(\vartheta) &T_1& \ddots& \cdots& \cdots & \vdots   \\ 
\vdots & 0 & T_1^* &\mathscr U_{n_1+1}^+(\vartheta) &T_2 & 0 & 0 & \vdots \\ 
\vdots & \ddots & \ddots& T_2^*& \ddots &\ddots& 0 & \vdots \\ 
0 & \cdots & 0 &0& \ddots &  \ddots & T_2 & 0 \\
0 & \cdots & 0 &0& 0&T_2^*&  \mathscr U_{n_2}^-(\vartheta ) & T_1 \\
0 & \cdots & 0 &0& 0&0&  T_1^* & \mathscr U_{n_2}^+(\vartheta ) \\
 \end{pmatrix}    \]
 and
\[\mathscr H^{+\mathcal N}_{w_0}(\vartheta) =\begin{pmatrix} \mathscr U_{n_1}^+(\vartheta) & T_2& 0 &0&0 & \cdots &\cdots & 0  \\
T_2^* &  \mathscr U_{n_1+1}^-(\vartheta) &T_1 & 0& \cdots &\cdots &\cdots&\vdots \\ 
0& T_1^*&  \mathscr U_{n_1+1}^+(\vartheta) &T_2& \ddots& \cdots& \cdots & \vdots   \\ 
\vdots & 0 & T_2^* &\mathscr U_{n_1+1}^-(\vartheta) &T_1 & 0 & 0 & \vdots \\ 
\vdots & \ddots & \ddots& T_1^*& \ddots &\ddots& 0 & \vdots \\ 
0 & \cdots & 0 &0& \ddots &  \ddots & T_1 & 0 \\
0 & \cdots & 0 &0& 0&T_1^*&  \mathscr U_{n_2}^+(\vartheta ) & T_2 \\
0 & \cdots & 0 &0& 0&0&  T_2^* & \mathscr U_{n_2+1}^-(\vartheta ) \\
 \end{pmatrix}    \]
with $\mathscr U_n^{\pm}(\vartheta) =w_0 U(\vartheta + \frac{n\mp \phi}{L})\sigma_1$, $T_1 = I_{2}$ and $T_2 = e^{2\pi i \theta} T_1.$
Then we define the block-diagonal Pauli matrix 
\[V = \operatorname{diag}(\sigma_3,\ldots,\sigma_3) \]
which shows that 
\begin{align}\label{we1}
\mathscr H^{\pm, \mathcal N}_{-w_0}(\vartheta) =V\mathscr H^{\pm,\mathcal N}_{w_0}(\vartheta) V.
\end{align}
In addition, we have that for $\sigma_0= I_{2}$ with
\[W = \operatorname{diag}(\sigma_0,e^{2\pi i\theta} \sigma_0,\sigma_0, e^{2\pi i\theta}\sigma_0,\ldots,\sigma_0,e^{2\pi i\theta}\sigma_0)\]
that
\begin{align}\label{we2}
W^*\mathscr H^{+,\mathcal N}_{w_0}(\vartheta)W =\mathscr H^{-,\mathcal N}_{w_0}(\vartheta-1/(2L)).
\end{align}
We recall the definition of $(\gamma,k)$-regularity which we shall apply to our operator $\mathscr H_{w_0}(\vartheta).$ Note that in this case
\begin{gather*} G^{\pm, \mathcal N}_{w_0}(\vartheta,\mathscr E)_{\alpha,\alpha'}=( H^{\pm,\mathcal N}_{w_0}(\vartheta)-\mathscr E)^{-1}_{\alpha,\alpha'},
\\
 \mu^{\pm, \mathcal N}_{(\alpha,\alpha')}(\vartheta,w_0,\mathscr E):=\operatorname{det}(((H_{w_0}^{\pm, \mathcal N}(\vartheta)-\mathscr E)_{\beta,\beta'})_{\beta\ne\alpha,\beta'\ne\alpha'}).
 \end{gather*}
\begin{defi}[$(\gamma,k)$-regularity] Let $\mathscr E,\gamma \in \mathbb R$ and $k \ge 1$. We call a number $n \in \mathbb Z$ $(\gamma,k)$-regular if there is $ \mathcal N=[n_1,n_2] \subset \ZZ$, with $n \in \mathcal N$ such that
\begin{itemize}
\item $n_2 = n_1 + k-1,$ $I = \{4n_1,4n_2+3\},$
\item $\alpha =4n \in [4n_1,4n_2+3]$, $d(I,4n) > \frac{4k}{5},$
\item  either $\vert G^{-, \mathcal N}_{w_0}(\vartheta,\mathscr E)_{\alpha,\alpha'}\vert < e^{-\gamma \vert \alpha-\alpha' \vert}$ or $\vert G^{+, \mathcal N}_{w_0}(\vartheta,\mathscr E)_{\alpha,\alpha'}\vert < e^{-\gamma \vert \alpha-\alpha' \vert}$ where $\alpha' \in I.$
\end{itemize} 
\end{defi}
If $(\gamma,k)$ is not \emph{regular}, we call it \emph{singular}. In particular, for $k$ sufficiently large and $\gamma$ fixed, it is clear that any point of $y \in \mathbb Z$ such that $u(y) \neq 0$, for $u$ a generalized eigenfunction, is $(\gamma,k)$-singular.

Observe that for $\mathcal N=[0,N-1]$ the characteristic polynomial 
$$p^{\pm, \mathcal N}(\vartheta)=\det(\mathscr H_{w_0}^{\pm, \mathcal N}(\vartheta)-\mathscr E)
$$
has the property that  $p^{\pm, \mathcal N+1}(\vartheta) = p^{\pm, \mathcal N}(\vartheta + \tfrac{1}{L}).$ In addition, in the case of $\phi=1/4$, by \eqref{we2}, we have $p^{+,\mathcal N}(\vartheta) = p^{-,\mathcal N}(\vartheta - \tfrac{1}{2L})$. 
Hence by \eqref{we1}, $\vartheta \mapsto p^{-,\mathcal N}(\vartheta - \tfrac{N-1}{2L})$ is an even function that satisfies $p^{-,\mathcal N}(\vartheta)= p^{-,\mathcal N}(\vartheta+\frac{1}{2})$, which is an additional symmetry that does not exist in the case of the AMO, and therefore by the orthogonality of the Fourier basis 
\[ p^{-,\mathcal N}(\vartheta -  \tfrac{N-1}{2L}) = \sum_{j \in [0,2N]} \tilde{b}_j(L) \cos\left(4\pi j \vartheta \right) \text{ for }\tilde{b}_j(L) \in \RR,\]
such that for some new $b_j \in \RR$
\[ q\big(\cos\left(2\pi (\vartheta+  \tfrac{N-1}{2L})\right)^2\big):=p^{-,\mathcal N}(\vartheta)= \sum_{j \in [0,2N]} b_j \cos^{2j}\left(2\pi (\vartheta+  \tfrac{N-1}{2L}) \right).\]
We observe that $q$ is a polynomial of degree $2N.$

\begin{lemm}
\label{lemm:techniqual}
Suppose $n \in \mathbb Z$ is $(\gamma,k)$-singular, then for any $j$ with 
\[ n-\lceil \tfrac{3}4 k \rceil \le j \le n-\lfloor \tfrac{3}4k\rfloor+ \tfrac{k+1}{2},\]
we have that for $\mathcal N = [j,k+j]$, 
\[ \vert \operatorname{det}(\mathscr H_{w_0}^{-, \mathcal N}(\vartheta)-\mathscr E) \vert \le e^{4 k( \vert \log w_0\vert +\frac{ \gamma - 4\vert \log w_0\vert}{5} + \mathcal O(1))}.
\]
\end{lemm}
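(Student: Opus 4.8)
The plan is to transcribe Jitomirskaya's block–resolvent argument into the present matrix-valued, $\phi=\tfrac14$ setting. Fix the base phase $\vartheta$ and the energy $\mathscr E$ with respect to which $n$ is $(\gamma,k)$-singular, and write $\alpha=4n$, $I=\{4j,4(j+k)+3\}$. First I would check that the stated range of $j$ is chosen precisely so that $\mathcal N=[j,j+k]$ is an admissible window for $n$ in the definition of $(\gamma,k)$-regularity: the inequalities $n-\lceil\tfrac34k\rceil\le j\le n-\lfloor\tfrac34k\rfloor+\tfrac{k+1}{2}$ force both $n-j$ and $(j+k)-n$ to lie in $[\tfrac14k,\tfrac34k]+\mathcal O(1)$, hence $n\in\mathcal N$ and $d(I,4n)>\tfrac45k$ for $k$ large. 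Since $n$ is \emph{singular}, this admissible $\mathcal N$ fails to give a good Green's function, so there is $\alpha'\in I$ with $\vert G^{\mathcal N^{\pm}}_{w}(\vartheta,\mathscr E)_{\alpha,\alpha'}\vert\ge e^{-\gamma\vert\alpha-\alpha'\vert}$; Cramer's rule \eqref{eq:Cramer} immediately turns this into $\vert\det(\mathscr H^{\mathcal N^{\pm}}_{w_0}(\vartheta)-\mathscr E)\vert\le\vert\mu^{\mathcal N^{\pm}}_{\alpha,\alpha'}(\vartheta)\vert\,e^{\gamma\vert\alpha-\alpha'\vert}$, with $\mu_{\alpha,\alpha'}$ the corresponding minor.

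The core step is the minor estimate. Because $\mathscr H^{\mathcal N^{\pm}}_{w_0}-\mathscr E$ is block–tridiagonal with off-diagonal blocks $t(\theta)$, $\Vert t(\theta)^{\pm1}\Vert=1$, deleting the scalar row at the interior site $n$ and the scalar column at one end of $\mathcal N$ collapses the minor — up to a sign and a product of $\mathcal O(\vert\alpha-\alpha'\vert)$ coupling factors of norm $1$ — to a sub-determinant of $\mathscr H_{w_0}$ over the part of $\mathcal N$ lying beyond $n$. I would bound this sub-determinant by Klein's estimate \eqref{eq:Klein_bound} (equivalently by the a priori Thouless/Lyapunov upper bound for $\det$ over that sub-interval, the per-site rate being controlled by \eqref{eq:upper_bds}). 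Feeding this and $\vert\alpha-\alpha'\vert$ back into the Cramer inequality, the exponent becomes affine, hence monotone, in $d:=n-j$; since the admissibility only forces $\vert\alpha-\alpha'\vert>\tfrac45k$, the worst case is realized at $d$ as small as the $(\gamma,k)$-regularity distance permits, and carrying that worst case over both possible endpoints $\alpha'$ produces exactly the right-hand side of the lemma at the phase $\vartheta$. It is this monotonicity, together with the $\tfrac45$ in the regularity distance, that yields the $\tfrac15$-factor in the exponent.

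It then remains to pass from ``at the phase $\vartheta$'' to ``for all $\vartheta\in\mathbb T_1$''. Here one invokes the polynomial structure recorded just before the lemma: for each admissible $j$ and each sign, $\vartheta\mapsto\det(\mathscr H^{[j,j+k]^{\pm}}_{w_0}(\vartheta)-\mathscr E)$ is the single polynomial $q(\cos^{2}(2\pi(\vartheta+\tfrac{N-1}{2L}+\tfrac jL)))$ of degree $2N=2(k+1)$. Running $j$ over its admissible values, using both boundary conventions (linked by $p^{\mathcal N^{+}}(\vartheta)=p^{\mathcal N^{-}}(\vartheta-\tfrac1{2L})$) and the symmetries of $p^{\mathcal N^{-}}$ (evenness about $\vartheta+\tfrac{N-1}{2L}$ and period $\tfrac12$, the latter being special to $\theta=0$, $\phi=\tfrac14$ and the plain cosine), one obtains the $2N$-point interpolation grid described in the paper, on which $q$ already obeys the bound just established; the Diophantine condition on $1/L$ makes the nodes quantitatively separated, so a Lagrange/Remez interpolation bound propagates the estimate to $\sup_{[0,1]}\vert q\vert$, the interpolation constant being absorbed into the $\mathcal O(1)$ in the exponent (one also discards a measure-zero resonant set of base phases so that the grid points are genuinely distinct).

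I expect the minor estimate to be the main obstacle: one must verify that deleting a single scalar row and a single scalar column from the banded matrix really does reduce it to a bounded multiple of a sub-determinant over the complementary sub-interval, with no stray exponential factor, and then track the constants carefully enough that the worst case over the endpoint and over $d=n-j$ lands on the claimed exponent. The secondary difficulty is arithmetic: the interpolation grid is only just large enough for the degree $2N=2(k+1)$ of $q$, so every symmetry of $p^{\mathcal N^{\pm}}$ has to be brought to bear — which is exactly why the construction is restricted to $\theta=0$, $\phi=\tfrac14$, and a plain-cosine potential.
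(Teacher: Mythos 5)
Your first two paragraphs reproduce the paper's proof: the stated range of $j$ makes $\mathcal N=[j,k+j]$ an admissible window for $n$ with $d(I,4n)>\tfrac{4k}{5}$; singularity then forces the failure of the Green's function decay at one of the two boundary indices $\alpha_i=4n_i$; Cramer's rule \eqref{eq:Cramer} converts this into $\vert\det(\mathscr H^{\mathcal N^{\pm}}_{w_0}(\vartheta)-\mathscr E)\vert\le e^{\gamma\vert\alpha-\alpha_i\vert}\,\vert\mu^{\mathcal N}(\vartheta,\mathscr E)_{\alpha,\alpha_i}\vert$; the minor is bounded by Klein's estimate \eqref{eq:Klein_bound} (the paper simply invokes it, so your row/column-deletion reduction is not needed and, as stated, is slightly off --- the minor factors into a product of sub-determinants on \emph{both} sides, which is precisely what \eqref{eq:Klein_bound} already handles); and the affine dependence of the exponent on $\vert\alpha-\alpha_i\vert$ together with $\vert\alpha-\alpha_i\vert\ge\tfrac{4k}{5}$ produces the $\tfrac15$-factor. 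Up to this point your argument and the paper's coincide.

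Your third paragraph is where you diverge from the paper, and where there is a genuine gap. The paper's proof of the lemma \emph{ends} with the Cramer--Klein computation: the bound is established at the phases attached to the singular window (the clause ``for all $\vartheta\in\mathbb T_1$'' is loose phrasing), and the Lagrange interpolation is performed only afterwards, in the derivation of \eqref{eq:ub}, with the node set \eqref{eq:vartheta} drawn from \emph{two} singular points $n_1,n_2$ with $d=n_2-n_1>\tfrac{k+1}{2}$. Your plan to run the interpolation inside the lemma, starting from a single singular $n$, fails on a degree count: the admissible values of $j$ number only about $\tfrac{k+1}{2}$, so even after doubling the grid via $p^{\mathcal N^{+}}(\vartheta)=p^{\mathcal N^{-}}(\vartheta-\tfrac1{2L})$ you get on the order of $k$ nodes, whereas $q$ has degree $2N\approx 2k$ and the interpolation \eqref{eq:interp} needs roughly $2k+1$ nodes --- this shortfall is exactly why the contradiction argument must assume two singular clusters rather than one. (Secondarily, the control of the Lagrange weights at the nodes is taken from \cite[Lemma 7]{J99}, \cite[Lemma 5.8]{aj} under the hypothesis $d<k^{\alpha}$, not from a Diophantine/Remez separation argument.) So the core estimate in your proposal is correct and identical to the paper's; the all-$\vartheta$ upgrade you append is not obtainable from one singular point and is not part of the paper's proof of this lemma.
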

\begin{proof}
Let $n_1:=j$ and $n_2:=k+j$. By the definition of singularity, Cramer's rule \eqref{eq:Cramer}, and \eqref{eq:Klein_bound}, we have for $n_i \in \{n_1,n_2\}$ and $n$ as in the definition of singularity
\[
\begin{aligned}
\vert \operatorname{det}(\mathscr H_{w_0}^{ \pm, \mathcal N}(\vartheta)-\mathscr E) \vert &\le \exp{(\gamma \vert \alpha-\alpha_i \vert)}\vert \mu_{(\alpha,\alpha_i)}^{\pm,\mathcal N}(w_0,\vartheta,\mathscr E)\vert \\
&\le \exp{(\gamma \vert \alpha-\alpha_i \vert)} \exp{\left(4k\vert\log w_0\vert\left(1- \tfrac{\vert n-n_i\vert}{4k}\right) + \mathcal O(4k)\right)}\\
&\le \exp{(4k\vert\log w_0\vert)} \exp{\big((\gamma-4\vert\log w_0\vert) \vert \alpha-\alpha_i \vert + \mathcal O(4k)\big)}\\
&\le \exp{\left(4 k\left( \vert \log w_0\vert +\frac{ \gamma - 4\vert \log w_0\vert}{5} + \mathcal O(1)\right)\right)},
\end{aligned}
\]
where we used that $\vert \alpha-\alpha_i \vert \ge 4k/5$ and $\alpha_i=4n_i$ is the index that corresponds to $n_i.$
\end{proof}

Let $n_1$ and $n_2$ be both $(\gamma,k=N)$-singular, $d :=n_2-n_1>\frac{k+1}{2}$, and $x_i = n_i - \lfloor{\tfrac{3k}{4} \rfloor}.$ 
We now set, for $ \phi=1/4$
\begin{equation}
\begin{split}
\label{eq:vartheta}
\vartheta_j:= \begin{cases}
\vartheta + \tfrac{n_1 + \frac{k-1}{2} +\frac{j}{2}}{L} , &j=0,\ldots,2\lceil \frac{k+1}{2} \rceil -1 \\
\vartheta + \tfrac{n_2+ \frac{k-1}{2}+\frac{j}{2}-\lceil \frac{k+1}{2} \rceil}{L}, &j =2 \lceil \frac{k+1}{2}\rceil,\ldots,2k.
\end{cases}
\end{split}
\end{equation}
By the assumption that $d > \frac{k+1}{2}$ all $\vartheta_j$ are distinct. 
Lagrange interpolation yields then, since $p^{\mathcal N^-}$ is even with respect to $ \vartheta + \tfrac{N-1}{2L}$,
\begin{equation}
\label{eq:interp}
 q(z^2) = \sum_{j \in [0,2N]} q(\cos(2\pi \vartheta_j)^2) \frac{\prod_{l \neq j}(z^2-\cos^2(2\pi \vartheta_l))}{\prod_{l \neq j} (\cos^2(2\pi \vartheta_j) - \cos^2(2\pi \vartheta_l))}.
 \end{equation}

Finally, we have by \cite[Lemma $7$]{J99}, \cite[Lemma 5.8]{aj} for $d < k^{\alpha}$ with $\alpha<2$ that for {Diophantine $\vartheta$} and any $\varepsilon>0$ there is $K>0$ such that for $k>K$ and all $z \in [-1,1]$
\[ \left\lvert \frac{\prod_{l \neq j} (z-\cos(2\pi \vartheta_l)) }{\prod_{l \neq j} (\cos(2\pi \vartheta_j)\pm \cos(2\pi \vartheta_l))}\right\rvert \le e^{k \varepsilon}.\]
Combining this estimate with Lemma \ref{lemm:techniqual}, which applies to the above choice of \eqref{eq:vartheta}, and the interpolation formula \eqref{eq:interp}, we thus conclude that for all $\vartheta \in [-1,1]$
\begin{equation}
\label{eq:ub}
\vert p^{\pm,\mathcal N}(\vartheta) \vert \le (2k+1)e^{4 k( \vert \log w_0\vert +\frac{ \gamma - \vert 4\log w_0\vert}{5} + \mathcal O(1))}.
\end{equation}
In addition, we have by \cite[Proposition $3$]{K17} and \eqref{eq:as_Lyap} the existence of some $\vartheta_0 \in \mathbb T$  such that for $k$ large enough
\begin{equation}
\label{eq:lb}  \vert p^{\pm,\mathcal N}(\vartheta_0) \vert \ge  e^{4k (\vert \log w_0\vert-O(1))}.
\end{equation}
Having both \eqref{eq:ub} and \eqref{eq:lb} leads to a contradiction to our assumption $d<k^{\alpha}$ for two singular clusters, once $w_0$ is large enough with $\gamma=\log w_0$. Hence, singular points are far apart. Fixing an energy $\mathscr E \in \mathbb R$, and $u_{\mathscr E}$ a generalized eigenfunction to $\mathscr E$ of the operator with $u_{\mathscr E}(0) \neq 0.$ The last condition can be assumed without loss of generality, as $u_E$ may not vanish at $u_{\mathscr E}(-1)$ and $u_{\mathscr E}(0)$ at the same time. As mentioned before, since $u_{\mathscr E}(0)\neq 0$, $0$ has to be $(\log w_0,k)$ singular for $k$ sufficiently large. Hence, repulsion of singular clusters shows that any $n$ sufficiently large will be  $(\log w_0,k)$ regular for some suitable $k.$ Thus, we obtain an interval $\mathcal N =[n_1,n_2]$ of length $\vert \mathcal N \vert=k$ with $n \in \mathcal N$ such that $$ \frac{1}{5} (\vert n \vert-1) \le \vert n_i-n \vert \le \frac{4}{5}(\vert n \vert-1)$$ with $n_i \in \{n_1,n_2\}$ and the decay bound
$$ \Vert G^{-, \mathcal N}_{w_0}(\vartheta,\mathscr E)_{\alpha,\alpha_i} \Vert \le e^{-\frac{\log w_0}{2} \vert \alpha- \alpha_i \vert},\ \ \text{or}\ \ \Vert G^{+, \mathcal N}_{w_0}(\vartheta,\mathscr E)_{\alpha,\alpha_i} \Vert \le e^{-\frac{\log w_0}{2} \vert \alpha- \alpha_i \vert}.$$
Combining this with \eqref{eq:GFestm}, we find for any generalized eigenfunction $u_{\mathscr E}$
\[ \vert u_{\mathscr E}(n) \vert \le  2 C \langle n \rangle e^{-\log w_0 (\vert n \vert-1)/100}.\]
This implies exponential localization and finishes the sketch of proof of Theorem \ref{theo:Jitomirskaya}.

\section{Weak coupling and AC spectrum}
\label{sec:AC}
We now return to the usual Hamiltonian \eqref{eq:Hamiltonian} and study the regime where the coupling of the honeycomb lattices is weak and see that the AC spectrum that is present in case of non-interacting sheets persists. We saw in Theorem \ref{theo:Aloc1} in the previous section that in the strong coupling regime, the Hamiltonian exhibits Anderson localization (point spectrum) at almost every diophantine moir\'e lengths $1/L$. In this section, we show for {\it fixed} diophantine moir\'e lengths $1/L$, $H_w(\theta,\phi,\vartheta)$ has some AC spectrum if the coupling is weak enough. Our main theorem is then as follows.

\begin{theo}
\label{theo:AC}
For $\frac{1}{L}\in \operatorname{DC}_t$ and small enough coupling $\vert w \vert \le c$, for some constant $c(L)>0,$ the AC spectrum of the Hamiltonian $H_w(\theta,\phi,\vartheta)$ is non-empty.
\end{theo}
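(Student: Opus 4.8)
The plan is to reduce to the unperturbed ($w=0$) cocycle, where the two honeycomb sheets decouple and the problem becomes a direct sum of Harper-type problems, and then to argue that the absolutely continuous spectrum present there is stable under small quasi-periodic perturbation. Concretely, at $w=0$ the cocycle $A^{\mathscr E,\theta,0}$ is (conjugate to) a constant cocycle in $\SL(8,\CC)$, since $Q^{\mathscr E,\theta,0}=t(\theta)(\mathscr E-t_0)$ is $x$-independent; its eigenvalues (displayed in the commented-out formula in Section \ref{sec:EV}) come in four reciprocal pairs, and for a range of real energies $\mathscr E$ all eight eigenvalues lie on the unit circle and are distinct. For such energies all Lyapunov exponents vanish and the cocycle is uniformly hyperbolic-free — in fact it is \emph{reducible} to a constant rotation — so by the Kotani--Simon characterization \eqref{eq:Kotani-Simon} (with $S_8$ containing an interval) the unperturbed operator has purely absolutely continuous spectrum on an open set $I_0$ of energies.

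The first step is therefore to identify explicitly an open energy interval $I_0$ on which the constant cocycle $A^{\mathscr E,\theta,0}$ has all eight eigenvalues simple and on the unit circle, hence is conjugate (by a fixed matrix $B(\mathscr E,\theta)$) to $\operatorname{diag}(e^{2\pi i\rho_1},\dots,e^{2\pi i\rho_8})$ with rotation numbers $\rho_j=\rho_j(\mathscr E,\theta)$ depending analytically and (on $I_0$) non-degenerately on $\mathscr E$. The second step is a perturbative KAM/reducibility argument: for $1/L\in\operatorname{DC}_t$ and $|w|\le c(L)$ small, one wants to show that for a positive-measure (in fact, for the localized full-measure-complement) set of energies $\mathscr E\in I_0$, the perturbed cocycle $(1/L,A^{\mathscr E,\theta,w})$ remains analytically reducible to a constant elliptic cocycle, provided the rotation vector $(\rho_1,\dots,\rho_8)$ together with $1/L$ satisfies a Diophantine non-resonance condition. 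This is the matrix-valued analogue of Eliasson's theorem / Dinaburg--Sinai reducibility; since the eight eigenvalues split into reciprocal pairs and the cocycle is symplectic (Lemma on symplecticity of $A_n$), one works inside $\Sp(8,\RR)$ (after the conjugation $\Omega$) and the small-divisor conditions only need to avoid $\langle k,\rho\rangle + j/L \in \ZZ$ resonances for $|j|\le$ some finite order and $k\in\ZZ^8$. The rotation numbers move with $\mathscr E$, so excising resonant energies costs only a small-measure set, and on the complement reducibility gives vanishing Lyapunov exponents for all eight exponents, hence by \eqref{eq:Kotani-Simon} purely a.c.\ spectrum of multiplicity $8$ on a set of positive measure — in particular the a.c.\ spectrum is non-empty.

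An alternative, softer route that also proves the statement: instead of full reducibility one only needs \emph{almost reducibility} plus the fact, due to Avila--Fayad--Krikorian / Avila's global theory adapted to the matrix case in \cite{AJS15}, that a subcritical (zero-acceleration) analytic cocycle has no absolutely continuous spectrum obstruction — more precisely, one shows the complexified Lyapunov exponents $\gamma^k(1/L,A^{\mathscr E,\theta,w}(\bullet+i\varepsilon))$ stay zero for $|\varepsilon|$ small (subcriticality is an open condition and holds at $w=0$ on $I_0$, hence persists for $|w|\le c$), and then invokes the theorem that subcriticality on a set of positive measure implies a.c.\ spectrum there. I would actually present the proof this way, citing the acceleration/quantization machinery of \cite{AJS15} that is already set up in the paper, because it avoids the full KAM scheme.

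The main obstacle is the small-divisor / reducibility step: the matrix-valued (as opposed to $\SL(2,\RR)$) setting means one must control resonances among \emph{all pairs} of rotation numbers and handle possible eigenvalue collisions of the perturbed cocycle, which is exactly where standard one-dimensional quasi-periodic tools break down — the paper itself flags this in the introduction. The cleanest way around it is to lean on the symplectic structure to confine the analysis to $\Sp(8,\RR)$, use the transversality of $\mathscr E\mapsto(\rho_1(\mathscr E),\dots,\rho_8(\mathscr E))$ on $I_0$ to remove a null set of resonant energies, and otherwise quote the matrix-valued reducibility/almost-reducibility results rather than reprove them; verifying that $w=0$ genuinely lands in the subcritical, non-degenerate regime on a nonempty open interval $I_0$ (via the explicit eigenvalue formula \eqref{eq:mod_eig}) is then the only genuinely hands-on computation required.
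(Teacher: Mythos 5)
Your primary route --- perturbative reducibility of the symplectic $8\times 8$ cocycle for Diophantine $1/L$ and small $|w|$ on a positive-measure set of energies, followed by a spectral-theoretic conclusion --- is essentially the paper's strategy, but the implementation and the endgame differ. Instead of diagonalizing the free cocycle on an explicitly located elliptic interval $I_0$ and running a KAM scheme over the rotation vector, the paper quotes the positive-measure reducibility theorem of He--You (Theorem \ref{pred}) and verifies its non-degeneracy hypothesis directly on the characteristic polynomial of the free transfer matrix $L^0_E$ (the computation $g(E,u)=E^{2m}+g'(E,u)$); this sidesteps your hands-on identification of $I_0$ and the issue of eigenvalue collisions, and it restricts attention to energies in the spectrum $\Sigma_\lambda$, where the absence of uniform hyperbolicity forces the reduced constant matrix $A_E$ to have unit-circle eigenvalues. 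For the conclusion the paper does not use Kotani--Simon: reducibility gives uniformly bounded Dirichlet/Neumann solutions, and the subordinacy criterion for matrix-valued Jacobi operators of Oliveira--Carvalho (Theorem \ref{criteria}) then yields non-empty a.c.\ spectrum for \emph{every} phase, in particular for the fixed $\vartheta=0$ implicit in $H_w(\theta)=H_w(\theta,\phi,0)$. Your appeal to \eqref{eq:Kotani-Simon} only gives the statement for almost every phase; this is repairable (phase-independence of the a.c.\ spectrum for minimal dynamics, in the spirit of Last--Simon), but as written it leaves a gap for the fixed-phase operator the theorem is about.

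The ``softer'' alternative you say you would actually present does contain a genuine gap: the implication ``subcritical on a positive-measure set of energies implies a.c.\ spectrum there'' is Avila's almost-reducibility/global theory and is specific to $\SL(2,\CC)$ one-frequency cocycles. The reference \cite{AJS15} extends the quantization of acceleration and dominated-splitting machinery to general complex one-frequency cocycles, but it does not supply that spectral consequence for symplectic $8\times 8$ cocycles, and the paper's introduction explicitly flags that such global one-dimensional methods do not carry over to the matrix-valued setting. So the KAM/positive-measure reducibility route (yours, or the paper's via \cite{He-You} plus \cite{OC2021}) is not an avoidable complication --- it is the only one of your two routes that closes.
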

This result follows immediately from Theorem \ref{ac} which will be proved in this section.

Recall that  the famous Schnol's theorem says the spectrum of $H_w(\theta,\phi,\vartheta)$ is given by the closure of the set of generalized eigenvalues of $H_w(\theta,\phi,\vartheta)$. Actually, one can also characterize the AC spectrum based on more concrete descriptions of growth of the generalized eigenfunctions. The theory was first built for one dimensional discrete Schr\"odinger operators. Let $H$ be a discrete Schr\"odinger operator on $\ell^2(\mathbb{Z})$:
\[
(Hu)(n)=u(n-1)+u(n+1)+V_nu(n),\ \ n\in\mathbb{Z},
\]
where $\{V_n\}_{n\in \mathbb{Z}}$ is a sequence of real numbers (the potential). A non-trivial solution $u$ of $Hu=\mathscr Eu$ is called subordinate at $\infty$ if
$$
\lim\limits_{L\rightarrow\infty}\frac{\|u\|_L}{\|v\|_L}=0
$$
for any linearly independent solution $v$ of $Hu=Eu$, where 
$$
\|u\|_L=\left[\sum\limits_{n=1}^{[L]}|u(n)|^2+(L-[L])|u([L]+1)|^2\right]^{\frac{1}{2}},
$$
here $[L]$ denotes the integer part of $L$. The absolutely continuous spectrum of $H$, denoted by $\Spec_{ac}(H)$, has the following characterization,
\begin{align*}
\Spec_{ac}(H)=\overline{\left\{\mathscr E\in\mathbb{R}: \text{at $\infty$ or $-\infty$, $Hu=\mathscr Eu$ has no  subordinate solution}\right\}}^{ess},
\end{align*}
 known as the subordinate theory \cite{GP,JL99,LS99}. For our purpose, we will use the subordinate theory \cite{OC2021} for the following matrix-valued Jacobi operators,
\[
(J{u})(n)=D_{n-1}{u}(n-1)+D_n {u}(n+1)+V_n{u}(n),\ \ n\in\mathbb{Z},
\]
where $(D_n)_n, (V_n)_n$ are bilateral sequences of $m\times m$ self-adjoint matrices. 

We define the Dirichlet and Neumann solutions as the solutions to $J{u}=\mathscr E{u}$ that satisfy, respectively, the initial conditions
$$
\begin{cases}
\phi_0=0_m,\\
\phi_{-1}=I_m,
\end{cases}\ \ \ \ 
\begin{cases}
\psi_0=I_m,\\
\psi_{-1}=0_m.
\end{cases}
$$
where $I_m$ and $0_m$ are the m-dimensional identity and zero matrices, respectively.
\begin{theo}[\cite{OC2021}]\label{criteria}
Let, for each $r\in\left\{1,2\cdots, m\right\}$,
$$
\mathcal{S}_r=\left\{\mathscr E\in\mathbb{R}:\liminf\limits_{L\rightarrow \infty}\frac{1}{L}\sum\limits_{n=1}^L \sigma^2_{m-r+1}(\phi_n(\mathscr E))+\sigma^2_{m-r+1}(\psi_n(\mathscr E))<\infty\right\},
$$
where $\sigma_k(T)$ stands for the $k$-th singular value of $T$.
Then the set $\overline{\mathcal{S}_{r+1}\backslash\mathcal{S}_r}^{ess}$ corresponds to the absolutely continuous component of multiplicity $r$ of any self-adjoint extension of the operator $J^+$ which is $J$ restricted to ${\operatorname{Dom}}(J^+_{max}):=\{{u}\in \ell^2(\mathbb{N};\mathbb{C}^m)| J^+{u}\in \ell^2(\mathbb{N};\mathbb{C}^m)\}$ (satisfying any admissible boundary condition at $n=0$).
\end{theo}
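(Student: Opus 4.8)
The result is the block-Jacobi (strip) analogue of Gilbert--Pearson subordinacy theory \cite{GP} in the quantitative, scale-dependent formulation of Jitomirskaya--Last \cite{JL99}, and the plan is to prove it along those lines. First, for $\Im z>0$ introduce the half-line Weyl matrix $M_+(z)\in\mathbb C^{m\times m}$, defined by the property that $U_n(z):=\psi_n(z)+\phi_n(z)M_+(z)$ belongs to $\ell^2(\mathbb N;\mathbb C^m)$. Since the $D_n$ and $V_n$ are self-adjoint the transfer matrices are complex symplectic, so the standard limit-point analysis for block Jacobi operators applies: $M_+$ is a matrix Herglotz function whose representing measure is the matrix spectral measure of $J^+$ with Dirichlet boundary condition, its absolutely continuous part equals $\tfrac1\pi\Im M_+(E+i0)\,dE$, and for Lebesgue-a.e.\ $E$ the a.c.\ multiplicity of $J^+$ at $E$ equals $\rank\Im M_+(E+i0)$. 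This rank is moreover independent of the self-adjoint boundary condition at $n=0$: changing it replaces $M_+$ by $(AM_++B)(CM_++D)^{-1}$ for a fixed symplectic block matrix, and such a M\"obius transformation preserves the rank of the imaginary part (equivalently, a change of boundary condition is a finite-rank perturbation of the resolvent and hence preserves the a.c.\ part and its multiplicity). It therefore suffices to show that $\mathcal S_r$ and $\{E\in\mathbb R:\rank\Im M_+(E+i0)\ge r\}$ agree up to a Lebesgue-null set; the statement about the component of multiplicity exactly $r$ then follows by taking essential closures of the successive differences of the nested family $\{\mathcal S_r\}$ and using that a.c.\ multiplicity is an essential-support notion.

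The technical core is a matrix-valued Jitomirskaya--Last inequality relating the growth of the Dirichlet and Neumann solutions to the size of $\Im M_+$. Fix $E\in\mathbb R$, put $z=E+i\epsilon$, and abbreviate $\phi_n=\phi_n(E)$, $\psi_n=\psi_n(E)$. Using constancy of the $2m$-block Wronskian together with the first-order estimate $\|U_n(z)-U_n(E)\|\lesssim\epsilon\,\|(\phi_n,\psi_n)\|\sum_{k\le n}\|(\phi_k,\psi_k)\|$, one chooses a length scale $L=L(E,\epsilon)\asymp\epsilon^{-1}$ by a normalisation that balances the growth of the Dirichlet solution against that of the Neumann solution, and extracts a two-sided comparison: up to an $O(1)$ additive error, the number of singular values of $\Im M_+(E+i\epsilon)$ that stay bounded below along a suitable sequence $\epsilon\downarrow0$ equals the number of indices $k$ for which $\liminf_{L\to\infty}\tfrac1L\sum_{n\le L}\big(\sigma_k[\phi_n]^2+\sigma_k[\psi_n]^2\big)<\infty$. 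I would establish this directionally: at each scale diagonalise $\phi_n^*\phi_n+\psi_n^*\psi_n$, reduce each individual singular-value comparison to the scalar Jitomirskaya--Last estimate, and reassemble using the Courant--Fischer min-max description of singular values together with Weyl's interlacing inequalities. This is the step I expect to be the main obstacle: unlike in the scalar case, the natural normalising length need not be isotropic, so one must choose $L(E,\epsilon)$ --- or a direction-dependent family of scales --- carefully and control the off-diagonal interaction between the different directions.

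Finally, with the inequality of the previous paragraph in hand the dichotomy is routine. If $E\in\mathcal S_r$, i.e.\ $\liminf_L\tfrac1L\sum_{n\le L}\big(\sigma_{m-r+1}[\phi_n]^2+\sigma_{m-r+1}[\psi_n]^2\big)<\infty$, then evaluating along the realising subsequence with $\epsilon\asymp L^{-1}$ forces $\limsup_{\epsilon\downarrow0}\sigma_{m-r+1}[\Im M_+(E+i\epsilon)]>0$, which by the Fatou-type boundary behaviour of matrix Herglotz functions gives $\rank\Im M_+(E+i0)\ge r$ for a.e.\ such $E$. Conversely, off $\mathcal S_r$ the $(m-r+1)$-th singular value of $\Im M_+(E+i\epsilon)$ tends to $0$, so the rank drops below $r$ there. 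Hence $\mathcal S_r$ coincides, up to a null set, with the essential support of the a.c.\ spectrum of multiplicity at least $r$; taking essential closures of the successive differences (using that null modifications do not change $\overline{\,\cdot\,}^{ess}$) and invoking the boundary-condition independence from the first step completes the proof.
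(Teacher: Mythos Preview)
The paper does not prove this theorem: it is quoted verbatim from \cite{OC2021} and used as a black box in the proof of Theorem~\ref{ac}. There is therefore no ``paper's own proof'' to compare your attempt against.

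As to your sketch itself, the overall architecture is the right one and is indeed how \cite{OC2021} proceeds: identify the a.c.\ multiplicity with $\rank\Im M_+(E+i0)$ via the matrix Herglotz representation, establish boundary-condition independence by a M\"obius/finite-rank argument, and then link the rank to the growth of $\phi_n,\psi_n$ through a block Jitomirskaya--Last estimate. Your identification of the difficult step is accurate. The one place where your plan is genuinely underspecified is the reduction ``diagonalise $\phi_n^*\phi_n+\psi_n^*\psi_n$ at each scale and apply the scalar estimate direction by direction'': the diagonalising unitary depends on $n$, so the scalar Jitomirskaya--Last argument, which relies on comparing Ces\`aro sums of \emph{fixed} one-dimensional solutions along a length scale, does not apply verbatim after this change of frame. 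One has to either work with a fixed frame and control the resulting off-diagonal terms, or formulate the two-sided inequality directly at the level of singular values (this is what \cite{OC2021} does). Apart from this, your outline is correct and would, once that step is filled in, yield the result.
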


Thus, to characterize AC spectrum for matrix-valued Schr\"odinger operators, one needs study the singular values of the transfer matrix. In this section, we are mainly interested in the quasi-periodic case. We consider a rather general quasiperiodic model and the Hamiltonian in \eqref{eq:Hamiltonian} is one of the typical examples. For our purpose, consider the following multi-frequency matrix-valued Schr\"odinger operators,
\begin{equation}\label{eq:J}
(J_{\vartheta}{u})_n=C{u}_{n+1}+C{u}_{n-1}+ V(\vartheta+n\alpha){u}_n,
\end{equation}
acting on $\ell^2(\mathbb{Z};\mathbb{C}^m)$ where $\alpha\in(\mathbb{R}\backslash\mathbb{Q})^d$, $C$ is a $m\times m$ invertible self-adjoint matrix and $V$ is an analytic self-adjoint matrix which is 1-periodic in each variable. In particular, {$\alpha=1/L$, $C=t(\theta)$ and $V(x)=V_{w}^\phi(x)$} in the case of the Hamiltonian \eqref{eq:Hamiltonian}. 

Our approach is the so-called {\it reducibility} method, which was initially developed by Dinaburg-Sinai \cite{ds}, Eliasson \cite{e}, further developed by Hou-You \cite{hy}, Avila-Jitomirskaya \cite{aj} and Avila \cite{avila1,avila2}. Our result is based on the reducibility results in \cite{eliasson,chavaudret,gyzh,He-You} for higher dimensional quasi-periodic cocycles.

\begin{rem}
Subordinate theory can be used to give an alternative proof of Proposition \ref{prop:rational}. In fact, Theorem \ref{criteria} implies that only the end points of each spectral interval can support singular spectrum, while there are only finitely many such points, so they can only support point spectrum. Point spectrum can then be ruled out using our extension of Gordon's argument \cite{G76}, see the proof of Proposition \ref{prop:Liouville}.
\end{rem}

\subsection{Preliminaries}
Given $A \in C^\omega(\mathbb{T}^d, \operatorname{GL}(2m,\mathbb{C}))$ and rational independent $\alpha\in\mathbb{R}^d$, we define the {\it quasi-periodic $\operatorname{GL}(2m,\mathbb{C})$ cocycle} $(\alpha,A)$:
$$
(\alpha,A)\colon \left\{
\begin{array}{rcl}
\mathbb{T}^d \times \mathbb{C}^{2m} &\to& \mathbb{T}^d \times \mathbb{C}^{2m}\\[1mm]
(x,v) &\mapsto& (x+\alpha,A(x)\cdot v).
\end{array}
\right.  
$$
We denote by $L_1(\alpha, A)\geq L_2(\alpha,A)\geq\ldots\geq L_{2m}(\alpha,A)$ the Lyapunov exponents of $(\alpha,A)$ repeated according to their multiplicities, i.e.,
$$
L_k(\alpha,A)=\lim\limits_{n\rightarrow\infty}\frac{1}{n}\int_{\mathbb{T}}\log(\sigma_k(A_n(x)))dx.
$$

$(\alpha,A)$ is said to be reducible if there exist $B\in C^\omega(\mathbb{T}^d,\operatorname{GL}(2m,\mathbb{C}))$, $\widetilde{A}\in \operatorname{GL}(2m,\mathbb{C})$ such that
$$
B(x+\alpha)A(x)B^{-1}(x)=\widetilde{A}.
$$

The following are two general facts on the Lyapunov exponents, which were proved in \cite{gyzh} (see Proposition 2.2 and Proposition 2.3).
\begin{prop}\label{invariance}
Assume $(\alpha,A)\in \mathbb{T}^d\times C^0(\mathbb{T}^d,\operatorname{GL}(2m,\mathbb{C}))$, $B\in C^0(\mathbb{T}^d,\operatorname{GL}(2m,\mathbb{C})),$ and $\widetilde{A}(x)=B(x+\alpha)A(x)B^{-1}(x)$, then 
$$
L_i(\alpha,\widetilde{A})=L_i(\alpha,A),\ \ 1\leq i\leq 2m.
$$
\end{prop}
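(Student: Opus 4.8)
The plan is to exploit the telescoping identity for cocycle iterates under conjugation, namely $\widetilde A_n(x) = B(x+n\alpha)\,A_n(x)\,B(x)^{-1}$, and then to observe that the normalising factor $1/n$ in the definition of the Lyapunov exponents annihilates the bounded error introduced by $B^{\pm 1}$.

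First I would record that, since $B, B^{-1} \in C^0(\mathbb{T}^d,\operatorname{GL}(2m,\mathbb{C}))$ and $\mathbb{T}^d$ is compact, there is a constant $C \ge 1$ with $\sup_x\|B(x)\| \le C$ and $\sup_x\|B(x)^{-1}\| \le C$. Next, writing the $n$-step iterate $A_n(x) = A(x+(n-1)\alpha)\cdots A(x+\alpha)A(x)$ and substituting $\widetilde A(y) = B(y+\alpha)A(y)B(y)^{-1}$, the intermediate factors $B(y)^{-1}B(y)$ cancel pairwise, so that $\widetilde A_n(x) = B(x+n\alpha)\,A_n(x)\,B(x)^{-1}$ for every $n \ge 1$ and every $x$.

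Then I would compare singular values through exterior powers. For any $M \in \operatorname{GL}(2m,\mathbb{C})$ and $1 \le k \le 2m$ one has $\prod_{i=1}^{k}\sigma_i(M) = \|\Lambda^k M\|$, and $\Lambda^k$ is multiplicative with $\|\Lambda^k B\| \le \|B\|^k$. Hence
\[
\prod_{i=1}^{k}\sigma_i\big(\widetilde A_n(x)\big) \le C^{2k}\prod_{i=1}^{k}\sigma_i\big(A_n(x)\big),
\]
and, exchanging the roles of $A$ and $\widetilde A$ (conjugating back by $B^{-1}$), the reverse inequality holds with the same constant. Taking logarithms, summing the integral formulas defining $L_1,\dots,L_k$ — so that $\sum_{i=1}^k L_i(\alpha,A) = \lim_{n}\tfrac1n\int_{\mathbb{T}^d}\log\|\Lambda^k A_n(x)\|\,dx$, the limit existing by subadditivity (Fekete's lemma), since $A$ is continuous on the compact torus and the rotation preserves Lebesgue measure — dividing by $n$ and letting $n \to \infty$, the contribution $\tfrac{2k\log C}{n}$ vanishes, so $\sum_{i=1}^{k}L_i(\alpha,\widetilde A) = \sum_{i=1}^{k}L_i(\alpha,A)$ for every $k \in \{1,\dots,2m\}$. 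Subtracting the identity for $k-1$ from that for $k$ gives $L_k(\alpha,\widetilde A) = L_k(\alpha,A)$, which is the claim.

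There is no serious analytic obstacle here; the only points requiring care are that the argument presupposes the existence of the limits defining the $L_i$ and the recovery of the individual exponents from the partial sums $\sum_{i\le k}L_i$ via exterior powers, and that the uniform bound $C$ genuinely uses the continuity (not merely measurability) of $B$ and $B^{-1}$ on the compact torus. Everything else is the observation that conjugation changes $\log\|\Lambda^k A_n\|$ by an amount bounded uniformly in $n$.
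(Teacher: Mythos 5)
Your proof is correct. Note that the paper itself gives no argument for this proposition: it is quoted as a known fact, with a citation to Ge--You--Zhao (Propositions 2.2 and 2.3 of that reference), so there is no in-paper proof to compare against. Your telescoping identity $\widetilde A_n(x)=B(x+n\alpha)A_n(x)B(x)^{-1}$, the uniform bound on $\|B^{\pm1}\|$ from continuity on the compact torus, and the passage through exterior powers to control the partial sums $\sum_{i\le k}L_i$ (with the individual exponents recovered by subtraction, all quantities being finite since $A$ and $A^{-1}$ are bounded) constitute the standard argument for conjugation-invariance of Lyapunov exponents, and is essentially what the cited reference does; your write-up is a valid self-contained substitute for the citation.
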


\begin{prop}\label{leconstant}
If we denote the eigenvalues of $A\in \operatorname{GL}(2m,\mathbb{C})$ by $\{e^{-2\pi i \rho_j}\}_{j=1}^{2m}$, then
$$
\{L_j(\alpha,A)\}_{j=1}^{2m}=\{2\pi\Im \rho_j\}_{j=1}^{2m}.
$$
\end{prop}

Now we consider the eigen-equation $J_{\vartheta}{u}=\mathscr E{u}$ with $J_{\vartheta}$ as in \eqref{eq:J}. To obtain a first order system and the corresponding linear skew product we use the fact that $C$ in \eqref{eq:J} is invertible  and write
$$
\begin{pmatrix}
{u}_{k+1}\\
{u}_k
\end{pmatrix}
=\begin{pmatrix}
C^{-1}(\mathscr EI_m-V(\vartheta+k\alpha))& -I_m\\
I_m&0_m
\end{pmatrix}
\begin{pmatrix}
{u}_k\\
{u}_{k-1}
\end{pmatrix}.
$$
Denote
\begin{equation*}
L^{V}_{\mathscr E}(\vartheta)=\begin{pmatrix}
C^{-1}(\mathscr EI_m-V(\vartheta))& -I_m\\
I_m&0_m
\end{pmatrix}.
\end{equation*}
Note that  $(\alpha, L_{\mathscr E}^{V})$ is a symplectic cocycle. As a corollary,  the Lyapunov exponents of  $(\alpha,L_{\mathscr E}^{V})$ come in pairs $\pm L_i(\alpha,L_{\mathscr E}^{V})$ ($1\leq i\leq m$).

Let  ${\operatorname {Sp}_{2m}}(\mathbb{C})$ denote the set of $2m\times 2m$ complex symplectic matrices. Given any  $A\in C^0(\mathbb{T}^d,{\operatorname {Sp}_{2m}}(\mathbb{C}))$,  we say the  cocycle $(\alpha, A)$ is {\it uniformly hyperbolic} if for every $x \in \mathbb{T}^d$, there exists a continuous splitting $\mathbb{C}^{2m}=E^s(x)\oplus E^u(x)$ such that for some constants $C>0,c>0$, and for every $n\geqslant 0$,
$$
\begin{aligned}
\lvert A_n(x)v\rvert \leqslant Ce^{-cn}\lvert v\rvert, \quad & v\in E^s(x),\\
\lvert A_n(x)^{-1}v\rvert \leqslant Ce^{-cn}\lvert v\rvert,  \quad & v\in E^u(x+n\alpha).
\end{aligned}
$$
This splitting is invariant by the dynamics, which means that for every $x \in \mathbb{T}^d$, $A(x)E^{\bullet}(x)=E^{\bullet}(x+\alpha)$, for $\bullet=s,u$.   The set of uniformly hyperbolic cocycles is open in the $C^0$-topology.

 Let $\Sigma$ be the spectrum of $J_{\vartheta}$.  $\Sigma$ is closely related to the dynamical behavior of  the symplectic cocycle  $(\alpha,L_{\mathscr E}^{V})$.  $\mathscr E\notin \Sigma$ if and only if $(\alpha,L_{\mathscr E}^{V})$ is uniformly hyperbolic.

\subsection{Existence of AC spectrum}
Let $F$ be a bounded analytic (possibly matrix-valued) function defined on $\{\theta\in\TT:|\Im \theta|< h \}$, and let
$|F|_h= \sup_{|\Im\theta|< h} |F(\theta)|$. When $F$ is matrix-valued, we understand the absolute value on the right-hand side as the operator norm. $C^\omega_{h}(\mathbb{T};\bullet)$ denotes the
set of all these $\bullet$-valued functions ($\bullet$ will usually denote $\mathbb{C}$, $\operatorname{GL}(m,\mathbb{C})$ or
$\operatorname{SL}(2,\mathbb{R})$). Denote by $C^{\omega}(\mathbb{T};\bullet)$  the union $\bigcup_{h>0}C_h^{\omega}(\mathbb{T};\bullet)$.

In this subsection, we will prove the following theorem.
\begin{theo}\label{ac}
Let $\alpha\in {\operatorname{DC}}^d_t$ and $V$ be an analytic self-adjoint $m\times m$ matrix. Then there is $\varepsilon_0(m,\alpha,C,V)$ such that if $|V|_h<\varepsilon_0$, $\theta\in\mathbb{R}$ is the phase and $V\in C_h^\omega(\mathbb{T};\mathbb{R})$, then the ac part of the spectrum of $J_{\vartheta}$ is non-empty for any $\vartheta\in\mathbb{T}^d$.
\end{theo}

The proof is based on a positive measure reducibility theorem for higher dimensional quasi-periodic cocycles and subordinate theory for matrix-valued Schr\"odinger operators.

\begin{theo}\label{red}
Let $\alpha\in {\operatorname{DC}}^d_t$ and $V$ be an analytic self-adjoint $m\times m$ matrix. There is $\varepsilon_0(m,\alpha,C,V)$  and $\mathcal{E}\subset \Sigma$ such that if $|V|_h<\varepsilon_0$, then for any $\mathscr E\in \mathcal{E}$, $\left(\alpha, L_{\mathscr E}^{V}\right)$ is reducible. Moreover, $\left|\Sigma\backslash\mathcal{E}\right|\rightarrow 0$ as $|V|_h\rightarrow 0$.
\end{theo}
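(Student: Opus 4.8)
The plan is to regard the family $E\mapsto(\alpha,L^{\lambda V}_E)$, with $L^{\lambda V}_E$ the symplectic cocycle map in \eqref{strip2}, as small analytic perturbations of the constant cocycle $(\alpha,L^{0}_E)$, and to invoke the quantitative KAM reducibility theorems for higher-dimensional quasi-periodic symplectic cocycles of \cite{eliasson,chavaudret,gyzh,He-You} uniformly over the energies for which $L^0_E$ has at least one elliptic block whose rotation data is non-resonant relative to $\alpha$. Write $L^{\lambda V}_E(x)=L^0_E+\lambda M(x)$, where $M(x)$ is the analytic $1$-periodic matrix whose only nonzero block is the upper-left one $-C^{-1}V(x)$, and which is bounded on a fixed strip $\{|\Im x|<h\}$, say $\|M\|_h:=\sup_{|\Im x|<h}\|M(x)\|<\infty$. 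Diagonalizing $C=U\operatorname{diag}(c_1,\dots,c_m)U^{*}$ with $c_j\in\mathbb{R}\setminus\{0\}$, one computes the eigenvalues of $L^0_E$ to be the pairs $z^{\pm}_j$ with $z^{\pm}_j+(z^{\pm}_j)^{-1}=E/c_j$; for each $j$ with $|E/c_j|<2$ these equal $e^{\pm i\xi_j(E)}$ with $2\cos\xi_j(E)=E/c_j$ and $\xi_j(E)\in(0,\pi)$, while the remaining blocks are hyperbolic (cf.\ Proposition~\ref{leconstant}). Hence, on the open set $\Omega:=\operatorname{int}(\Sigma_0)\setminus\mathcal{F}$, where $\Sigma_0=\bigcup_j[-2|c_j|,2|c_j|]$ is the $\lambda=0$ spectrum and $\mathcal{F}:=\{E:\,E/c_j\in\{-2,2\}\text{ for some }j\}$ is finite, the matrix $L^0_E$ is conjugated by a fixed invertible matrix to $\bigoplus_{j\in\mathcal{J}(E)}R_{\xi_j(E)}$ ($R_\xi$ the planar rotation by angle $\xi$) together with a hyperbolic part, where $\mathcal{J}(E):=\{j:\,|E/c_j|<2\}\neq\emptyset$ is locally constant on the connected components of $\Omega$; moreover $d\xi_j/dE=-(2c_j\sin\xi_j(E))^{-1}\neq0$ on $\Omega$.

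I would then fix a non-resonance scale $\kappa\in(0,1)$ and introduce the bad set $\mathcal{R}_\kappa\subset\Omega$ of those energies at which the rotation vector $(\xi_j(E))_{j\in\mathcal{J}(E)}$ violates the Diophantine conditions required by the reducibility theorem, i.e.\ those $E$ for which there exist $n\in\mathbb{Z}^d\setminus\{0\}$, indices $j,k\in\mathcal{J}(E)$ and signs making one of
\[
\|\pm\xi_j(E)\pm\xi_k(E)-\langle n,\alpha\rangle\|,\qquad \|2\xi_j(E)-\langle n,\alpha\rangle\|,\qquad \|\xi_j(E)-\langle n,\alpha\rangle\|
\]
smaller than $\kappa(1+|n|)^{-\sigma}$, together with the measure-zero real-analytic locus where $\xi_j(E)=\pm\xi_k(E)$ for some $j,k$ with $c_j\neq c_k$. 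Each admissible combination $E\mapsto\pm\xi_j(E)\pm\xi_k(E)$ is a non-constant real-analytic function on each component of $\Omega$; the only exception is the identically vanishing $\xi_j-\xi_k$ with $c_j=c_k$, which creates no resonance because $\alpha\in\operatorname{DC}^d_t$ forces $\langle n,\alpha\rangle\notin\mathbb{Z}$ for $n\neq0$, while the case $n=0$ is the harmless block degeneracy absorbed into Eliasson's block-diagonal normal form. A standard sub-level-set estimate for real-analytic functions then gives $|\{E\in\Omega:\,\|f(E)-\langle n,\alpha\rangle\|<\delta\}|\lesssim\delta^{1/N}$ uniformly in $n$, with $N$ a bound on the order of the first non-vanishing derivative of $f$; summing over $n$ with $\sigma$ sufficiently large (depending on $d$ and $C$) yields $|\mathcal{R}_\kappa|\to0$ as $\kappa\to0$.

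For $E\in\Omega\setminus\mathcal{R}_\kappa$ the constant cocycle $(\alpha,L^0_E)$ is elliptic-hyperbolic with $\kappa$-Diophantine rotation data and $\alpha\in\operatorname{DC}^d_t$, so the cited theorems furnish $\varepsilon_0=\varepsilon_0(d,m,t,\kappa,h)>0$ such that $|\lambda|\,\|M\|_h<\varepsilon_0$ implies $(\alpha,L^{\lambda V}_E)$ is analytically reducible, with reduced constant matrix $\widetilde{A}(E)$ within $O(|\lambda|\,\|M\|_h)$ of $L^0_E$, hence still carrying its $\#\mathcal{J}(E)\geq1$ elliptic eigenvalue pairs on the unit circle; in particular $(\alpha,L^{\lambda V}_E)$ is not uniformly hyperbolic, so $E\in\Sigma_\lambda$ by the dynamical characterization of the spectrum recalled above. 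Choosing then $\kappa=\kappa(\lambda)\to0$ slowly enough that $|\lambda|<\lambda_0:=\varepsilon_0(\kappa(\lambda))/\|M\|_h$ still holds for small $|\lambda|$, and setting $\mathcal{E}_\lambda:=\Omega\setminus\mathcal{R}_{\kappa(\lambda)}$, one sees that every $E\in\mathcal{E}_\lambda$ is reducible and $\mathcal{E}_\lambda\subseteq\Sigma_\lambda$. Since the potential perturbation has operator norm at most $|\lambda|\sup_x\|V(x)\|$, we have $\Sigma_\lambda\subseteq\{E:\operatorname{dist}(E,\Sigma_0)\leq|\lambda|\sup_x\|V(x)\|\}$, and $\Sigma_0$ is a finite union of compact intervals, so $|\Sigma_\lambda\setminus\Omega|\leq C|\lambda|$; hence
\[
|\Sigma_\lambda\setminus\mathcal{E}_\lambda|\;\leq\;|\Sigma_\lambda\setminus\Omega|+|\mathcal{R}_{\kappa(\lambda)}|\;\leq\;C|\lambda|+|\mathcal{R}_{\kappa(\lambda)}|\;\longrightarrow\;0\quad(\lambda\to0),
\]
which is the asserted measure bound.

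I expect the main obstacle to be the non-uniformity of the KAM smallness threshold $\varepsilon_0$ across the energy window: it degenerates both near the band edges $\mathcal{F}$ and near the resonant set $\mathcal{R}_\kappa$, so the whole argument hinges on quantifying this trade-off, namely on letting $\kappa(\lambda)\to0$ just slowly enough that $|\lambda|\,\|M\|_h$ still beats $\varepsilon_0(\kappa(\lambda))$ while keeping $|\mathcal{R}_{\kappa(\lambda)}|\to0$; this requires an explicit dependence of $\varepsilon_0$ on $\kappa$ extracted from \cite{eliasson,chavaudret,gyzh,He-You}. A secondary technical point is the bookkeeping of degeneracies of the eigenvalues of $C$ and of the isolated coincidences $\xi_j(E)=\pm\xi_k(E)$ with $c_j\neq c_k$: these must be channeled through the block-diagonal version of Eliasson's normal form rather than excluded wholesale, and one must verify that the sub-level-set exponents $1/N$ stay summable against $(1+|n|)^{-\sigma}$.
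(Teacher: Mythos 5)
There is a genuine gap at the central step. You exclude energies only through Diophantine conditions on the rotation data $\xi_j(E)$ of the \emph{unperturbed} constant matrix $L^0_E$, and then claim that for $E\in\Omega\setminus\mathcal{R}_\kappa$ the cited reducibility theorems apply pointwise once $|\lambda|\,\|M\|_h<\varepsilon_0(\kappa)$. That is not what those theorems give, and the claim is not correct as stated: in the KAM scheme the constant part is updated at every step, and its eigenvalues drift by a total amount of order $\lambda$, which is enormous compared with the small divisors $\kappa(1+|n|)^{-\sigma}$ required at later steps (where $|n|$ ranges up to truncation scales $N_k\to\infty$). Hence an initial non-resonance condition on $L^0_E$ is not stable along the iteration, and reducibility cannot be concluded energy-by-energy from it. The results of Eliasson/He--You type are \emph{positive-measure} reducibility statements precisely for this reason: the excluded parameter set is built iteratively, depends on the perturbation, and its measure is controlled not by an a priori Diophantine condition on the unperturbed eigenvalues but by a transversality (non-degeneracy) condition of the eigenvalue differences in the parameter $E$. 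This is exactly the route the paper takes: it verifies the non-degeneracy hypothesis \eqref{nonde} for $L^0_E$ by the algebraic identity $g(E,u)=\det\bigl[iuI-\bigl(I\otimes L^0_E-(L^0_E)^T\otimes I\bigr)\bigr]/(iu)^{4m}=E^{2m}+(\text{lower order in }E)$, which yields a uniform lower bound on some $E$-derivative, and then quotes Theorem \ref{pred} of He--You, whose conclusion already contains the excluded set with measure $\lesssim(10\varepsilon_1)^c\to0$ as $\lambda\to0$. Your plan could be repaired either by switching the non-resonance condition to the fibered rotation vector of the \emph{perturbed} cocycle (Eliasson-style full-measure reducibility) — but then the measure estimate for the resonant energies again requires transversality of the perturbed rotation numbers in $E$, which you have not established — or by doing the parameter elimination inside the induction as in the quoted positive-measure theorems, which is the paper's argument.

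Two secondary points. First, your assertion that the reduced constant matrix ``still carries its elliptic eigenvalue pairs on the unit circle,'' hence $E\in\Sigma_\lambda$, is not automatic here: the cocycle is complex symplectic, so a pair $e^{\pm i\xi}$ may leave the unit circle as $re^{i\xi'},r^{-1}e^{-i\xi'}$ under an $O(\lambda)$ perturbation; this inclusion is most easily obtained the other way around, as in the paper, by simply defining $\mathcal{E}_\lambda$ as the intersection of the reducible set with $\Sigma_\lambda$ (uniform hyperbolicity of a reducible cocycle is equivalent to all eigenvalues of the reduced matrix being off the circle, so nothing is lost). Second, your trade-off $\kappa(\lambda)\to0$ versus $\varepsilon_0(\kappa)$ is an extra layer of bookkeeping that the paper's route avoids entirely, since in Theorem \ref{pred} the smallness threshold is uniform on the parameter interval once the non-degeneracy constant is fixed, and the bad set's measure scales like a power of the perturbation size.
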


\noindent{\it Proof of Theorem \ref{ac}.}
Under the assumption, by Theorem \ref{red}, for any $\mathscr E\in \mathcal{E}$, there is $B\in C^\omega(\mathbb{T}^d,\operatorname{GL}(2m,\mathbb{C}))$, $A_\mathscr E\in \operatorname{GL}(2m,\mathbb{C})$ such that
\begin{equation}\label{redu}
B(\vartheta+\alpha)L_{\mathscr E}^{V}(\vartheta)B^{-1}(\vartheta)=A_\mathscr E.
\end{equation}
By  Proposition \ref{invariance} and Proposition \ref{leconstant},
$$
\{L_j(\alpha, L_\mathscr E^{V})\}_{j=1}^m=\{\log|\lambda_j(\mathscr E)|\}_{j=1}^{m},
$$
where $\lambda_1(\mathscr E), \cdots, \lambda_m(\mathscr E)$ are the eigenvalues of  $A_\mathscr E$ outside the unit circle, counting multiplicity.

Since $\mathcal{E}\subset \Sigma$, we claim $|\lambda_m(\mathscr E)|=1$ for all $\mathscr E\in \mathcal{E}$. Otherwise, by the definition, $(\alpha,L_\mathscr E^{V})$ is uniformly hyperbolic which contradicts that $\mathscr E\in \Sigma$. We let
$$
 2\omega(\mathscr E)=\text{the number of unit eigenvalues of $A_\mathscr E$}.
$$
Then by the above argument and the complex symplectic structure we have $\omega(\mathscr E)$ is always an integer and $\omega(\mathscr E)\geq 1$.

To characterize the AC spectrum, we need to apply Theorem \ref{criteria}. We define the Dirichlet and Neumann solutions as the solutions to $J_{\vartheta}{u}=\mathscr E{u}$ that satisfy, respectively, the initial conditions
$$
\begin{cases}
\phi_0(\vartheta,\mathscr E)=0_m,\\
\phi_{-1}(\vartheta,\mathscr E)=I_m,
\end{cases}\ \ \ \ 
\begin{cases}
\psi_0(\vartheta,\mathscr E)=I_m,\\
\psi_{-1}(\vartheta,\mathscr E)=0_m.
\end{cases}
$$
Note that
$$
\begin{pmatrix}
\phi_{n}(\vartheta,\mathscr E)\\
\phi_{n-1}(\vartheta,\mathscr E)
\end{pmatrix}
=(L_{\mathscr E}^{V})_n(\vartheta)
\begin{pmatrix}
I_m\\
0_m
\end{pmatrix},
\quad
\begin{pmatrix}
\psi_{n}(\vartheta,\mathscr E)\\
\psi_{n-1}(\vartheta,\mathscr E)
\end{pmatrix}
=(L_{\mathscr E}^{V})_n(\vartheta)
\begin{pmatrix}
0_m\\
I_m
\end{pmatrix}.
$$
On the other hand, by \eqref{redu}
$$
(L_{\mathscr E}^{V})_n(\vartheta)=B^{-1}(\vartheta+n\alpha)A_E^nB(\vartheta).
$$
Thus there is $\bar{C}$ depending $C,V$ such that
$$
 \sigma^2_{m-r+1}(\phi_n(\vartheta, \mathscr E))+\sigma^2_{m-r+1}(\psi_n(\vartheta,\mathscr E))\leq \bar{C}, \ \ 1\leq r\leq \omega(\mathscr E).
$$
which implies $\mathcal{E}\subset\mathcal{S}_1$. By Theorem \ref{criteria}, we have
$$
\overline{\mathcal{E}}^{ess}\subset \Spec_{ac}(J^+_{\vartheta})
$$
where $J^+_{\vartheta}$ is the restriction of $J_{\vartheta}$ on $\ell^2(\mathbb{Z}^+)$ with Dirichlet boundary condition. Actually, one can prove in the exact same way that 
$$
\overline{\mathcal{E}}^{ess}\subset \Spec_{ac}(J^-_{\vartheta})
$$
where $J^-_{\vartheta}$ is the restriction of $J_{\vartheta}$ on $\ell^2(\mathbb{Z}^-)$ with Dirichlet boundary condition.  Finally, we need the following proposition in \cite{OC2021}.
\begin{prop}
We have $\Spec_{ac}(J_{\vartheta})=\Spec_{ac}(J_{\vartheta}^+\oplus J_{\vartheta}^-)$.
\end{prop}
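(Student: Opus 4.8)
The plan is to exhibit $J^+_{\lambda,\theta}\oplus J^-_{\lambda,\theta}$ as a finite-rank perturbation of $J_{\lambda,\theta}$ and then invoke the Kato--Rosenblum theorem on the stability of the absolutely continuous spectrum under trace-class perturbations.

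First I would fix the identification. Placing the Dirichlet cut at the single bond joining the two half-lines, the orthogonal decomposition $\ell^2(\ZZ;\CC^m)=\ell^2(\ZZ^-;\CC^m)\oplus\ell^2(\ZZ^+;\CC^m)$ lets us regard $J^+_{\lambda,\theta}\oplus J^-_{\lambda,\theta}$ as an operator on the very same Hilbert space $\ell^2(\ZZ;\CC^m)$ on which $J_{\lambda,\theta}$ acts. Both operators are bounded, since $C$ is a fixed matrix and $\sup_{x\in\TT^d}\lVert \lambda V(x)\rVert<\infty$, and both are self-adjoint, since $C=C^{*}$ and $V(x)=V(x)^{*}$.

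The key point is then the structure of the difference. Because $J_{\lambda,\theta}$ is block-tridiagonal with nearest-neighbour coupling, the operator $R:=J_{\lambda,\theta}-\bigl(J^+_{\lambda,\theta}\oplus J^-_{\lambda,\theta}\bigr)$ is supported entirely on the two boundary sites: it merely reinstates the hopping term $C$ across the severed bond. Concretely, if the cut separates sites $0$ and $1$, then $R$ maps the coordinate block at site $0$ to $C$ times the block at site $1$ and vice versa, and annihilates the block at every other site; since $C=C^{*}$ this $R$ is self-adjoint of rank at most $2m$, in particular trace class.

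Finally, by the Kato--Rosenblum theorem the wave operators $W_{\pm}\bigl(J_{\lambda,\theta},\,J^+_{\lambda,\theta}\oplus J^-_{\lambda,\theta}\bigr)$ exist and are complete, so the absolutely continuous parts of the two operators are unitarily equivalent; in particular their absolutely continuous spectra coincide, which is the asserted identity. The only step meriting care is the bookkeeping in the previous paragraph—checking that the Dirichlet restrictions differ from the full-line operator only through a single bond, hence that the difference is genuinely finite-rank and not merely compact—but this is immediate from the tridiagonal form of $J_{\lambda,\theta}$. No step presents a real obstacle; in particular the invertibility of $C$, essential elsewhere for the transfer-matrix formalism, is not needed here.
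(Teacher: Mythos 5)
Your argument is correct and complete. Note, however, that the paper does not actually prove this proposition at all: it is quoted verbatim from the reference [OC2021] on subordinacy theory for matrix-valued Jacobi operators, so there is no internal proof to compare against. Your route is the standard one that underlies such decoupling statements: writing $\ell^2(\ZZ;\CC^m)=\ell^2(\ZZ^-;\CC^m)\oplus\ell^2(\ZZ^+;\CC^m)$, the difference $R=J_{\lambda,\theta}-\bigl(J^+_{\lambda,\theta}\oplus J^-_{\lambda,\theta}\bigr)$ only reinstates the hopping matrix $C$ across the severed bond, hence is self-adjoint of rank at most $2m$ and in particular trace class (both operators are bounded since $C$ is a fixed self-adjoint matrix and $V$ is analytic, hence bounded, on $\TT^d$), and Kato--Rosenblum gives unitary equivalence of the absolutely continuous parts, so the absolutely continuous spectra coincide. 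Your bookkeeping is right, and you correctly observe that invertibility of $C$, needed elsewhere for the transfer-matrix formalism, plays no role here. The only cosmetic caveat is that the paper's Theorem on the subordinacy criterion speaks of self-adjoint extensions of the half-line operators with admissible boundary conditions; in the bounded setting at hand $J^{\pm}_{\lambda,\theta}$ are already self-adjoint, and changing the boundary condition is itself a finite-rank perturbation, so your argument covers that generality as well.
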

Hence
$$
\overline{\mathcal{E}}^{ess}\subset \Spec_{ac}(J_{\vartheta}).
$$
Theorem \ref{ac} follows from  the fact that $|\mathcal{E}|>0$.
\hfill$\square$

\noindent {\it Proof of Theorem \ref{red}.} First, we introduce a positive measure reducibility theorem proved in \cite{He-You}. 
We consider $(\alpha, A(\mathscr E)+F(\mathscr E,\cdot))\in C^\omega(\mathbb{T},\operatorname{GL}(2m,\mathbb{C}))$, $A(\mathscr E)+F(\mathscr E,\cdot)$ also depends analytically on $\mathscr E\in I$ where $I$ is an interval. Let $\{\lambda_i(\mathscr E)\}_{i=1}^{2m}$ be the eigenvalues of $A(\mathscr E)$. For any $u\in\mathbb{R}$, we denote
$$
h(\mathscr E,u)=\prod\limits_{i\neq j}\left(\lambda_i(\mathscr E)-\lambda_j(\mathscr E)-iu\right).
$$
We say $A(\mathscr E)$ is non-degenerate if there is $p\in\mathbb{N}$ such that for all $u\in\mathbb{R}$, we have
\begin{align}\label{nonde}
\max_{1\leq i\leq p}\left|\frac{\partial^i h(\mathscr E,u)}{\partial \mathscr E^i}\right|\geq c>0.
\end{align}
\begin{theo}[\cite{He-You}]\label{pred}
Assume $\alpha\in \operatorname{DC}_t^d$ and $I$ is a parameter interval, $A\in C^\omega(I,\operatorname{GL}(2m,\mathbb{R}))$ satisfies the non-degeneracy condition \eqref{nonde}, {$F\in C^\omega(\mathbb T\times I,\operatorname{GL}(2m,\mathbb{R}))$} and there is  $M>0$ such that $|A|_s<M$. Then there exists $\varepsilon_0$ such that if $|F|_{s,\delta}<\varepsilon_0$\footnote{Here $|F|_{s,\delta}=\sup_{|\Im \vartheta|<s,|\Im \mathscr E|<\delta} |F(\mathscr E,\vartheta)|$.}, the measure of the set of parameter $I$ for which $(\alpha, A(\mathscr E)+F(\mathscr E,\cdot))$ is not reducible is no larger than $CL(10\varepsilon_1)^c$, where $C,c$ are some positive constants, $L$ is the length of the parameter interval $I$.
\end{theo}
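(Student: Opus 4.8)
The plan is to prove this by a KAM (Newton-type) iteration with parameter exclusion, in the spirit of Eliasson's reducibility scheme \cite{e} adapted to higher-dimensional matrix cocycles as in \cite{eliasson,chavaudret,gyzh,He-You}. First I would fix sequences of scales: a decreasing sequence of analyticity radii $s=s_0>s_1>\cdots\downarrow s_\infty>0$, truncation orders $N_k\to\infty$ growing at most exponentially, and error thresholds $\varepsilon_{k+1}=\varepsilon_k^{1+\sigma}$ for a small $\sigma>0$, starting from $\varepsilon_0\sim|F|_{s,\delta}$. As a preliminary step, the hyperbolic eigendirections of $A(E)$ cause no difficulty: a uniformly hyperbolic block can be conjugated to a constant by solving a \emph{non-resonant} cohomological equation, which is always possible for analytic data and $\alpha\in\operatorname{DC}^d_t$; hence one may reduce to the situation in which the only obstruction to reducibility is resonance between the unimodular eigenvalues $\lambda_i(E)=e^{2\pi i\rho_i(E)}$ of the elliptic block of $A(E)$ and the frequency $\alpha$.

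The inductive step is the usual one. Given the cocycle at stage $k$ with constant part $A_k(E)$ and perturbation $F_k(E,\cdot)$ satisfying $|F_k|_{s_k}\le\varepsilon_k$, I truncate $F_k$ at Fourier order $|n|\le N_k$ and try to solve the linearized conjugacy equation
\[
Y(x+\alpha)A_k(E)-A_k(E)Y(x)+F_k(E,x)=\widehat{F_k}(0)+R_k(E,x),
\]
where $R_k$ collects the \emph{resonant} modes; in Fourier space this amounts to inverting the operator $e^{2\pi i\langle n,\alpha\rangle}\operatorname{id}-\operatorname{Ad}_{A_k(E)}$, whose eigenvalues are $e^{2\pi i\langle n,\alpha\rangle}-\lambda_i(E)/\lambda_j(E)$. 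Call a mode $n$ with $|n|\le N_k$ resonant for $E$ if $\|\langle n,\alpha\rangle-(\rho_i(E)-\rho_j(E))\|<\varepsilon_k^{\sigma'}$ for some $i\neq j$. For non-resonant modes, $\alpha\in\operatorname{DC}^d_t$ gives a lower bound $\gtrsim t\varepsilon_k^{\sigma'}/|n|^{d+1}$ on the corresponding eigenvalue, so the equation has an analytic solution on the strip of width $s_{k+1}$ with $|Y|_{s_{k+1}}\lesssim\varepsilon_k^{1-\sigma''}$. The finitely many resonant modes are not removed but absorbed: one conjugates instead by a rotation $Q_k(E)$, staying in $\operatorname{GL}(2m,\mathbb{R})$, which replaces $A_k(E)$ by a new constant $A_{k+1}(E)$ that is a small controlled modification of $A_k(E)$. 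Conjugating the original cocycle by $e^{Y}$ (and $Q_k$) then produces a new perturbation built from commutator terms in $Y$, the truncation tail, and $R_k$, and a routine estimate yields $|F_{k+1}|_{s_{k+1}}\le\varepsilon_{k+1}=\varepsilon_k^{1+\sigma}$ \emph{provided} $E$ lies outside a bad set $I_k^{\mathrm{bad}}\subset I$ on which the resonances are too strong to be absorbed.

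It remains to bound $|I_k^{\mathrm{bad}}|$, and this is exactly where the non-degeneracy hypothesis \eqref{nonde} enters. For each resonance vector $n$ with $|n|\le N_k$ and each pair $i\neq j$, the set of $E\in I$ for which $\|\langle n,\alpha\rangle-(\rho_i(E)-\rho_j(E))\|$ drops below a threshold $\eta_k$ is contained in a sublevel set $\{E\in I:|h(E,u_n)|<\eta_k'\}$ of the analytic function $E\mapsto h(E,\cdot)$ appearing in \eqref{nonde}, for a suitable real $u_n$ determined by $\langle n,\alpha\rangle$ and $\eta_k'$ a fixed power of $\eta_k$. Since \eqref{nonde} forces this function to vanish on $I$ to order at most $p$, the standard sublevel-set estimate gives $|\{E\in I:|h(E,\cdot)|<\eta_k'\}|\le C_p L\,(\eta_k')^{1/p}$. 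Multiplying by the number $\mathcal{O}(N_k^d m^2)$ of active resonance conditions at scale $k$ and taking $\eta_k$ a suitable small power of $\varepsilon_k$, one obtains $|I_k^{\mathrm{bad}}|\le CL\,N_k^d\varepsilon_k^{c'}$. Because $\varepsilon_k$ decreases super-exponentially while $N_k$ grows only exponentially, the series $\sum_k CL\,N_k^d\varepsilon_k^{c'}$ is dominated by its first term and hence $\le CL(10\varepsilon_1)^c$. On the complement $\mathcal{E}:=I\setminus\bigcup_k I_k^{\mathrm{bad}}$, the composed conjugations $B_k=Q_{k-1}e^{Y_{k-1}}\cdots Q_0 e^{Y_0}$ converge in $C^\omega(\mathbb{T}^d,\operatorname{GL}(2m,\mathbb{R}))$ on the strip $|\Im x|<s_\infty$ and $A_k(E)\to A_\infty(E)$, so $B(x+\alpha)(A(E)+F(E,x))B^{-1}(x)=A_\infty(E)$ for every $E\in\mathcal{E}$; thus the non-reducible parameter set is contained in $\bigcup_k I_k^{\mathrm{bad}}$ and has measure $\le CL(10\varepsilon_1)^c$.

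The main obstacle is the interaction between resonance absorption and parameter exclusion in dimension $d>1$: unlike the one-frequency case one cannot arrange the resonances into an ordered sequence of isolated frequencies, so several resonant modes may be simultaneously active at one scale, and the conjugation handling them must (i) keep $A_{k+1}(E)$ within a controlled neighborhood of $A_k(E)$ and (ii) preserve, with only summably degrading constants, a usable form of the non-degeneracy condition \eqref{nonde}, so that the sublevel-set estimate can be reapplied at the next scale with uniform constants. Showing that this degradation is summable — equivalently, that non-degeneracy survives infinitely many iterations on the surviving parameter set — is the technical heart of the argument carried out in \cite{He-You}.
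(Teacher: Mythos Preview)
The paper does not prove this theorem at all: it is quoted verbatim as a result of He--You \cite{He-You} and is invoked as a black box in the proof of Theorem~\ref{red}. There is therefore nothing in the paper to compare your proposal against.

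That said, your sketch is a reasonable high-level outline of the KAM/parameter-exclusion scheme that underlies \cite{He-You} and the related references \cite{e,eliasson,chavaudret}. A few cautions if you intend to flesh it out. First, the cocycle here is \emph{discrete}, so the linearized conjugacy equation is $Y(x+\alpha)A_k(E)-A_k(E)Y(x)=\cdots$ as you wrote, but the small divisors are of the form $e^{2\pi i\langle n,\alpha\rangle}\lambda_i(E)\lambda_j(E)^{-1}-1$, not differences of Floquet exponents; the function $h(E,u)$ in \eqref{nonde} is built from eigenvalue \emph{differences} $\lambda_i-\lambda_j$, not ratios, so one has to check carefully that the non-degeneracy hypothesis actually controls the relevant small divisors (in \cite{He-You} this is handled, but the correspondence is not the one you wrote). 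Second, your ``preliminary step'' of splitting off the hyperbolic block and reducing it separately is not innocuous: the hyperbolic/elliptic decomposition of $A_k(E)$ varies with $E$ and with the iteration step, and the conjugation realizing it need not be analytic in $E$ across eigenvalue collisions; the actual schemes treat all blocks simultaneously. Third, and most importantly, you correctly identify the ``main obstacle'' at the end but then defer it entirely to \cite{He-You}; since that is precisely the content of the theorem, your proposal is really a description of the architecture rather than a proof.
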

Actually, Theorem \ref{red} is a special case of Theorem \ref{pred}. To see this, we denote the characteristic polynomial of $L^{0}_{\mathscr E}$ by $p(\mathscr E,z)=\det\left({L^{0}_{\mathscr E}-zI_{2m}}\right)$. Let $\{z_i(\mathscr E)\}_{i=1}^{2m}$ be the zeros of $p(\mathscr E,z)$.  For any $u\in\mathbb{R}$, we denote
$$
g(\mathscr E,u)=\prod\limits_{i\neq j}\left(z_i(\mathscr E)-z_j(\mathscr E)-iu\right).
$$
It is easy to check that
\begin{align*}
g(\mathscr E,u)&=\det{\left[iu I_{16m^2}-\left(I_{4m}\otimes L_\mathscr E^0-(L_\mathscr E^0)^\intercal\otimes I_{4m}\right)\right]}/(iu)^{4m}\\
&=\mathscr E^{2m}+g'(\mathscr E,u)
\end{align*}
where $g'(\mathscr E,u)$ is a polynomial of $\mathscr E$ with degree $\leq 2m-1$. Thus $g(\mathscr E,u)$ satisfies the following non-degeneracy condition for all $\mathscr E\in\mathbb{R}$
$$
\max_{1\leq i\leq {2m}}\left|\frac{\partial^i g(\mathscr E,u)}{\partial \mathscr E^i}\right|\geq 1.
$$
Moreover, $|L_\mathscr E^0|\leq M$ for all $\mathscr E\in \Sigma$. Thus all conditions in Theorem \ref{pred} are satisfied. This completes the proof of Theorem \ref{red}.\hfill$\square$

\section{Cantor spectrum for anti-chiral Hamiltonian}
\label{sec:Cantor}

The key property of the anti-chiral Hamiltonian that allows us to establish Cantor spectrum is the operator-valued diagonalizability of the matrix-valued discrete operator. In this section, we show that we can block-diagonalize the Hamiltonian into four Schr\"odinger operators for $\theta \in \ZZ/2.$ 

To this end, we define the matrix
\[\mathcal U=\frac{1}{2}\begin{pmatrix} 1 &-1& -1&1 \\-1 & 1 & -1 & 1 \\ -1 & -1 & 1 &1 \\ 1 & 1 & 1 & 1
\end{pmatrix}.\]
This implies that 
\[\mathcal U^*H_{(w_0,0)}(\theta,0) \mathcal U = 1 + e^{2\pi i \theta} (\tau+\tau^*)  \operatorname{diag}(-I_{2},I_{2})-w_0 U(\vartheta+\tfrac{\bullet}{L})\operatorname{diag}(\sigma_3,\sigma_3). \]
By flipping matrix entries $(1,1),(3,3)$ and $(2,2),(4,4)$ of each individual block appropriately, we see that the Hamiltonian is equivalent to the following operator on $\ell^2(\ZZ;\CC^4)$:
\[ w_0 \left( \frac{\tau^* + \tau-1}{3} \right)\operatorname{diag}(\sigma_3,\sigma_3) +\left(2 \cos\left(2\pi( \frac{\bullet}{L} +\vartheta)\right) \pm 1 \right)\operatorname{diag}(-I_{2},I_{2}),\]
where the choice of the $\pm$ sign depends on whether we are studying $\theta=0$ or $\theta=1/2.$ It is easy to see that this operator is a direct sum of (shifted) almost Mathieu operators (AMO).

\begin{figure}
\includegraphics[width=7.5cm,height=8cm, trim=0.5cm 6cm 0.5cm 6cm]{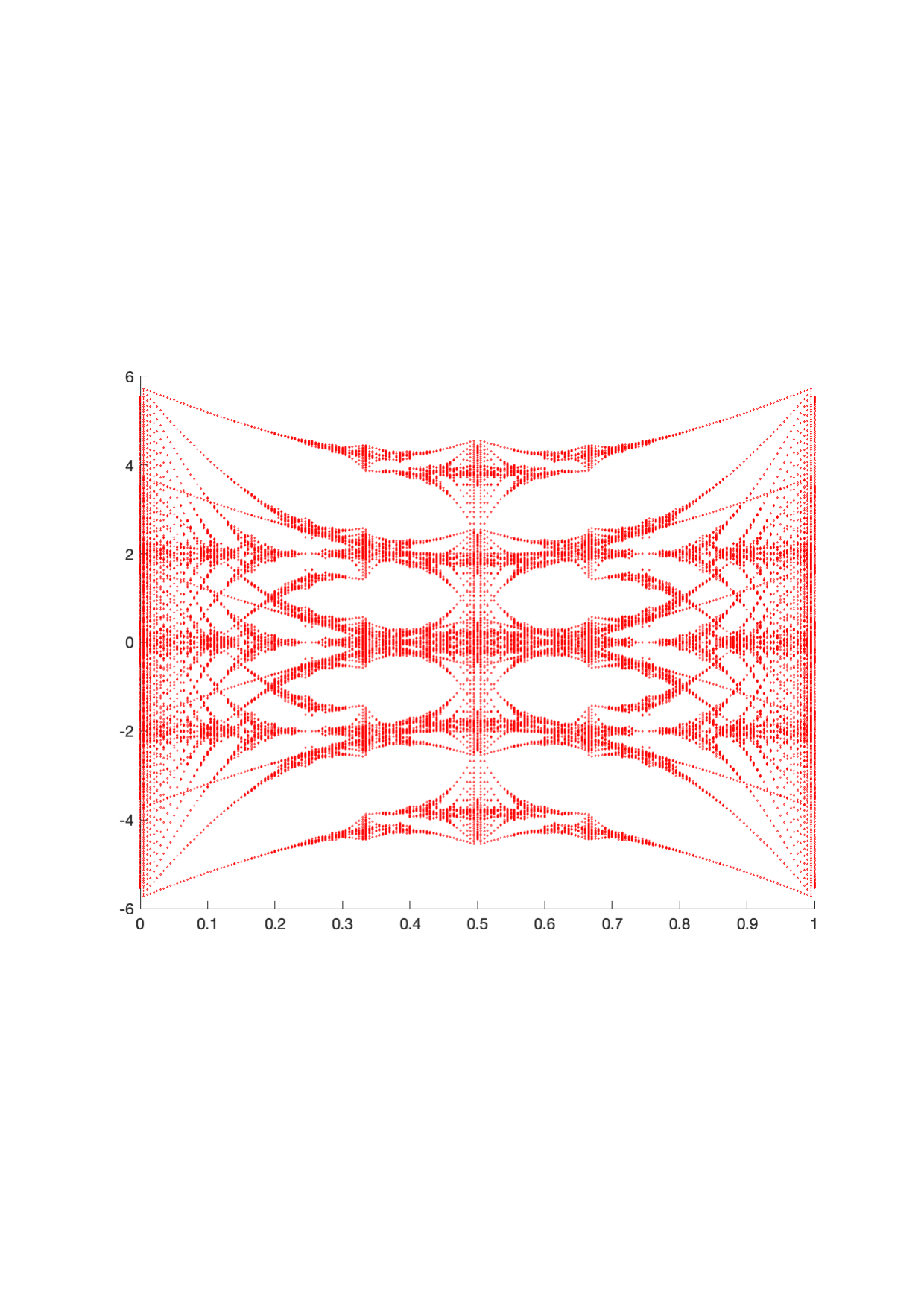} 
\includegraphics[width=7.5cm,height=8cm, trim=0.5cm 6cm 0.5cm 6cm]{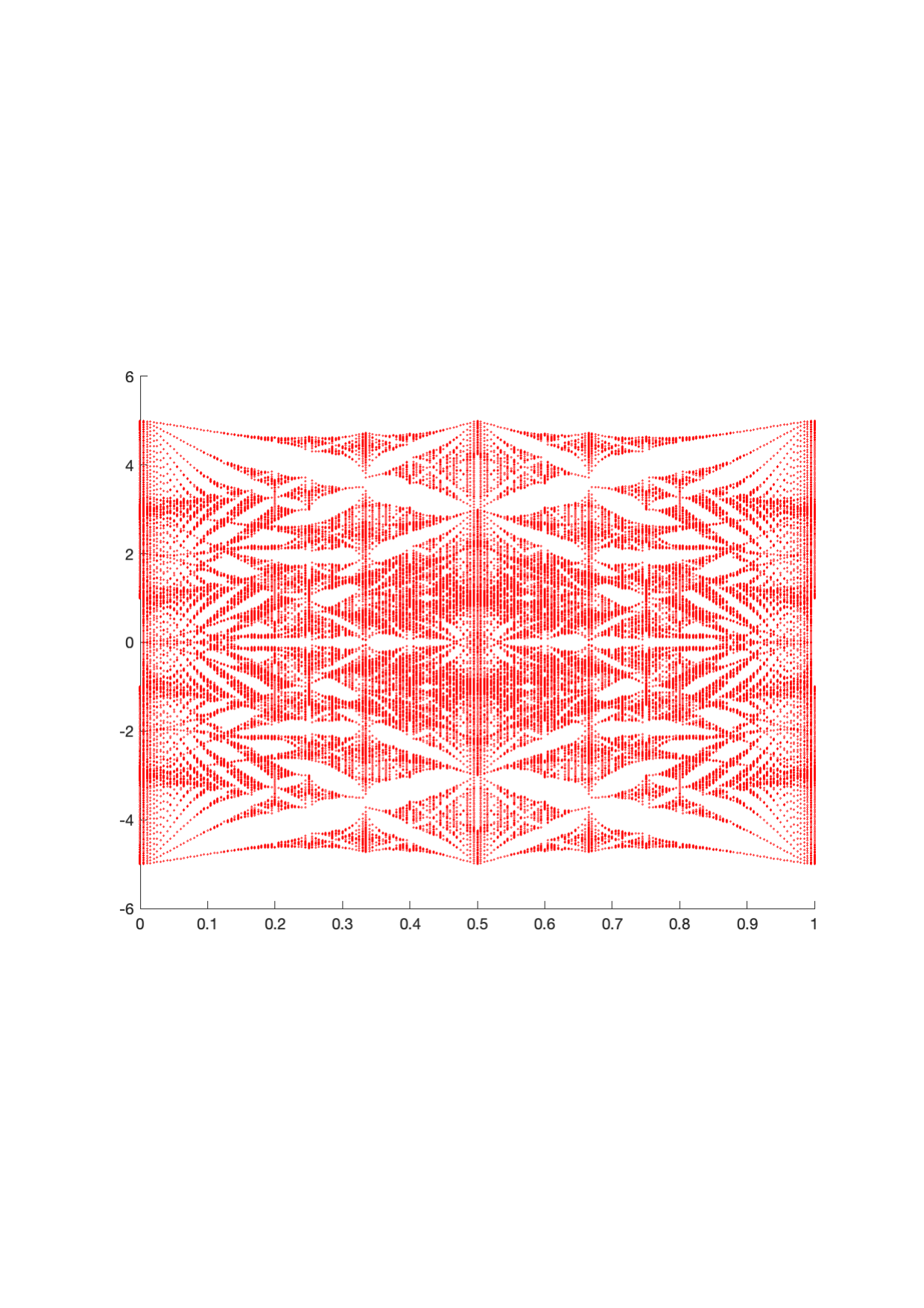}
\caption{\label{fig:AC} Hofstadter butterflies (1/L-Spectrum plots). The left figure shows the spectrum for the anti-chiral ($w_0=1$) and the right figure for the chiral potential ($w_1=1$), both for the case $\theta=0$ for $1/L \in [0,1].$}
\end{figure}

Using standard results about the almost Mathieu operator, we readily conclude the following proposition partially explaining the occurrence of zero measure and fat Cantor spectra in Figures \ref{fig:AC} and \ref{fig:fat_cantor}. 

\begin{figure}
\includegraphics[width=7.5cm,height=7cm]{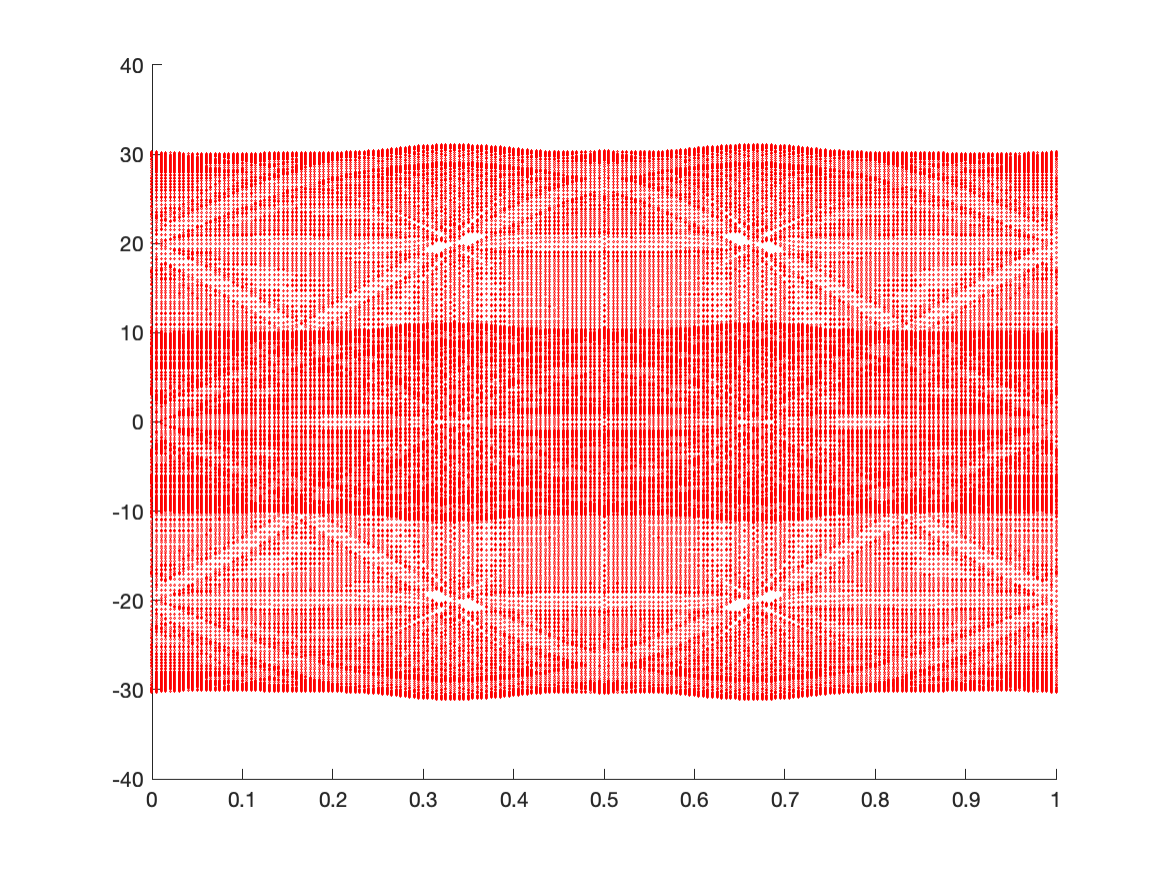} 
\caption{\label{fig:fat_cantor} Hofstadter butterflies (1/L-Spectrum plots). Spectrum of chiral Hamiltonian with subcritical tunneling $w_1= \frac{2}{5}.$ Similarly to the AMO with coupling constant away from the critical coupling, the spectrum starts to become more dense.}
\end{figure}

\begin{prop}[AMO]
\label{prop:AMO}
Let $1/L \notin \mathbb Q, \theta \in \{0,1/2,1\}$. Let $w_0 = 3$, then the spectrum of the anti-chiral Hamiltonian is purely singular continuous and a Cantor set of zero measure. Let $w_0>3$ then the spectral measure is absolutely continuous and for $w_0<3$ it is pure point if $1/L$ satisfies a diophantine condition. In either case, the spectrum is a Cantor set of positive measure.
\end{prop}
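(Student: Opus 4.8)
The plan is to reduce the statement entirely to the (by now classical) spectral theory of the almost Mathieu operator $(H_{\lambda,\alpha,\vartheta}u)(n)=u(n+1)+u(n-1)+2\lambda\cos(2\pi(\vartheta+n\alpha))u(n)$ on $\ell^2(\ZZ)$, with frequency $\alpha=1/L$ and a coupling $\lambda$ governed by $w_0$. The starting point is precisely the block-diagonalization carried out just above the statement: for $\theta\in\{0,\tfrac12,1\}$ the factor $e^{\pm2\pi i\theta}=\pm1$ is real, so after conjugation by $\mathcal U$ and the indicated sign flips on the diagonal blocks the operator $H_{(w_0,0)}(\theta,0,\vartheta)$ becomes a $\CC^4$-matrix-valued operator whose only two structural matrices, $\operatorname{diag}(\sigma_3,\sigma_3)$ and $\operatorname{diag}(-1_{\CC^{2\times 2}},1_{\CC^{2\times 2}})$, are diagonal in the internal index. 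Hence the conjugated operator is an orthogonal direct sum $\bigoplus_{j=1}^4 h_j$ of four scalar discrete Schr\"odinger operators on $\ell^2(\ZZ)$, each of the shape $\pm\big(u(n+1)+u(n-1)\big)+\big(a_j+b_j\cos(2\pi(\vartheta+n/L))\big)u(n)$. Conjugating by the unitary $(-1)^n$ when the hopping sign is $-$, translating the phase $\vartheta\mapsto\vartheta+\tfrac12$ when $b_j<0$, and subtracting the constant $a_j$, each $h_j$ is turned into $c_j+H_{\lambda,1/L,\vartheta_j}$ for some energy shift $c_j\in\RR$, with one and the same coupling $\lambda$ — the ratio of cosine amplitude to hopping amplitude fixed by the normalization in that displayed identity — and with $\lambda=1$ exactly when $w_0=3$, the cases $w_0>3$ and $w_0<3$ of the statement lying, in this normalization, on the subcritical and supercritical side of $\lambda=1$ respectively. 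Since $1/L\notin\mathbb{Q}$, minimality of the rotation makes the spectrum $\Sigma_{\lambda,1/L}$ of the almost Mathieu operator independent of the phase, so $\Spec(H_{(w_0,0)}(\theta,0,\vartheta))=\bigcup_{j=1}^4\big(c_j+\Sigma_{\lambda,1/L}\big)$, a finite union of real affine images of one fixed set.

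With this reduction the three assertions are read off from the almost Mathieu literature. By the resolution of the Ten Martini Problem (Avila--Jitomirskaya \cite{aj}), $\Sigma_{\lambda,\alpha}$ is a Cantor set for every irrational $\alpha$ and every $\lambda\neq 0$; a finite union of affine copies of a Cantor set is again a Cantor set, since compactness and perfectness are immediate and total disconnectedness follows from the Baire category theorem (a nondegenerate interval cannot be covered by finitely many nowhere dense closed sets). By the Aubry--Andr\'e measure formula — Avila--Krikorian for Diophantine $\alpha$, Last in the Liouville case, and in general via Avila's global theory \cite{avila1,avila2} — one has $\lvert\Sigma_{\lambda,\alpha}\rvert=4\lvert 1-|\lambda|\rvert$, which vanishes precisely at $\lambda=1$ and is positive otherwise, and the same dichotomy holds for the finite union. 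Finally the spectral type: at $\lambda=1$ the spectrum has zero Lebesgue measure, hence no absolutely continuous part, and it carries no eigenvalues (Gordon's argument in the Liouville regime, cf.\@ \cite{G76} and our Proposition \ref{prop:Liouville}, and in general the known absence of point spectrum for the critical operator), so it is purely singular continuous; for $|\lambda|<1$ it is purely absolutely continuous, for all irrational $\alpha$ and all phases, by Avila's theorem on the absolutely continuous spectrum of the subcritical almost Mathieu operator \cite{avila1,avila2}; and for $|\lambda|>1$ and $\alpha=1/L$ Diophantine the operator has pure point spectrum with exponentially decaying eigenfunctions by Jitomirskaya's localization theorem \cite{J99} (for a.e.\@ phase $\vartheta$, or for all non-resonant $\vartheta$ by the arithmetic refinement).

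It remains only to push these back through the reduction. A real affine change of the spectral variable and a unitary conjugation both preserve the Lebesgue decomposition of the spectral measure, so each summand $c_j+H_{\lambda,1/L,\vartheta_j}$ has the spectral type just described, and an orthogonal direct sum of operators sharing one spectral type has that type; together with the Cantor and measure statements for $\bigcup_j\big(c_j+\Sigma_{\lambda,1/L}\big)$ this is exactly the content of Proposition \ref{prop:AMO}. The hypothesis $\theta\in\ZZ/2$ is essential here: only then is $K(\theta)$ a real self-adjoint operator, which is what lets $\mathcal U$ diagonalize the Hamiltonian into scalar almost Mathieu operators; for generic $\theta$ the Dirac--Harper operator does not decouple in this way, which is why Cantor spectrum at arbitrary $\theta$ is not accessible by this route.

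The main obstacle is not a single hard estimate — the deep analytic content (the Cantor structure, the measure of the spectrum, and the subcritical/critical/supercritical spectral dichotomy) is imported wholesale from almost Mathieu theory. The delicate point is the bookkeeping in the reduction: one must verify that the sign flips on the diagonal blocks and the subsequent $(-1)^n$- and half-period conjugations genuinely turn each of the four scalar summands into a bona fide almost Mathieu operator with the same coupling $\lambda$, and that the critical value $\lambda=1$ falls exactly at $w_0=3$, so that the three parameter regimes of the statement line up with the three regimes of the almost Mathieu operator. The only other non-citation step is the elementary Baire-category observation that a finite union of affine images of a Cantor set is a Cantor set.
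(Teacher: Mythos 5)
Your overall route is the same as the paper's: block-diagonalize the anti-chiral Hamiltonian at $\theta\in\ZZ/2$ into four scalar discrete operators, recognize each (after a $(-1)^n$ gauge, a half-period phase shift and an energy shift) as an almost Mathieu operator, import the Ten Martini theorem, the Aubry--Andr\'e measure formula and the subcritical/critical/supercritical spectral-type results, and finish with the finite-union-of-Cantor-sets observation. Those imported ingredients and the union argument are fine.

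The genuine gap is precisely the step you flag as delicate and then do not carry out: the identification of the coupling. From the block-diagonalization you start from, each scalar summand has hopping amplitude $1$ and cosine amplitude $2w_0/3$ (since $w_0U=\tfrac{w_0}{3}\bigl(1+2\cos(2\pi\,\cdot)\bigr)$), so the almost Mathieu coupling is $\lambda=w_0/3$. That places $w_0>3$ on the \emph{supercritical} side and $w_0<3$ on the \emph{subcritical} side, the opposite of the alignment you assert and which the statement needs. The alignment you assert corresponds to the paper's second displayed (``flipped'') normalization, with hopping $w_0/3$ and cosine amplitude $2$, i.e.\ coupling $3/w_0$; but passing to that form interchanges the roles of the hopping and the potential amplitude, which is an Aubry-duality move, not a unitary equivalence of the fixed-phase operator, and duality does not preserve the fiberwise spectral type (it trades ac against pp), so it cannot be used to transfer the spectral-type conclusions. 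As written, your argument does give the Cantor structure, the zero/positive measure dichotomy (both symmetric under $\lambda\mapsto 1/\lambda$) and the $w_0=3$ critical case, but the ac/pure-point assignment for $w_0>3$ versus $w_0<3$ is unproved; carried out honestly from your own reduction it comes out with the two regimes interchanged relative to the statement, which is also what Proposition \ref{prop:LEs} (positive Lyapunov exponents at strong coupling) and Theorems \ref{theo:Aloc1} and \ref{theo:AC} (localization at strong, ac spectrum at weak coupling) suggest. So you must either exhibit a genuine unitary equivalence to the flipped form (there is none for $w_0\neq 3$, exactly because the two forms are almost Mathieu operators at reciprocal couplings with different spectral types) or redo the bookkeeping, in which case the absolutely continuous and pure-point labels in the conclusion trade places.
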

\begin{proof}
The spectrum, as a set, of a finite direct sum of operators is the finite union of the spectra of all operators on the diagonal. These are in all cases we consider here Cantor sets, i.e. closed, nowhere dense sets, without isolated points. Thus, the finite union of such sets will still be Cantor sets and clearly the spectral type is also preserved under finite direct sums of operators. The same argument applies to the Lebesgue decomposition of the spectral measure.
\end{proof}

\section{Spectral analysis of flat bands in effective models}
\label{sec:spec_ana}

In this section, we compare the spectral analysis of the discrete model with an effective low energy model, valid close to zero energy, that has been proposed in \cite{TM2020} and subsequently been analyzed in \cite{BW22}. As a main difference, this model does not exhibit any quasiperiodic features and is therefore closer in spirit to the continuum model of twisted bilayer graphene. 

The proposed continuum Hamiltonian is given by  \[  \mathscr L = \begin{pmatrix} 0 & D_x -i k_{\perp} & w_0 U(x/L) & w_1  U^-(x/L) \\ D_x+i k_{\perp} & 0 & w_1 U^+(x/L) & w_0U(x/L)  \\ w_0 U(x/L) & w_1 U^+(x/L) & 0 & D_x -i k_{\perp}   \\  w_1 U^-(x/L) & w_0 U(x/L) & D_x +i k_{\perp} & 0 \end{pmatrix}.\] 
Since this Hamiltonian $ \mathscr L$ is periodic, we can apply standard Bloch-Floquet theory to equivalently study the spectrum on $L^2(\mathbb R/L\ZZ)$ of 

\[  \mathscr L(k_x) = \begin{pmatrix} 0 & D_x+k_x -i k_{\perp} & w_0 U(x/L) & w_1  U^-(x/L) \\ D_x+k_x +i k_{\perp} & 0 & w_1 U^+(x/L) & w_0U(x/L)  \\ w_0 U(x/L) & w_1 U^+(x/L) & 0 & D_x+k_x -i k_{\perp}   \\  w_1 U^-(x/L) & w_0 U(x/L) & D_x+k_x +i k_{\perp} & 0 \end{pmatrix}\] such that 
\begin{equation*}
\Spec( \mathscr L) = \bigcup_{k_x \in [0,2\pi/L]} \Spec( \mathscr L(k_x)).
\end{equation*}

The study of the nullspace of the Hamiltonian $ \mathscr L(k_x)-\lambda$ is equivalent to the study of the nullspace of the operator $\widehat{ \mathscr L}_{\lambda}(k_x)=\operatorname{diag}(\sigma_1,\sigma_1)( \mathscr L(k_x)-\lambda)$ given by
\begin{equation}\label{eq:tm20afterconj}
\widehat{ \mathscr L}_{\lambda}(k_x)=\begin{pmatrix} D_x+k_x+ik_{\perp} & w_1 U^+(x/L) & -\lambda& w_0U(x/L) \\w_1 U^-(x/L) & D_x+k_x+ik_{\perp} &w_0U(x/L) & -\lambda \\ -\lambda & w_0U(x/L) & D_x+k_x-ik_{\perp} &w_1 U^-(x/L) \\ w_0 U(x/L) & -\lambda & w_1 U^+(x/L) & D_x+k_x-ik_{\perp} \end{pmatrix} .
\end{equation}

\begin{prop}
\label{prop:no_flat}
The Hamiltonian $ \mathscr L$ does not possess any flat bands in $k_x$ for any fixed $k_{\perp}$, i.e. there is no $\lambda \in \mathbb R$ such that $\lambda \in \Spec( \mathscr L(k_x))$ for all $k_x \in \RR.$
\end{prop}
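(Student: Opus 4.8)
The plan is to recast the flat band question as a statement about the monodromy of a first-order linear ODE and to exploit the fact that the $k_x$-dependence of the fiber operator is scalar. First I would note that $\mathscr L$ is self-adjoint and that each fiber operator $\mathscr L(k_x)$ is a first-order elliptic operator on the compact circle $\RR/L\ZZ$, hence has compact resolvent; consequently $\lambda\in\Spec(\mathscr L(k_x))$ if and only if $\ker(\mathscr L(k_x)-\lambda)\neq\{0\}$, which by the conjugation in \eqref{eq:tm20afterconj} is equivalent to $\ker\widehat{\mathscr L}_\lambda(k_x)\neq\{0\}$. Writing (with $D_x=\tfrac{1}{i}\tfrac{d}{dx}$) the operator in the form $\widehat{\mathscr L}_\lambda(k_x)=(D_x+k_x)\operatorname{id}_{\CC^4}+N_\lambda(x)$, where $N_\lambda(x)$ gathers the zeroth-order terms (the constants $\pm ik_\perp$ on the diagonal, the potentials $w_0U$, $w_1U^\pm$ and the scalar $-\lambda$ off the diagonal), the equation $\widehat{\mathscr L}_\lambda(k_x)u=0$ becomes the first-order system $u'=-i(k_x\operatorname{id}_{\CC^4}+N_\lambda(x))u$ on $\RR/L\ZZ$, and $\lambda\in\Spec(\mathscr L(k_x))$ exactly when this system has a nonzero $L$-periodic solution.

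The key step is then the gauge substitution $u(x)=e^{-ik_xx}v(x)$: since $k_x\operatorname{id}_{\CC^4}$ is scalar it commutes with everything and the substitution cancels the $k_x$-term, leaving the $k_x$-independent system $v'=-iN_\lambda(x)v$. Its coefficient $N_\lambda$ is real-analytic and $L$-periodic (the $U$, $U^\pm$ are trigonometric polynomials), so there is a well-defined fundamental solution $\Psi_\lambda\in C^\omega(\RR,\operatorname{GL}(4,\CC))$ with $\Psi_\lambda(0)=\operatorname{id}_{\CC^4}$, and I would set $M_0(\lambda):=\Psi_\lambda(L)\in\operatorname{GL}(4,\CC)$, a matrix that does \emph{not} depend on $k_x$. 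Tracking the periodicity condition, $u=e^{-ik_xx}v$ is $L$-periodic if and only if $M_0(\lambda)v(0)=e^{ik_xL}v(0)$, and one arrives at the clean characterization $\lambda\in\Spec(\mathscr L(k_x))\iff e^{ik_xL}\in\Spec(M_0(\lambda))$.

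To conclude, suppose $\lambda\in\RR$ were a flat band, i.e.\@ $\lambda\in\Spec(\mathscr L(k_x))$ for all $k_x\in\RR$. Then every point of the unit circle $\{z\in\CC:|z|=1\}=\{e^{ik_xL}:k_x\in\RR\}$ would be an eigenvalue of the $4\times4$ matrix $M_0(\lambda)$, which is impossible since $\Spec(M_0(\lambda))$ has at most four elements; this contradiction yields the claim (and in fact shows that for each fixed $\lambda$ one has $\lambda\in\Spec(\mathscr L(k_x))$ for at most four values of $k_x$ in the Brillouin zone $[0,2\pi/L]$). I expect the only delicate point to be the bookkeeping in the monodromy factorization — checking that the gauge transformation $u\mapsto e^{-ik_xx}u$ turns the $k_x$-family of periodic boundary value problems into the single $k_x$-independent Floquet problem twisted by the scalar $e^{ik_xL}$; once this is in place the argument is immediate, and it is precisely the first-order structure of $\mathscr L$ that makes flat bands this easy to exclude, in contrast to the Schrödinger-type setting.
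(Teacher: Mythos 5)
Your proof is correct, and it takes a genuinely different route from the paper's. The paper also passes to $\widehat{\mathscr L}_\lambda$ via \eqref{eq:tm20afterconj}, but then uses the fact that $k_x$ enters only as a scalar shift, $\widehat{\mathscr L}_\lambda(k_x)=\widehat{\mathscr L}_\lambda(0)+k_x\operatorname{id}$, to treat $k_x$ as a spectral parameter of the single operator $\widehat{\mathscr L}_\lambda(0)$ on the circle: that operator has compact resolvent, hence discrete spectrum, so it cannot absorb every real $k_x$. You exploit the same structural fact (first-order operator, scalar $k_x$-dependence) but in ODE form: the gauge $u=e^{-ik_xx}v$ removes $k_x$, and the flat-band question becomes whether the whole unit circle can lie in the spectrum of the fixed $4\times4$ monodromy matrix $M_0(\lambda)$, which is absurd. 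The two arguments are essentially the same computation in different clothing---unwinding the paper's discreteness claim for the non-self-adjoint operator $\widehat{\mathscr L}_\lambda(0)$ via Floquet theory gives exactly your characterization $\lambda\in\Spec(\mathscr L(k_x))\iff e^{ik_xL}\in\Spec(M_0(\lambda))$---but yours is more self-contained and more quantitative: it proves the discreteness the paper merely asserts (note that $\widehat{\mathscr L}_\lambda(0)$ is not self-adjoint, so ``compact resolvent $\Rightarrow$ discrete spectrum'' tacitly requires a nonempty resolvent set, which your monodromy computation supplies for free), and it yields the sharper statement that at fixed $\lambda$ at most four $k_x$ per Brillouin zone can carry $\lambda$ in the spectrum. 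The paper's version is shorter and avoids the Floquet bookkeeping; your version meshes naturally with the monodromy technique the paper itself uses later for the chiral continuum model. The only steps to keep explicit are the ones you already flag: elliptic/ODE regularity (an $L^2$ kernel element of $\widehat{\mathscr L}_\lambda(k_x)$ is automatically a classical solution of the first-order system) and self-adjointness plus compact resolvent of $\mathscr L(k_x)$, which let you identify its spectrum with its eigenvalues.
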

\begin{proof}
This is an easy consequence of \eqref{eq:tm20afterconj} and Bloch-Floquet theory. In fact, by \eqref{eq:tm20afterconj} we have $\lambda\in\Spec( \mathscr L(k_x))$ if and only if $k_x\in\Spec(\widehat{ \mathscr L}_\lambda(0))$. Since the Hamiltonian $\widehat{ \mathscr L}_\lambda(0))$ has compact resolvent, its spectrum is discrete and therefore it is impossible that $k_x\in\Spec(\widehat{ \mathscr L}_\lambda(0))$ for all $k_x \in \mathbb R$, which proves the claim.
\end{proof}

We shall now restrict us to the case $k_{\perp}=0$ and study the spectrum of the continuous Hamiltonian. In particular, we shall analyze under what conditions $0$ is in the spectrum.
We start with the anti-chiral Hamiltonian for which our spectral analysis is rather complete.
\begin{prop}
The spectrum of the anti-chiral Hamiltonian $ \mathscr L_{\operatorname{ac}}(k_x)= \mathscr L(k_x)$ with $w_1=0$ for $k_{\perp}=0$ satisfies $\bigcup_{k_x\in[0,2\pi/L]} \Spec( \mathscr L_{\operatorname{ac}}(k_x)) =\RR$. In particular, $0\in\Spec( \mathscr L_{\operatorname{ac}})$ for all $w_0\in\mathbb R$.
\end{prop}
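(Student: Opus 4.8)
The plan is to run, in the continuum, the same operator-valued diagonalisation that underlies Proposition~\ref{prop:AMO}. Setting $k_\perp=0$ and $w_1=0$ in $\mathscr L(k_x)$ and reading the $4\times4$ matrix as a $2\times2$ array of $2\times2$ blocks (the first tensor factor being the block index), one has
\[
\mathscr L_{\operatorname{ac}}(k_x)=(D_x+k_x)\,(I_2\otimes\sigma_1)\;+\;w_0\,U(x/L)\,(\sigma_1\otimes I_2)
\qquad\text{on }L^2(\RR/L\ZZ;\CC^2\otimes\CC^2).
\]
The two constant matrices $I_2\otimes\sigma_1$ and $\sigma_1\otimes I_2$ are commuting Hermitian involutions, so they are simultaneously diagonalised by the constant unitary $H\otimes H$ with $H=\tfrac1{\sqrt2}\begin{pmatrix}1&1\\1&-1\end{pmatrix}$, for which $H\sigma_1H=\sigma_3$. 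Since $H\otimes H$ commutes with $D_x$, conjugation by it turns $\mathscr L_{\operatorname{ac}}(k_x)$ into
\[
\operatorname{diag}\!\Bigl(\,(D_x+k_x)+w_0U(x/L),\ -(D_x+k_x)+w_0U(x/L),\ (D_x+k_x)-w_0U(x/L),\ -(D_x+k_x)-w_0U(x/L)\,\Bigr),
\]
a direct sum of four scalar first-order operators on $L^2(\RR/L\ZZ)$. Hence $\Spec(\mathscr L_{\operatorname{ac}}(k_x))$ is the union of their four spectra; in particular it contains $\Spec\bigl(D_x+k_x+w_0U(\cdot/L)\bigr)$.

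Next I would compute that scalar spectrum explicitly. Solving $(D_x+k_x+w_0U(\cdot/L))u=\lambda u$ as a first-order ODE gives $u(x)=u(0)\exp\!\bigl(i\int_0^x(\lambda-k_x-w_0U(s/L))\,ds\bigr)$, and $L$-periodicity of $u$ is equivalent to $\int_0^L(\lambda-k_x-w_0U(s/L))\,ds\in2\pi\ZZ$. Since $\int_0^L U(s/L)\,ds=L\int_0^1U(t)\,dt=L/3$ (the oscillatory part of $U$ integrates to zero), this reads $L(\lambda-k_x-\tfrac{w_0}{3})\in2\pi\ZZ$, i.e.
\[
\Spec\bigl(D_x+k_x+w_0U(\cdot/L)\bigr)=\Bigl\{\,k_x+\tfrac{w_0}{3}+\tfrac{2\pi n}{L}\ :\ n\in\ZZ\,\Bigr\}
\]
(equivalently, the periodic gauge transformation by $\exp\!\bigl(i\int_0^x(w_0U(s/L)-\tfrac{w_0}{3})\,ds\bigr)$ conjugates this operator to $D_x+k_x+\tfrac{w_0}{3}$, whose eigenfunctions are $e^{2\pi i n x/L}$).

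Finally, as $k_x$ ranges over $[0,2\pi/L]$ the numbers $k_x+\tfrac{w_0}{3}+\tfrac{2\pi n}{L}$ sweep out consecutive intervals of length $2\pi/L$ and hence cover $\RR$; therefore $\bigcup_{k_x\in[0,2\pi/L]}\Spec(\mathscr L_{\operatorname{ac}}(k_x))\supseteq\RR$. Since all four scalar fiber operators are self-adjoint, every fiber spectrum is real, so the reverse inclusion holds trivially, giving $\bigcup_{k_x\in[0,2\pi/L]}\Spec(\mathscr L_{\operatorname{ac}}(k_x))=\RR$. By the Bloch--Floquet decomposition \eqref{eq:BFunion} this is exactly $\Spec(\mathscr L_{\operatorname{ac}})$, and in particular $0\in\Spec(\mathscr L_{\operatorname{ac}})$ for every $w_0\in\RR$. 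I do not expect a genuine obstacle here: the whole argument hinges on spotting the commuting tensor structure that reduces the fibers to scalar first-order ODEs — the continuum analogue of the discrete anti-chiral case — after which the only thing to watch is that the nonzero mean $\int_0^1U=1/3$ merely rigidly shifts the bands and does not open a gap.
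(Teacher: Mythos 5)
Your proposal is correct, and it reaches the quantization condition by a route that differs in execution from the paper's. The paper does not diagonalize the self-adjoint fiber operator itself: it first passes to the non-self-adjoint reduction $\widehat{\mathscr L}_\lambda(k_x)=\operatorname{diag}(\sigma_1,\sigma_1)(\mathscr L(k_x)-\lambda)$ of \eqref{eq:tm20afterconj}, rewrites the eigenvalue problem as the first-order system $D_x\varphi+A(\lambda,x)\varphi=0$ (with $\lambda$ absorbed into the potential matrix), exhibits an explicit antiderivative $B(\lambda,x)$ diagonalized by a constant unitary, writes the fundamental solution $e^{-B(\lambda,x)}$, and imposes $L$-periodicity to get $\lambda\pm w_0\pm k_x\in\frac{2\pi}{L}\ZZ$. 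You instead exploit the tensor structure $\mathscr L_{\operatorname{ac}}(k_x)=(D_x+k_x)(I_2\otimes\sigma_1)+w_0U(\sigma_1\otimes I_2)$ and conjugate by the constant unitary $H\otimes H$, which turns the fiber into a direct sum of four scalar operators $\pm(D_x+k_x)\pm w_0U(\cdot/L)$ whose spectra are explicit after a periodic gauge transformation -- the continuum analogue of the block-diagonalization behind Proposition \ref{prop:AMO}, as you note. The underlying algebra (simultaneous diagonalization of two commuting constant involutions) is the same, but your version keeps self-adjointness throughout, produces the explicit band functions $\lambda=\pm k_x\pm\tfrac{w_0}{3}+\tfrac{2\pi n}{L}$, and thereby gives both inclusions in $\bigcup_{k_x}\Spec(\mathscr L_{\operatorname{ac}}(k_x))=\RR$ with no further argument, while the paper's fundamental-solution formulation is the one that carries over (via the monodromy matrix) to the chiral case, where no constant diagonalization is available. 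One cosmetic discrepancy: with the paper's normalization $U(x)=\tfrac{1+2\cos(2\pi x)}{3}$ your mean value $w_0/3$ is the consistent one, whereas the paper's stated condition $\lambda\pm w_0\pm k_x\in\frac{2\pi}{L}\ZZ$ (and its claim $D_xW=U$) corresponds to the unnormalized $1+2\cos$; this does not affect the conclusion that the union over $k_x\in[0,2\pi/L]$ covers $\RR$ and hence, by \eqref{eq:BFunion}, that $0\in\Spec(\mathscr L_{\operatorname{ac}})$ for every $w_0$.
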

\begin{proof}
According to \eqref{eq:tm20afterconj} it suffices to consider the equation
\begin{equation}\label{eq:ODE}
D_x \varphi + A(\lambda,x)\varphi= 0,
\end{equation}
where
\[ A(\lambda,x) = \begin{pmatrix} k_x &0 &  -\lambda &  w_0U(x/L) \\ 0 & k_x & w_0U(x/L)& -\lambda \\  -\lambda & w_0U(x/L)& k_x & 0 \\ w_0U(x/L) & -\lambda &0 & k_x \end{pmatrix}.\]
Let $W(x/L) = i\frac{L}{3}( x/L + \frac{1}{\pi} \sin( 2\pi x/L ) )$, so that $D_xW(x/L)=U(x/L)$. With
\[ B(\lambda,x) = \begin{pmatrix} ik_x  x&0 &  -i\lambda x&  w_0W(x/L) \\ 0 & ik_x x & w_0W(x/L)& -i\lambda x \\  - i \lambda x & w_0W(x/L)& i k_x x& 0 \\ w_0W(x/L) & - i\lambda x&0 & ik_x x \end{pmatrix}.\]
we thus have $D_xB(\lambda,x)=A(\lambda, x)$. Using the unitary matrix 
$$
\mathscr U = \frac{1}{2} \begin{pmatrix} -1& 1 & -1 & 1 \\ -1 & -1 & 1 & 1 \\ 1 & -1 & -1 & 1 \\ 1 & 1 &1 & 1 \end{pmatrix}, 
$$
we see that 
\[\begin{split}\mathscr U B(\lambda,x) \mathscr U^*= \operatorname{diag}&(i\lambda x-w_0{W(x/L)}, i\lambda x+w_0{W(x/L)},\\ & -i\lambda x-w_0{W(x/L)}, -i\lambda x+w_0{W(x/L))} +i k_x x \end{split}\]
is diagonal. Since $\mathscr U e^{-B} \mathscr U^\ast =e^{-\mathscr U B\mathscr U^\ast}$ it then follows that
$$
D_xe^{-B}=\mathscr U^\ast D_x e^{-\mathscr U B\mathscr U^\ast}\mathscr U=\mathscr U^\ast D_x(-\mathscr UB\mathscr U^\ast)e^{-\mathscr U B\mathscr U^\ast}\mathscr U=-Ae^{-B}.
$$
Hence, the solutions to \eqref{eq:ODE} are of the form $\varphi(x) =  e^{-B(\lambda,x)}\varphi_0.$
For $\lambda$ to be an eigenvalue, such a solution is required to be $L$-periodic. Inspecting the expression for $\mathscr U B(\lambda,x) \mathscr U^*$ we see that it is necessary that, for $w_0$ fixed, we can find $k_x \in [0,2\pi/L]$ such that for any of the combinations $\lambda \pm \frac13w_0\pm k_x  \in \frac{2\pi}{L} \ZZ$. Thus $\Spec( \mathscr L_{\operatorname{ac}})=\RR.$
 \end{proof}

For the chiral Hamiltonian $ \mathscr L_c =  \mathscr L$ with $w_0=0$, we do not have an explicit description of the full spectrum, however we can still locate the zero energy spectrum.

\begin{prop}
For all $w_1 \in \mathbb R$ it follows that $0 \in \Spec( \mathscr L_c)$ but there is $\varepsilon>0$ such that for all $w_1 \in (-\varepsilon,\varepsilon)\backslash \{0\}$ we have $0 \notin \Spec(\mathscr L_c(k_x=0)).$
\end{prop}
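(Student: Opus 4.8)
The plan is to reduce to a $2\times 2$ transfer-matrix computation, as in the proof of the preceding proposition. Conjugating by $\operatorname{diag}(\sigma_1,\sigma_1)$ as in \eqref{eq:tm20afterconj}, the chiral case $w_0=0$, $k_\perp=0$ turns $\mathscr L_c(k_x)$ into the block-diagonal operator $\widehat{\mathscr L}_0(k_x)=\mathcal B_+(k_x)\oplus\mathcal B_-(k_x)$ on $L^2(\RR/L\ZZ;\CC^2)\oplus L^2(\RR/L\ZZ;\CC^2)$, where $\mathcal B_\pm(k_x)=\left(\begin{smallmatrix} D_x+k_x & w_1 U^{\pm}(\cdot/L)\\ w_1 U^{\mp}(\cdot/L) & D_x+k_x\end{smallmatrix}\right)$. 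Since $\operatorname{diag}(\sigma_1,\sigma_1)$ is invertible and $\mathscr L_c(k_x)$ has compact resolvent, $0\in\Spec(\mathscr L_c(k_x))$ iff $\ker\mathcal B_+(k_x)\neq 0$ or $\ker\mathcal B_-(k_x)\neq 0$; together with \eqref{eq:BFunion} this says $0\in\Spec(\mathscr L_c)$ iff this occurs for some $k_x\in[0,2\pi/L]$, and $0\notin\Spec(\mathscr L_c(k_x=0))$ iff $\ker\mathcal B_\pm(0)=0$.

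Next I would pass to the monodromy. Writing $\mathcal B_\pm(k_x)\varphi=0$ as a first-order system and pulling out the commuting scalar $e^{-ik_xx}$ leaves $\tilde\varphi'=-iw_1N_\pm(x/L)\tilde\varphi$ with $N_\pm(y)=\left(\begin{smallmatrix}0&U^{\pm}(y)\\ U^{\mp}(y)&0\end{smallmatrix}\right)$ traceless, whose period-$L$ monodromy $\Xi_\pm\in\SL(2,\CC)$ depends only on $\lambda:=w_1L$ and, in the gauge $(c,d)=(\tilde\varphi_1,i\tilde\varphi_2)$, is conjugate to an element of $\SL(2,\RR)$. A nonzero $L$-periodic solution exists exactly when $e^{ik_xL}\in\Spec\Xi_\pm(\lambda)$, i.e.\ (as $k_x$ is real) exactly when $\Xi_\pm(\lambda)$ has a unit-modulus eigenvalue, which for an $\SL(2,\RR)$-conjugacy class means $\operatorname{tr}\Xi_\pm(\lambda)\in[-2,2]$ (or $\Xi_\pm=\pm I$). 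The reflection $U^{+}(-y)=U^{-}(y)$ moreover gives, in the real gauge, $G_\pm(1-y)=-SG_\pm(y)S$ with $S=\left(\begin{smallmatrix}0&1\\1&0\end{smallmatrix}\right)$, so that $\Xi_\pm(\lambda)$ is conjugate to $SP_\pm(\lambda)^{-1}SP_\pm(\lambda)$ with $P_\pm(\lambda)$ the half-period monodromy; using $\det P_\pm=1$ this makes $\operatorname{tr}\Xi_\pm(\lambda)$ an explicit real-analytic even function of $\lambda$.

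For the second assertion I expand $\operatorname{tr}\Xi_\pm$ at $\lambda=0$: the $\mathcal O(\lambda)$ term vanishes (traceless $N_\pm$), and the second-order term of the Dyson series together with $\int_0^1 U^{\pm}(y)\,dy=\tfrac13$ (the symmetrized double integral collapses to $\overline{U^{+}}\,\overline{U^{-}}=\tfrac19$) gives $\operatorname{tr}\Xi_\pm(\lambda)=2-\tfrac{\lambda^2}{9}+\mathcal O(\lambda^3)$. Hence for $0<|w_1L|$ small, $\operatorname{tr}\Xi_\pm(\lambda)\in(-2,2)$, so $\Xi_\pm(\lambda)$ is elliptic with rotation angle $\beta(\lambda)\in(0,\pi)$; in particular $1\notin\Spec\Xi_\pm(\lambda)$, whence $\ker\mathcal B_\pm(0)=0$ and $0\notin\Spec(\mathscr L_c(k_x=0))$, while $e^{i\beta(\lambda)}\in\Spec\Xi_+(\lambda)$ produces an $L$-periodic kernel at $k_x=\beta(\lambda)/L$, so $0\in\Spec(\mathscr L_c(\beta(\lambda)/L))\subset\Spec(\mathscr L_c)$ — which, with the trivial case $w_1=0$, settles $0\in\Spec(\mathscr L_c)$ for small $w_1$.

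The main obstacle is the assertion that $0\in\Spec(\mathscr L_c)$ \emph{for every} $w_1\in\RR$: this needs $\operatorname{tr}\Xi_\pm(\lambda)\in[-2,2]$ for all $\lambda$, which the symmetries above do not give. In fact $N_\pm(y)$ has eigenvalues $\pm\sqrt{U^{+}(y)U^{-}(y)}$ and $U^{+}U^{-}=\tfrac19\bigl((1-\cos2\pi y)^2-3\sin^2 2\pi y\bigr)$ is negative on a subset of the period of measure $\tfrac23$ (where $\cos2\pi y>-\tfrac12$), so there the real generator $-iw_1N_\pm$ has real eigenvalues $\pm w_1\sqrt{|U^{+}U^{-}|}$; a turning-point/WKB estimate then forces $\|\Xi_\pm(\lambda)\|\gtrsim e^{c|\lambda|}$ with $c=\int_{\{U^{+}U^{-}<0\}}\sqrt{|U^{+}U^{-}|}\,dy>0$, so $|\operatorname{tr}\Xi_\pm(\lambda)|>2$ once $|w_1L|$ is large. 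Thus I expect the correct statement to be that $0\in\Spec(\mathscr L_c)$ precisely for $w_1L$ in the (nonempty, $0$-containing) set $\{\lambda:|\operatorname{tr}\Xi_\pm(\lambda)|\le 2\}$, and the remaining work is to pin down that set or to re-examine the normalization behind the ``for all $w_1$'' claim.
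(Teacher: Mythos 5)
For the second assertion your argument is essentially the paper's: reduce (via the conjugation \eqref{eq:tm20afterconj}) to the $2\times 2$ chiral block, pass to the monodromy of $D_x\psi+w_1\mathcal U\psi=0$ over one period, and show its trace differs from $2$ at second order in the coupling, so that $1$ is not a Floquet multiplier and there is no periodic zero mode at $k_x=0$. The paper expands $X=\sum_i w_1^iX_i$, notes the odd terms are trace free, and quotes $\tr X_2(1)=1$; your value $\tr\Xi_\pm(\lambda)=2-\lambda^2/9+\mathcal O(\lambda^4)$ (the expansion is even, so the error is $\mathcal O(\lambda^4)$, not $\mathcal O(\lambda^3)$) is the correct coefficient $-\tfrac12\tr\bigl(\int_0^1\mathcal U\bigr)^2=-\tfrac19$ per unit period; the paper's stated constant appears to be a slip, though only its nonvanishing is used. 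Your additional observation that the trace then lies in $(-2,2)$, so the monodromy is elliptic and $0\in\Spec(\mathscr L_c(k_x))$ for a nearby real $k_x\neq0$, gives the first assertion honestly for small $w_1$.

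Where you and the paper part ways is the first assertion for \emph{all} $w_1$, and this is exactly the point you flag as the obstacle. The paper's proof takes an arbitrary eigenvalue $\rho\neq0$ of the monodromy $M(w_1)$ (available since $\det M=1$) and produces a periodic solution of $D_x\psi+w_1\mathcal U\psi=\mu\psi$ with a possibly \emph{complex} $\mu$, asserting that this suffices. Your criterion --- a unit-modulus Floquet multiplier, equivalently $\tr$ of the real-form monodromy in $[-2,2]$, equivalently a real quasimomentum --- is the correct spectral criterion for the self-adjoint $\mathscr L_c$, whose spectrum is the union of $\Spec(\mathscr L_c(k_x))$ over real $k_x$ only; a Floquet solution with $|\rho|\neq1$ grows exponentially and contributes nothing to the spectrum. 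So you have not missed an idea contained in the paper; rather you have isolated the step at which the paper's argument for general $w_1$ is not justified, and your symmetry analysis (conjugation-invariance of $\Spec M$, the reflection relation) correctly shows these structures alone do not force $|\tr|\le2$. One caveat on your closing heuristic: $\Vert\Xi_\pm(\lambda)\Vert\gtrsim e^{c|\lambda|}$ does not by itself force $|\tr\Xi_\pm(\lambda)|>2$ (a conjugated elliptic matrix can have large norm and trace in $[-2,2]$); the expected picture is $|\tr|>2$ for large $|\lambda|$ only outside exponentially thin stability windows, so your WKB remark casts doubt on the ``for all $w_1$'' claim but does not establish $0\notin\Spec(\mathscr L_c)$ for all large couplings. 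Your band characterization $\{\lambda:|\tr\Xi_\pm(\lambda)|\le2\}$ is the correct statement of when $0\in\Spec(\mathscr L_c)$.
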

\begin{proof}
To prove the first part of the statement, note that it suffices to show that there is $\mu \in \mathbb C$ such that for some function $\psi \in H^1(\RR/\ZZ; \CC^2)$
\[ D_x \psi  +w_1 \begin{pmatrix} 0 & U^+(x) \\ U^-(x) & 0 \end{pmatrix} \psi = \mu \psi.\]
Let then $x \mapsto X(x,w_1)$ be the fundamental solution with $\mathcal U(x):=  \begin{pmatrix} 0 &  U^+(x) \\ U^-(x) & 0 \end{pmatrix} $, satisfying
$$ D_x X(x) + w_1\mathcal U(x) X(x) = 0, \quad X(0) = \operatorname{id}.$$
The matrix $M(w_1):=X(1,w_1)$, with $\operatorname{det}(M)=1$, is then called the monodromy matrix and let $\rho \in \Spec(M(w_1))$ where $\rho \neq 0.$
Thus, there is $v \neq 0$ such that $\phi(x):=X(x)v$ satisfies the periodicity condition $\phi(1) = M(w_1)v =\rho v= \rho X(0,w_1) v = \rho \phi(0).$ We then define $\mu \in \CC$, such that
$\psi(x):=e^{i\mu x} \phi(x)$ is the desired solution.

If $\lambda=0$ was protected in the $k_x=0$ sector, then this would imply that there is always a solution
\[ D_x \psi  +w_1\mathcal U(x) \psi =0.\]
In this case, our claim amounts to showing that $1 \notin \Spec(M(w_1))$ for $w_1$ small but non-zero.
Since $D_x (\operatorname{det}(X(x,w_1))) = w_1 \tr(\mathcal U(x)) = 0$ it follows that $\operatorname{det}(X(x,w_1)) = 1.$
On the other hand, since $X(\bullet,x)$ is analytic, such that $X(\bullet,x) = \sum_{i \ge 0} w_1^i X_{i}(x),$ we find the recurrence equation 
\[ D_x X_{i+1}(x) = \mathcal U(x) X_{i}(x), \quad X_{i+1}(0) = 0,\]
and $X_{0} = \operatorname{id}_{\mathbb C^2}.$
Hence, all $X_{i}$ with $i$ odd are trace-free. We compute $\tr(X_{2}(1)) =1.$ 
This implies that $\tr(M(w_1)) = \tr(X_0) + w_1^2 \tr(X_2) + \mathcal{O}(w_1^4) \neq 2$ for $w_1 $ small but non-zero. Hence, $1 \notin \Spec(M(w_1)).$
\end{proof}

\section{A two-dimensional example}
\label{sec:2D}

\begin{figure}
\begin{center}
\includegraphics[height=5cm,width=5cm]{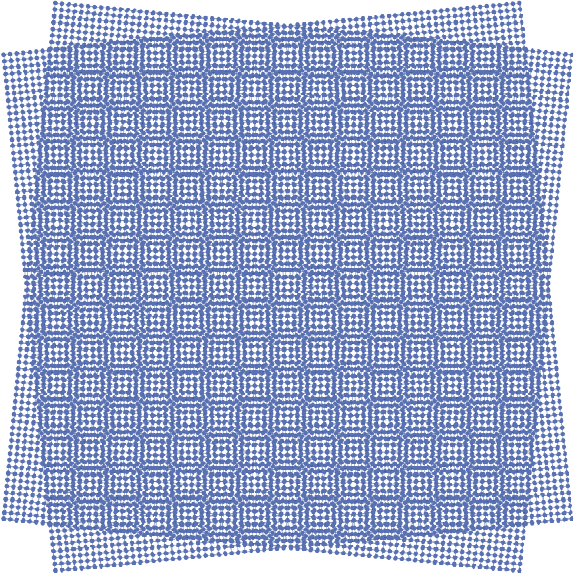}
\caption{\label{fig:square} Twisted square lattices exhibit yet another macroscopic (moir\'e) square lattice.}
\end{center}
\end{figure}

We now consider the case of $2D$ twisted lattice structures. For simplicity, we shall consider two square lattices with moir\'e lengths $L_1,L_2 >0$, as discussed for example in \cite{KV19}. Figure \ref{fig:square} shows an example of such a moir\'e pattern. The kinetic energy is described by a discrete Laplacian on each of the lattices in terms of
\[ (-\Delta_{\ZZ^2} u)_n = (u_{n+e_1}+u_{n-e_1} +u_{n+e_2}+u_{n-e_2})\text{ with } n \in \ZZ^2 \]
such that $(D_{\operatorname{kin}} \psi)_n =- (\operatorname{diag}(\Delta_{\ZZ^2},\Delta_{\ZZ^2}) \psi)_n$ is the discrete Laplacian of the individual lattices without any additional interaction. The interaction is then modeled by a tunneling potential $$V_w(n) = w \begin{pmatrix} 0 & U(\tfrac{n_1}{L_1},\tfrac{n_2}{L_2})\\ U(\tfrac{n_1}{L_1},\tfrac{n_2}{L_2})& 0 \end{pmatrix}$$
with coupling strength $w >0,$ where we assume that $U$ is a real-valued smooth $1$-periodic function in both components. This defines a Hamiltonian $H : \ell^2(\ZZ^2;\CC^4) \to \ell^2(\ZZ^2; \CC^4)$
\begin{equation*}
(H \psi)_n = (D_{\operatorname{kin}}\psi)_n+V_w \psi_n.
\end{equation*}
We then introduce
\[ P = \left(P_{\tfrac{n_1}{L_{1}},\tfrac{n_2}{L_{2}}}\right)_{n \in \ZZ^2} \text{ with }P_X =\frac{1}{\sqrt{2}} \begin{pmatrix} -\sgn(U(X)) & 1 \\ 1 & \sgn(U(X))  \end{pmatrix},\] then conjugating by $P$ yields, after swapping entries according to the sign of $U$, an equivalent block-diagonal Hamiltonian $\tilde H:\ell^2(\ZZ^2;\CC^4) \to \ell^2(\ZZ^2;\CC^4)$
\begin{equation}
\label{eq:2dH}
\tilde H\psi_{n} = \operatorname{diag}\left(-\Delta_{\ZZ^2} +  U(\tfrac{n_1}{L_1},\tfrac{n_2}{L_2}) , -\Delta_{\ZZ^2} -  U(\tfrac{n_1}{L_1},\tfrac{n_2}{L_2}) \right)\psi_n.
\end{equation}

This leads us to the following result which shows that for a set of moir\'e length scales of large measure, the model actually exhibits Anderson localization. 

\begin{theo}\cite[Theorem $6.2$]{BGS02}
Let $U: \mathbb T^2 \to \mathbb R$ be real analytic such that the marginals $$\theta_1 \mapsto U(\theta_1,\theta_2) \text{ and } \theta_2 \mapsto U(\theta_1,\theta_2)$$
are non-degenerate. Moreover, let $\varepsilon>0,$ then for any $w \ge w_0(\varepsilon)$ there exists a set of moir\'e length scales $(1/L_1,1/L_2) \in \mathbb T^2$ of measure $1-\varepsilon$ such that the Hamiltonian exhibits full Anderson localization, i.e.~the spectrum consists only of exponentially decaying eigenfunctions. 
\end{theo}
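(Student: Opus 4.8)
The plan is to deduce the statement directly from the block-diagonalization \eqref{eq:2dH} together with the one-body localization theorem of Bourgain, Goldstein and Schlag for quasi-periodic Schr\"odinger operators on $\ZZ^2$ \cite{BGS02}. Since conjugation by the \emph{pointwise} unitary $P$, followed by the sign-dependent swap of the two layers, is a genuine unitary equivalence of operators on $\ell^2(\ZZ^2;\CC^4)$, it preserves the spectral type and in particular the property of possessing a complete set of exponentially localized eigenfunctions. Hence it suffices to prove full Anderson localization for each of the two scalar summands
\[
H_{\pm} = -\Delta_{\ZZ^2} \pm w\,U\!\left(\tfrac{n_1}{L_1},\tfrac{n_2}{L_2}\right)
\]
appearing on the diagonal of $\widetilde H$ in \eqref{eq:2dH}.

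First I would put $H_\pm$ in the normal form used in \cite{BGS02}: with frequency vector $\omega=(1/L_1,1/L_2)\in\mathbb T^2$, coupling $\lambda=\pm w$, and analytic sampling function $\pm U$, the operator $H_\pm$ is exactly $-\Delta_{\ZZ^2}+\lambda\,(\pm U)(n_1\omega_1,n_2\omega_2)$. The hypotheses of \cite[Theo.~6.2]{BGS02} are inherited from those on $U$: real-analyticity on $\mathbb T^2$ is obviously preserved under $U\mapsto -U$, and the non-degeneracy of the two marginals $\theta_1\mapsto U(\theta_1,\theta_2)$ and $\theta_2\mapsto U(\theta_1,\theta_2)$ is a condition invariant under multiplication by the nonzero scalar $-1$, so $\pm U$ has non-degenerate marginals as well. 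Applying \cite{BGS02} to $H_+$ and to $H_-$ separately, for every $\varepsilon>0$ and every $w\ge w_0(\varepsilon)$, with $w_0(\varepsilon)$ taken to be the larger of the two thresholds, we obtain exceptional sets $\mathcal R_{\pm}\subset\mathbb T^2$ with $\lvert \mathcal R_{\pm}\rvert<\varepsilon/2$ such that $H_{\pm}$ is Anderson localized whenever $\omega\notin\mathcal R_{\pm}$.

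To conclude I would take the admissible set of moir\'e length scales to be $\{(1/L_1,1/L_2): \omega\notin\mathcal R_+\cup\mathcal R_-\}$, which has measure at least $1-\varepsilon$. For such $(L_1,L_2)$ both $H_+$ and $H_-$, hence $\widetilde H=\operatorname{diag}(H_+,H_-)$, possess a complete orthonormal basis of exponentially decaying eigenfunctions; transporting this basis back through the unitary $P$ yields the same conclusion for the original Hamiltonian $H$, which proves the theorem.

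The \textbf{main obstacle} --- the multiscale and semialgebraic-set machinery that produces localization for two-dimensional quasi-periodic potentials at large coupling --- is entirely contained in \cite{BGS02} and is invoked here as a black box; on our side the only points to verify are the (trivial) stability of the analyticity and non-degeneracy hypotheses under $U\mapsto\pm wU$, and the elementary measure bookkeeping for the union $\mathcal R_+\cup\mathcal R_-$. One could alternatively run the matrix-valued localization scheme of \cite{K17} directly on $H$ without block-diagonalizing, but the explicit reduction \eqref{eq:2dH} makes the scalar statement of \cite{BGS02} immediately applicable and is the cleanest route.
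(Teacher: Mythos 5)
Your argument is essentially the paper's: the theorem is quoted from \cite[Theo 6.2]{BGS02} and is made applicable precisely through the block-diagonalization \eqref{eq:2dH} into the two scalar quasi-periodic operators $-\Delta_{\ZZ^2}\pm wU$, which is exactly the reduction you use. Your additional remarks (stability of analyticity and non-degeneracy under $U\mapsto\pm wU$, the union bound on the two exceptional frequency sets, and transporting the localized eigenbasis back through the fiberwise unitary) are correct and just make explicit the bookkeeping the paper leaves implicit.
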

Related questions on the existence of absolutely continuous spectrum for small coupling $w$ and Cantor spectrum are widely open.
The situation simplifies, once one imposes a separability condition, $U(x)= U_1(x_1) U_2(x_2)$ with $U_1,U_2$ real-analytic and non-degenerate. In this case, the operator \eqref{eq:2dH} decomposes into the direct sum of two Hamiltonians 
\[ H_1 = \operatorname{diag}(-\Delta_{\ZZ} + U_1, -\Delta_{\ZZ}-U_1 ) \text{ and } H_2 = \operatorname{diag}(-\Delta_{\ZZ} + U_2, -\Delta_{\ZZ}-U_2 ) .\]

\section*{Acknowledgments}
Lingrui Ge was partially supported by NSFC grant (12371185) and the Fundemental Research Funds for the Central Universities (the start-up fund), Peking University. The research of Jens Wittsten was partially supported by The Swedish Research Council grants 2019-04878 and 2023-04872.

Rui Han and Wilhelm Schlag \cite{HS} have noticed and carefully analyzed a technical error in Section \ref{sec:AL} of our first version. In connection to this we want to kindly point out that the model we study in Subsection \ref{sec:AVAL} relies crucially on having just a cosine tunneling potential. To increase clarity, this additional assumption is now stressed in Assumption \ref{ass:assumption}, while the symmetry error claimed in \cite{HS} refers to the potential $U(x)=\frac{1+2\cos(2\pi x)}{3}$, which is not the one we consider in Subsection \ref{sec:AVAL} where we make our symmetry claims. (In \cite[\S 7.1]{HS} the potential $U_{\operatorname{c}}^+$ is modified compared to \cite{TM2020} and \eqref{eq:scalarpotentials} which also leads to some simplifications and allows application of \cite{K17} even in the case $w_0=w_1$, which is not possible for the original \cite{TM2020} model as explained in Remark \ref{rem:constantEV}.) Nevertheless, we would like to thank \cite{HS} for their analysis and attention to our first version which has helped us to find some misprints and clarify our writing. We also wish to thank two anonymous referees for their comments and suggestions.

%
%

\end{document}